\documentclass[11pt]{article}
\usepackage{subcaption}
\usepackage[margin=1in]{geometry}
\usepackage{amsthm}
\usepackage{thmtools,thm-restate}

\newcommand{\doi}[1]{\href{https://doi.org/#1}{\nolinkurl{doi: #1}}}

\newtheorem{lemma}{Lemma}
\newtheorem{theorem}{Theorem}

\usepackage{hyperref}

\usepackage{amssymb}
\usepackage{amsmath}
\usepackage{bm}
\usepackage{cleveref}
\crefname{algocf}{alg.}{algs.}
\Crefname{algocf}{Algorithm}{Algorithms}

\usepackage{mathtools}
\usepackage[dvipsnames]{xcolor}
\usepackage{xspace}
\usepackage{todonotes}

\usepackage[sort&compress,numbers]{natbib}
\usepackage[ruled,vlined]{algorithm2e}
\usepackage{nicefrac}

\newcommand{\algalpha}{\textsc{Alg}_{\alpha\textsc{-tsp}}}
\newcommand{\algtsp}{\textsc{Alg}_{\textsc{tsp}}}
\newcommand{\tsp}{\textsc{tsp}}
\newcommand{\alphatsp}{\alpha\textsc{-tsp}}
\newcommand{\alphapsp}{\alpha\textsc{-psp}}

\title{Approximation Algorithms for Discounted Graph Search\\ with Norm Objectives\thanks{
The work of the first author is funded by the Deutsche Forschungsgemeinschaft (DFG, German Research
Foundation) by the grant Ho 3831/9-1 (project ID: 514505843).
The work of the second author is funded by the Deutsche Forschungsgemeinschaft (DFG, German Research
Foundation) by the grant HO 7562/2-1 (project ID: 573939419).
The work of the third author is funded by the Deutsche Forschungsgemeinschaft (DFG, German Research
Foundation) under Germany´s Excellence Strategy – The Berlin Mathematics
Research Center MATH+ (EXC-2046/1, EXC-2046/2, project ID: 390685689).
We thank Kevin Schewior for fruitful discussions.}}

\author{
Svenja M.\ Griesbach\\
RWTH Aachen University, Department of Computer Science, Germany\\
\texttt{griesbach@algo.rwth-aachen.de}
\and
Felix Hommelsheim\\
University of Cologne, Department of Computer Science, Germany\\
\texttt{hommelsheim@cs.uni-koeln.de}
\and
Max Klimm\\
Technische Universit\"at Berlin, Institute for Mathematics, Germany\\
\texttt{klimm@math.tu-berlin.de}
}
\date{}

\newtheorem{claim}{Claim}
\newtheorem{example}{Example}

\newcommand*{\psp}{discounted graph search problem\xspace}

\newcommand*{\E}{\mathbb{E}}
\newcommand*{\N}{\mathbb{N}}
\newcommand*{\R}{\mathbb{R}}
\newcommand*{\Z}{\mathbb{Z}}
\newcommand*{\alg}{\textsc{Alg}}
\newcommand*{\opt}{\textsc{Opt}}
\newcommand*{\eps}{\varepsilon}
\newcommand{\alphamax}{\alpha_{\max}}
\newcommand{\alphamin}{{\alpha_{\min}}}
\newcommand{\lfrac}[2]{#1/#2}
\tikzstyle{axis}=[thin]
\tikzstyle{state}=
[circle,draw=black,thick,fill=black,minimum size=1.5mm,inner sep=0pt]
\usetikzlibrary{calc,arrows.meta,patterns}
\newcommand{\trianglesize}{1ex}
\newcommand{\meindreieck}{
\begin{tikzpicture}
    \coordinate (A) at (0, 0);
    \coordinate (B) at (\trianglesize, \trianglesize/2);
    \coordinate (C) at (\trianglesize, -\trianglesize/2);
    
    \draw[very thin] (A) -- (B) -- (C) -- cycle;
\end{tikzpicture}
}

\definecolor{myyellow}{cmyk}{0, 0.45, 0.90, 0.17}
\newcommand{\myblue}{MidnightBlue}
\newcommand{\myyellow}{myyellow}
\newcommand{\mygreen}{LimeGreen}
\newcommand{\myred}{Red}
\usepackage{bm,upgreek}
\renewcommand{\vec}[1]{
\begingroup
\let\alpha\upalpha
\let\beta\upbeta
\let\theta\uptheta
\let\lambda\uplambda
\let\mu\upmu
\let\tau\uptau
\let\varphi\upvarphi
\bm{\mathbf{#1}}
\endgroup
}

\newcommand{\edgec}{cost\xspace}
\newcommand{\lat}{C}
\newcommand{\Seq}{\Pi}
\newcommand{\seq}{\pi}

\newcommand*{\factor}{\varrho_1(\alpha)}
\newcommand{\setn}{[n]}

\newcommand{\newparagraph}[1]{\subparagraph*{#1.}}

\begin{document}

\maketitle
\thispagestyle{empty}

\begin{abstract}
We introduce a unified framework for classical search and routing problems, including pathwise search, expanding search, the minimum spanning tree problem, and the traveling salesperson problem. The framework is based on two parameters. The first is a discount factor $\alpha \in [0,1]$: the first traversal of an edge incurs its full \edgec, whereas each subsequent traversal incurs only an $\alpha$-fraction of this \edgec. For a path starting at a designated root vertex, the $\alpha$-latency of a vertex is the discounted \edgec accumulated until the vertex is first visited. The second parameter is a norm parameter $p\geq 1$. The objective is to find a root-starting path that visits all vertices and minimizes the $p$-norm of the resulting vector of $\alpha$-latencies.

The model interpolates between several well-studied objectives. For $p=1$ and $\alpha=1$, it recovers pathwise search; for $p=1$ and $\alpha=0$, it recovers expanding search. As $p$ tends to infinity, the objective converges to a makespan-type criterion. At the endpoints $\alpha=1$ and $\alpha=0$, this limiting objective corresponds to TSP-type and MST-type behavior, respectively. For $p=1$, we give polynomial-time constant-factor approximation algorithms for all $\alpha\in[0,1]$, matching the best known guarantees for expanding search at $\alpha=0$ and pathwise search at $\alpha=1$. For general $p\geq 1$, we obtain a randomized constant-factor approximation algorithm and a derandomized pseudo-polynomial-time algorithm with the same guarantee.
\end{abstract}

\section{Introduction}

Pathwise search, expanding search, the traveling salesperson problem, and the minimum spanning tree problem are four fundamental models for exploring or connecting a network.
At first glance, these problems optimize rather different objectives: pathwise and expanding search minimize sums of discovery times, while the traveling salesperson problem and the minimum spanning tree problem minimize the time until all vertices are reached or connected.
In this paper, we study a common framework that unifies these problems through two parameters.
The first parameter determines how repeated traversals of edges are charged, and the second parameter determines how the individual vertex latencies are aggregated.

We are given an undirected graph $G = (V,E)$ with non-negative edge \edgec~$c_e \in \mathbb{N}$, and a designated start vertex $s$.
A solution is a traversal of the graph, i.e., a path starting in $s$ that may visit edges more than once and eventually visits all vertices.
For a vertex $v$, its latency is the time at which $v$ is visited for the first time.
Classical search problems ask for a traversal that minimizes the sum of these latencies.

The \emph{pathwise search problem} asks for such a traversal when every traversal of an edge $e$ requires~$c_e$ time units.
Thus, the latency of a vertex is equal to the total \edgec of the prefix of the path until the vertex is first visited, and the goal is to minimize the sum of the latencies of all vertices.
The problem, also known as the \emph{traveling repairperson problem}, captures situations in which all traversals of an edge take the same amount of time.
Hence, it has been used as a model for the movement of a repairperson in a road network or for disk heads on a hard drive.
The problem is $\mathsf{NP}$-hard even on weighted trees (\citet{sitters2002minimum}),
so much research has focused on approximation algorithms.
An algorithm is a $\varrho$-approximation algorithm if for any instance it runs in polynomial time and the cost of the solution output by the algorithm is at most $\varrho \cdot \opt$, where $\opt$ denotes the cost of an optimum solution to the respective instance.
The factor $\varrho$ is called the approximation ratio or guarantee.
The best currently known approximation algorithm for general graphs with unit weights for all vertices yields a $3.59$-approximation (\citet{chaudhuri2003paths}).

The \emph{expanding search problem} asks for a sequence of edges with the property that every prefix of edges is a connected subgraph and, without loss of generality, a tree containing $s$.
The latency of a vertex is the total \edgec of the edges added until the vertex first appears in the sequence, and the goal is again to minimize the sum of the latencies of all vertices.
It captures situations in which a tree network needs to be installed, and the \edgec of an edge is interpreted as the time needed to establish a connection between its end vertices.
Hence, it has been used as a model for clearing paths in an area devastated by disasters or for mining.
The problem is $\mathsf{NP}$-hard (\citet{averbakh2012flowtime}) and the best currently known approximation algorithm for general graphs yields a $5.44$-approximation (\citet{GriesbachHKS26}).

We propose and study a general model of graph exploration that we term the \emph{discounted graph search problem}.
As in pathwise search, a solution is a traversal $\seq$ starting in $s$ that may visit edges more than once and eventually visits all vertices.
The first parameter of the model is a discount factor~$\alpha \in [0,1]$.
While the first traversal of an edge $e$ in the sequence requires $c_e$ time units, every further traversal of the same edge requires only $\alpha \cdot c_e$ time units.
We call the \edgec of a path where repeated traversals are discounted in this way the \emph{$\alpha$-\edgec} of the path.
Given a traversal $\seq$, the \emph{$\alpha$-latency} $\lat_{\alpha,v}(\seq)$ of a vertex~$v$ is defined as the $\alpha$-\edgec of the smallest prefix of $\seq$ that visits $v$.

The second parameter is a norm parameter $p\geq 1$, which determines how the individual $\alpha$-latencies are aggregated.
For a traversal $\seq$, its $p$-norm $\alpha$-latency is~$\smash{\left\lVert \vec{\lat}_{\alpha}(\seq) \right\rVert_p
=
\left(
\sum_{v\in V} \lat_{\alpha,v}(\seq)^p
\right)^{1/p}}$.
The goal is to compute a traversal $\seq$ minimizing this quantity.

This two-parameter model contains several classical problems as special cases or limiting cases.
For $p=1$ and $\alpha=1$, every traversal of an edge is charged its full cost, and the objective is the sum of the first-visit times of all vertices.
Thus, we recover the pathwise search problem.
For $p=1$ and~$\alpha=0$, repeated traversals of already used edges are free.
Therefore, only the cost of newly added edges contributes to the discovery time of vertices, and we recover the expanding search problem.

The parameter~$p$ interpolates between sum-of-latencies objectives and makespan-type objectives.
As $p\to\infty$, the $p$-norm objective converges to the maximum $\alpha$-latency of any vertex.
For~$\alpha=1$, this limiting objective asks for a shortest traversal starting at $s$ that visits all vertices, since the maximum latency is exactly the time when the last vertex is first reached.
This is the path version of the traveling salesperson problem.
For $\alpha=0$, the maximum $\alpha$-latency is the total cost of the distinct edges that have been introduced by the time all vertices are reached.
Minimizing this quantity is therefore equivalent to finding a minimum-cost connected subgraph spanning all vertices, and hence to the minimum spanning tree problem.
Intermediate values of $\alpha \in (0,1)$ capture situations where the first traversal of an edge is more time-consuming than later traversals, for example, because of additional delays due to pathfinding or clearing a road, while further traversals still require a non-negligible amount of time.
Intermediate values of $p$ capture settings in which one wants to balance the average discovery time of vertices with the time until the last vertices are reached.
The parameter $p$ can also be interpreted from a fairness perspective.
For $p=1$, the objective minimizes the total latency and thus corresponds to a social-welfare objective, whereas larger values of $p$ put increasing emphasis on vertices with large latency; in the limit $p\to\infty$, the objective becomes an egalitarian objective that minimizes the worst latency.
The special case of $p=2$ and $\alpha=1$ has been studied as the \emph{traveling firefighter problem} in the literature~\cite{farhadi2021traveling}. The authors motivate this particular choice of objective by firefighters looking for an order in which to tackle wildfires whose damage grows quadratically with the elapsed time.

\subsection{Our Results}

We first note that the discounted graph search problem is computationally hard already in the case $p=1$.
In particular, the hardness of pathwise and expanding search carries over to the endpoints $\alpha=1$ and $\alpha=0$, and we also show that the problem remains hard for every fixed intermediate value $\alpha\in(0,1)$; see \Cref{sec:alpha-hardness}.

\begin{restatable}{theorem}{thmhardness}
	\label{thm:hardness-psp}
	For $p=1$ and every constant~$\alpha\in[0,1]$, there exists a constant~$\eps>0$ such that there is no polynomial-time~$(1+\eps)$-approximation algorithm for the \psp with discount factor~$\alpha$, unless~$\mathsf{P}=\mathsf{NP}$.
\end{restatable}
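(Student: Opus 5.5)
We would prove \Cref{thm:hardness-psp} by a gap-preserving reduction from a bounded-degree $\mathsf{APX}$-hard problem. The constant $\eps$ is allowed to depend on $\alpha$, and the reduction itself will be parametrised by $\alpha$. At $\alpha=1$ the problem is exactly pathwise search (the minimum latency problem), which is known to be $\mathsf{APX}$-hard on graph metrics with costs in $\{1,2\}$ via the Blum et al.\ reduction from TSP(1,2). We would nevertheless redo that reduction so that it visibly tolerates discounted revisits. The intended source problem is the path version of TSP with costs $1$ and $2$, restricted to instances whose $1$-edges form a graph $H$ of maximum degree $3$. This problem is $\mathsf{APX}$-hard: one cannot efficiently distinguish instances in which $H$ has a Hamiltonian path from instances in which every path cover of $H$ needs at least $\delta n$ paths.

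For $\alpha\in(0,1]$ we would build the instance as follows. Each vertex $v$ of $H$ is replaced by a cluster consisting of a hub $h_v$ and $K$ pendant leaves at cost $1$. Hubs $h_u,h_v$ are joined by an edge of cost $L$ if $uv\in E(H)$ and of cost $2L$ otherwise. Here $K$ and $L$ are polynomially bounded constants chosen depending on $\alpha$; a natural first try is $L \gg K/\alpha$.

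\textbf{Structural lemma.} We would show that some optimal traversal, after losing at most a $(1+o(1))$ factor, has the following form: it processes clusters one at a time, completes each cluster before leaving it, and never walks back along an inter-hub edge. Both parts come from exchange arguments.
\begin{itemize}
\item Returning to an earlier cluster costs at least $\alpha L$. Taking the direct edge to the next fresh hub costs at most $2L$.
\item Leaving a cluster unfinished forces a later return of cost at least $\alpha L$. This return delays about $K$ leaves, which outweighs any saving.
\end{itemize}
Once the lemma holds, the latency sum is dominated by the term $K\sum_i (n-i)\,\ell_i$. Here $\ell_i\in\{L,2L\}$ is the cost of the $i$-th hub-to-hub step, and the intra-cluster costs contribute only lower-order terms. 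This is the same weighted sum as in the Blum et al.\ analysis. A Hamiltonian path in $H$ makes every $\ell_i=L$. If every path cover of $H$ has $\delta n$ paths, then at least $\delta n$ steps are $2L$-jumps. Bounded degree lets us argue, as in the classical proof, that a constant fraction of these jumps occur early, at positions where $n-i=\Omega(n)$. This produces a multiplicative gap $1+\eps(\alpha)$.

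\textbf{The case $\alpha=0$ and the main obstacle.} As $\alpha\to 0$ revisits become nearly free, so the exchange argument in the structural lemma breaks down; the needed $L\gg K/\alpha$ blows up. At $\alpha=0$ the problem is exactly expanding search, whose objective is tree-like, so we need a different source problem. We would reduce from vertex cover in cubic graphs, using a Steiner-tree-style gadget in the spirit of Bern--Plassmann. There are hub vertices for the graph's vertices and terminal vertices for its edges. Each terminal is made heavy by attaching many zero-cost or unit-cost leaves, which simulates vertex weights. The weighted expanding-search latency then charges roughly the number of hubs opened before all terminals are reached. A small vertex cover yields a cheap early-connecting tree, and any tree yields a cover of comparable size, so we again obtain a constant gap.

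For intermediate small $\alpha$ we would combine the two constructions, or observe that the $\alpha=0$ gadget is robust: a revisit of cost $\alpha c_e$ with $\alpha$ fixed changes latencies by at most a factor $1+O(\alpha)$ relative to a tree solution. That perturbation bound, together with making the structural lemma hold uniformly for each fixed $\alpha$ (which determines how small $\eps$ must be), is the step we expect to be the main obstacle.
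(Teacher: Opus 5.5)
\textbf{Your structural lemma is false for $\alpha<1$, and this is exactly where the discount matters.} In your hub instance, walking back along an already used inter-hub edge costs only $\alpha L$, after which a fresh $H$-edge of cost $L$ reaches the next hub. So a break can be bridged for $(1+\alpha)L<2L$. For example, let $H=K_{1,3}$ with centre $c$ and leaves $x,y,z$, and let the walk start at $h_c$. The walk $h_c\to h_x\to h_c\to h_y\to h_c\to h_z$ reaches the three leaf hubs at times $L,(2+\alpha)L,(3+2\alpha)L$, ignoring intra-cluster costs. Every walk in your normal form needs $L,3L,5L$. With $K$ leaves per cluster, this saving is of the same order as the gap you want to detect, so it cannot be hidden in a $1+o(1)$ loss. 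Your first bullet compares $\alpha L$ with $2L$ but forgets the fresh edge still needed after returning. Hence $K\sum_i(n-i)\ell_i$ with $\ell_i\in\{L,2L\}$ is not a lower bound for arbitrary traversals.

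The paper neutralizes precisely this effect. It first reduces \textsc{TSP}(1,2) to $\alpha$-\textsc{TSP}(1,2) by joining two copies of $G$ at corresponding vertices with paths of $\lceil 3/\alpha\rceil+1$ unit edges. This w.l.o.g.\ forces a tour visiting every vertex once, so the discount never applies. Only then does it pass to latencies, using $k$ copies attached to $s$ by edges of cost $4(n-1)$, an exchange lemma making solutions structured, and an averaging bound of the form $\frac{k+1}{k-1}(1+O(\eps))$.

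Your direct route could be repaired without any normal form. Suppose consecutive newly visited hubs $h_u,h_w$ have $uw\notin E(H)$. Then the walk between them either has the fresh edge $h_uh_w$ of cost $2L$ as its only inter-hub edge, or it uses at least two inter-hub edges, the last one fresh. Either way it costs at least $(1+\alpha)L$. With $\ge\delta n$ breaks this adds $\Omega(\alpha\delta^2Ln^2)$ to $\sum_i t_i$ for every fixed $\alpha>0$, and the pendant leaves become unnecessary.

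This repair needs perfect completeness of your source problem (Hamiltonian path versus $\ge\delta n$ paths), which you assert without justification. With imperfect completeness, the YES instance's own breaks may cost an extra $L$ each while the NO instance's cost only an extra $\alpha L$, and the bounds need not separate for small $\alpha$. Also, your concern that $L\gg K/\alpha$ ``blows up'' is unfounded: $\alpha$ is fixed, so $L$ is a constant. No separate treatment of small $\alpha>0$ is needed, and the unproven ``$1+O(\alpha)$ robustness'' of the $\alpha=0$ gadget can be dropped.

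\textbf{The $\alpha=0$ sketch does not yield a latency gap.} With heavy terminals, each terminal's latency is the total cost of the hub and terminal edges added before it. The objective is therefore a min-sum-cover quantity, not the size of a vertex cover or Steiner tree. A gap in the minimum vertex cover does not transfer to this sum. A NO instance might cover all but a few edges with as few hubs as a YES instance and open its extra hubs only at the very end, where they delay almost no terminals. So ``any tree yields a cover of comparable size'' gives no lower bound on the latency sum. You would need a source problem that is hard for a sum objective (such as min-sum vertex cover) and an actual analysis. The paper avoids this entirely by invoking the known inapproximability of $\{0,1\}$-weighted expanding search (Theorem~6.1 of Griesbach et al.). It then removes the weights by replacing each weight-$1$ vertex with a polynomial-size clique of zero-cost edges.
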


We therefore focus on approximation algorithms.
We first consider the case $p=1$, where the objective is the total $\alpha$-latency.
For this case, we obtain a polynomial-time approximation algorithm with approximation ratio
\begin{align*}
	\factor \coloneqq
 \begin{cases}
2\mathrm{e} & \text{if } \alpha = 0,\\
2 \frac{\alpha}{(1+\alpha)W(\alpha/\mathrm{e})} & \text{otherwise,}
\end{cases}
\end{align*}
where $W:\R_{\geq 0}\rightarrow \R_{\geq 0}$ is the Lambert-$W$ function, which assigns $x\geq 0$ the unique value~$w\geq 0$ such that $w\mathrm{e}^w=x$.
This gives the following result.

\begin{restatable}{theorem}{umain}
\label{thm:u-main}
	For $p=1$ there is a polynomial-time $\factor$-approximation algorithm for the~\psp.
\end{restatable}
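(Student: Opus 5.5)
We will follow the doubling/concatenation framework of \citet{chaudhuri2003paths} and \citet{GriesbachHKS26}, adapted to the $\alpha$-cost. The basic subroutine is a rooted $k$-tree (or $k$-stroll) oracle. For every $k\in\{1,\dots,n\}$ we compute, via a Garg-type $2$-approximation for the rooted $k$-MST, a tree $T_k$ containing $s$ and at least $k$ vertices with $c(T_k)\le 2\,\mathrm{MST}_k$. Here $\mathrm{MST}_k$ is the optimal cost of a rooted tree spanning $k$ vertices. We will also use the lower bound $\opt\ge\sum_k \mathrm{LB}_k$, where $\mathrm{LB}_k$ is the $\alpha$-latency at which the optimal traversal has visited its $k$-th vertex.

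The key observation is the lower bound on $\mathrm{LB}_k$. Let $\pi^*$ be an optimal traversal, and consider its prefix visiting $k$ vertices. Let $F$ be the set of distinct edges used by this prefix. The prefix pays $c(F)$ for first traversals plus $\alpha$ times the repeated traversals. A walk along a tree $F$ whose endpoint is at distance $d$ from $s$ repeats each edge it must backtrack over, which gives the bound
\[
\mathrm{LB}_k\ \ge\ c(F)+\alpha\bigl(c(F)-d\bigr)\ \ge\ \mathrm{MST}_k .
\]
For the $k$-stroll variant, doubling the tree minus the return path then bounds the cost of a closed tour through these $k$ vertices by $(1+\alpha)\mathrm{LB}_k$ in $\alpha$-cost: traversing the doubled tree costs $c(T)(1+\alpha)$, since each edge is traversed once fully and once at discount.

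The algorithm proceeds as follows. Choose a random offset $U\in[0,1)$ and a base $\gamma>1$, and consider the geometric thresholds $L_i=\gamma^{i+U}$. For each $i$, take the largest $k_i$ whose tree has cost at most $L_i$. Visit the trees $T_{k_1},T_{k_2},\dots$ in order, each as a closed Euler tour from $s$. In $\alpha$-cost, the tour of phase $i$ costs at most $c(\text{new edges})+\alpha\cdot(\text{rest})\le(1+\alpha)L_i$. Since the walk returns to $s$ at the end of each phase, this is the point where previously bought edges are discounted, and hence where the dependence on $\alpha$ enters. The latency of a vertex first covered in phase $j$ is at most $\sum_{i\le j}(1+\alpha)L_i$. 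The standard telescoping/Goemans–Kleinberg analysis then bounds the expected total latency by
\[
(1+\alpha)\cdot\frac{\gamma}{(\gamma-1)\ln\gamma}\cdot 2\sum_k \mathrm{LB}_k .
\]
We will refine this bound: the final phase is not closed, only the earlier phases are, and those earlier phases cost $\alpha$ per re-traversal of already bought edges. This gives a per-phase cost of $L_j+\alpha\sum_{i<j}L_i$ rather than $(1+\alpha)\sum_{i\le j} L_i$. The resulting expectation is
\[
2\,\E_U\!\Bigl[\gamma^{U}\bigl(1+\tfrac{\alpha}{\gamma-1}\bigr)\Bigr]\Big/\ \text{(normalization)},
\]
which we optimize over $\gamma$. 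Setting the derivative to zero yields an equation of the form $\gamma\ln\gamma$-type, whose solution is expressed via the Lambert function. Writing $w=W(\alpha/\mathrm e)$, we aim to get exactly $2\alpha/((1+\alpha)w)$. For $\alpha=0$ the limit $\gamma=\mathrm e$ gives $2\mathrm e$, matching \citet{GriesbachHKS26}, and for $\alpha=1$ we recover the bound of \citet{chaudhuri2003paths}. As a sanity check, $W(1/\mathrm e)\approx0.2785$ gives $\varrho_1(1)\approx3.59$. Derandomization is done by trying all $O(n)$ relevant offsets, since only breakpoints at the values $c(T_k)$ matter.

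The main obstacle is getting the exact constant. First, we must pin down the right lower bound relating $\mathrm{LB}_k$ to $\mathrm{MST}_k$ or a $k$-stroll, possibly using $\alpha$-weighted strolls rather than trees. Second, we must make the phase-cost accounting charge new edges fully and old edges by $\alpha$, so that the recurrence in $\gamma$ produces precisely the Lambert-$W$ expression. I would first try carrying out the optimization with the cost model ``$L_j$ new plus $\alpha$ times previous'' and check that the optimum matches $\factor$ at both endpoints before tightening the lower-bound step.
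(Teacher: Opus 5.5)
There is a genuine gap, and it sits exactly where you flag the ``main obstacle''. With $2$-approximate $k$-MSTs and $\mathrm{LB}_k\ge\mathrm{MST}_k$, you only get $c(T_k)\le 2\,\mathrm{LB}_k$. No sharper lower bound can rescue a black-box $k$-MST oracle: if the optimal prefix is a simple path that is itself a minimum $k$-tree, then $\mathrm{LB}_k=\mathrm{MST}_k$, while the oracle may return a tree of cost $2\,\mathrm{MST}_k$. (Your inequality $\mathrm{LB}_k\ge c(F)+\alpha(c(F)-d)$ also fails when $F$ contains a cycle.) Even with the best accounting, $c(T_k)\le 2\,\mathrm{LB}_k$ yields only $2\min_{\gamma>1}\frac{\gamma+\alpha}{\ln\gamma}=(1+\alpha)\varrho_1(\alpha)$. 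That is $2\mathrm{e}$ at $\alpha=0$, but $7.18$ instead of $3.59$ at $\alpha=1$.

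The missing idea is a \emph{good} $k$-tree, i.e.\ $c(T_k)\le\frac{2}{1+\alpha}c_\alpha(\pi^*(k))$ for an optimal $k$-path $\pi^*(k)$. The paper obtains it in two steps:
\begin{enumerate}
\item the primal--dual algorithm of \citet{chaudhuri2003paths} returns a tree costing at most the optimum of the $k$-stroll LP $\mathrm{P}_{1,k,t}$;
\item the identity $\mathrm{P}^*_{1,k,t}=\frac{2}{1+\alpha}\mathrm{P}^*_{\alpha,k,t}$ holds (split every $x_e$ evenly into a first-traversal and a second-traversal variable).
\end{enumerate}
Such trees are produced only for the values $k=k_\lambda$ that the primal--dual algorithm hits. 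So you further need phantom trees for the other $k$, derandomization via a shortest $0$--$n$ path in the auxiliary graph $H$, and the concavity-in-$\mu$ argument showing that some shortest path avoids phantom vertices. Simply enumerating offsets would let thresholds land on phantom trees.

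Your refined phase accounting is also incorrect. Earlier phases are not charged only $\alpha$ per unit. The tree $T_{n_i}$ may consist of edges never bought before (e.g.\ if the trees are edge-disjoint), so its tour costs up to $(1+\alpha)c(T_{n_i})$. The bound $L_j+\alpha\sum_{i<j}L_i$ would give the numerator $\gamma-1+\alpha$. That leads to the optimality condition $\gamma(\ln\gamma-1)=\alpha-1$ rather than $\gamma(\ln\gamma-1)=\alpha$. For $\alpha=0$ it would treat all earlier trees as free and ``prove'' a ratio arbitrarily close to $2$ for expanding search.

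The correct saving is confined to the last tree. Traversing the Euler tour of $T_{n_j}$ in a uniformly random direction gives expected in-tree latency at most $c(T_{n_j})$, and choosing the better direction only helps. Hence $\psi_P(k)\le c(T_{n_j})+(1+\alpha)\sum_{i<j}c(T_{n_i})$. With good trees and thresholds $\frac{2}{1+\alpha}b\gamma^j$, this gives $\mathbb{E}[\psi_P(k)]\le 2\frac{\gamma+\alpha}{(1+\alpha)\ln\gamma}\,c_\alpha(\pi^*(k))$. This is minimized at $\gamma^*=\mathrm{e}^{1+W(\alpha/\mathrm{e})}$, with value exactly $\varrho_1(\alpha)$.
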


\begin{figure}
\begin{subfigure}[t]{0.47\textwidth}
\centering
\resizebox{\linewidth}{!}{%
\begin{tikzpicture}[yscale=0.9,xscale=6]
\draw[-latex,thick] (-0.05,0) -- (1.05,0) node[right] {$\alpha$};
\draw[-latex,thick] (0,0.5) -- (0,3) node[above] {$\factor$};
\draw[dotted,thick] (0,0) -- (0,0.5);
\draw (0.015,2.44) -- (-0.015,2.44) node[left] {$2\mathrm{e}$};
\draw (0.015,0.59) -- (-0.015,0.59) node[left] {$3.59$};
\draw (0.015,1) -- (-0.015,1) node[left] {$4$};
\draw (0.015,2) -- (-0.015,2) node[left] {$5$};

\draw (0,0.1) -- (0,-0.1) node[below] {$0$};
\draw (1/4,0.1) -- (1/4,-0.1) node[below] {$1/4$};
\draw (1/2,0.1) -- (1/2,-0.1) node[below] {$1/2$};
\draw (3/4,0.1) -- (3/4,-0.1) node[below] {$3/4$};
\draw (1,0.1) -- (1,-0.1) node[below] {$1$};
\draw[smooth,very thick,MidnightBlue] plot coordinates {(0,2.44) (0.1,2.12096) (0.2,1.85261) (0.3,1.62121) (0.4,1.41929) (0.5,1.24129) (0.6,1.083) (0.7,0.94116) (0.8,0.8132) (0.9,0.69707) (1,0.59112)};
\node[circle,minimum size=5pt,inner sep=0pt,fill=RedViolet] (chaud) at (1,0.59)  {};
\node[label=left: \textcolor{RedViolet}{\citet{chaudhuri2003paths}}] (chaud1) at (1,0.4)  {};
\node[circle,minimum size=5pt,inner sep=0pt,fill=ForestGreen,label=above right: \textcolor{ForestGreen}{\citet{GriesbachHKS26}}] (gries) at (0,2.44)  {};
\node (this) at (0.5,1.8) {\textcolor{MidnightBlue}{this work}};
\end{tikzpicture}
}
\end{subfigure}
\hfill
\begin{subfigure}[t]{0.47\textwidth}
    \centering
    \includegraphics[width=\linewidth]{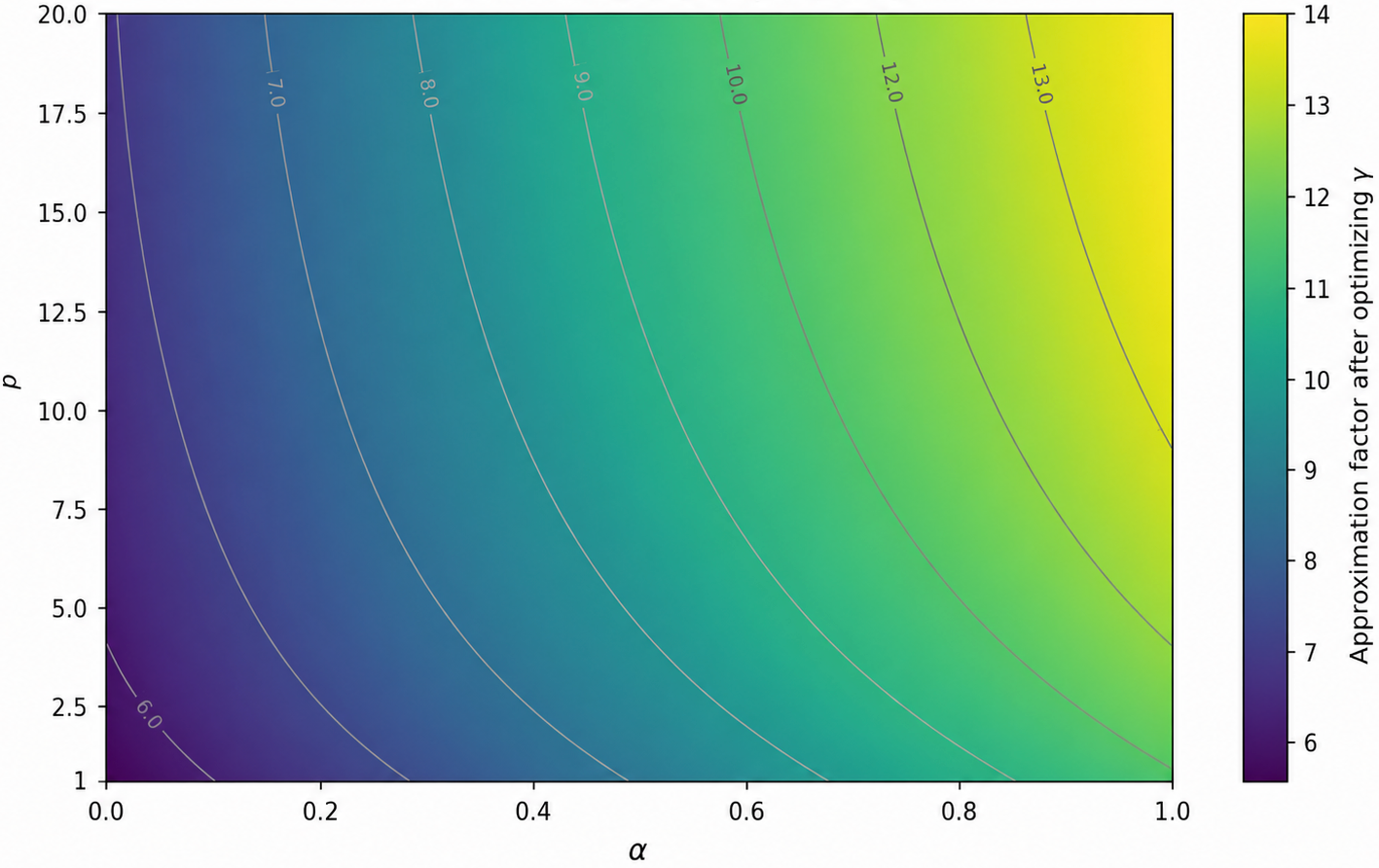}

    \label{fig:p-norm-heatmap}
\end{subfigure}

\caption{Approximation ratios $\varrho(\alpha)$ obtained in this work: (a) for $p=1$ as a function of $\alpha$; (b)~as a function of~$p>1$~and~$\alpha$.\label{fig:approximation-ratio}}
\end{figure}

For $\alpha = 0$, the approximation factor is $2\mathrm{e}$, which matches the best currently known approximation guarantee for the expanding search problem (\citet{GriesbachHKS26}).
For $\alpha = 1$, the approximation factor is $3.59$, which matches the best currently known approximation guarantee for the pathwise search problem in general graphs with unit vertex weights (\citet{chaudhuri2003paths}).
For arbitrary values of $\alpha \in [0,1]$, the approximation ratio interpolates smoothly between these two values; see \Cref{fig:approximation-ratio}.

Like the algorithms of \citet{chaudhuri2003paths} and \citet{GriesbachHKS26}, our algorithm is based on a sequence of certain trees of exponentially growing \edgec.
The trees that we consider are related to the concept of \emph{good} trees used by \citet{chaudhuri2003paths} in their $3.59$-approximation for the pathwise search problem.
Roughly speaking, a tree is called \emph{good} if it contains a certain number of vertices and its \edgec is bounded by the cost of an appropriate path visiting the same number of vertices.
We extend this concept to general values of $\alpha$.
Specifically, we define good $k$-trees for $\alpha$ and show that such trees can be computed in polynomial time for the values of $k$ required by our algorithm.
For $\alpha=1$, this recovers the trees used by \citet{chaudhuri2003paths}.
For $\alpha=0$, the good $k$-trees correspond to $k$-MSTs, and the resulting bound matches the best known approximation guarantee for expanding search.

We then turn to the more general case $p\geq 1$, where the objective is the $p$-norm of the vertices' $\alpha$-latencies.
For $p>1$, the approach for the case $p=1$ cannot be applied directly.
The main difficulty is that the objective is no longer linear in the latencies, and hence the auxiliary-graph construction used for $p=1$ no longer directly captures the contribution of each additional tree.
Moreover, the good-tree machinery from the $p=1$ case does not provide suitable trees for every value of~$k$.

We therefore use a randomized construction that is closer in spirit to the geometric tree sequences used for $p=1$, but relies only on 2-approximate $k$-MSTs.
For every $k\in[n]$, we compute such a tree, choose a random geometric offset, and concatenate trees whose costs grow geometrically.
Analyzing the $p$-th moments of the resulting vertex latencies gives the following guarantee.

\begin{restatable}{theorem}{pmain}
\label{thm:u-main-p}
	For every $p\geq 1$, there is a polynomial-time randomized algorithm for the \psp with approximation factor
    \[ \varrho_p(\alpha) =
    \min_{\gamma > 1} \left\{  \frac{2 (1 + \alpha)}{\gamma-1} \biggl[\frac{\gamma^{2p}-\gamma^p}{p \ln \gamma}\biggr]^{1/p} \right\}.
    \]
\end{restatable}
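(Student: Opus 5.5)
The algorithm first computes, for every $k\in[n]$, a $2$-approximate $k$-MST $T_k$ rooted at $s$ (via Garg's algorithm). Next it draws $U$ uniformly from $[0,1)$ and sets $\tau_i=\gamma^{i+U}$ for $i\in\Z$. For each $i$ it takes the largest $k_i$ with $c(T_{k_i})\le 2\tau_i$, and phase $i$ traverses $T_{k_i}$ by an Euler-type walk from $s$ back to $s$. In that walk every edge is traversed twice, and the first traversal may be a repeat of an earlier phase. The algorithm outputs the concatenation of the phases in increasing order of $i$.

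\textbf{Cost of a phase.} The $\alpha$-cost of phase $i$ is at most $(1+\alpha)c(T_{k_i})\le 2(1+\alpha)\tau_i$, because each edge is charged at most $c_e$ once and $\alpha c_e$ once. Summing a geometric series, every vertex first visited in phase $i$ has $\alpha$-latency at most
\[
2(1+\alpha)\sum_{j\le i}\tau_j=\frac{2(1+\alpha)\gamma}{\gamma-1}\,\tau_i .
\]

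\textbf{Comparison with $\opt$.} Let $\opt$ be an optimum traversal and $t_j$ its $j$-th smallest $\alpha$-latency. Since $\alpha\ge 0$, the prefix that visits $j$ vertices contains a tree spanning them of cost at most $t_j$. Hence the minimum $j$-tree costs at most $t_j$, and so $c(T_j)\le 2t_j$. Let $i$ be the smallest index with $\tau_i\ge t_j$. Then $T_j$ is allowed in phase $i$, so $k_i\ge j$ and at least $j$ vertices are visited by the end of phase $i$. Ordering the algorithm's vertices by latency, the $j$-th latency $L_j$ therefore satisfies $L_j\le \frac{2(1+\alpha)\gamma}{\gamma-1}\tau_{i(j)}$, where $\tau_{i(j)}$ is the smallest grid point above $t_j$.

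\textbf{Expectation of the $p$-th moment.} Write $\tau_{i(j)}=t_j\gamma^{Y}$ with $Y\in[0,1)$. Because of the random offset, $Y$ is uniform on $[0,1)$. Thus
\[
\E[\tau_{i(j)}^p]=t_j^p\int_0^1\gamma^{py}\,dy=t_j^p\,\frac{\gamma^p-1}{p\ln\gamma}.
\]
Combining with the phase bound,
\[
\E\Bigl[\sum_j L_j^p\Bigr]\le\Bigl(\frac{2(1+\alpha)\gamma}{\gamma-1}\Bigr)^p\frac{\gamma^p-1}{p\ln\gamma}\sum_j t_j^p .
\]
Since $\gamma^p(\gamma^p-1)=\gamma^{2p}-\gamma^p$, taking $p$-th roots gives exactly $\varrho_p(\alpha)$. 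This bounds $(\E\|\cdot\|_p^p)^{1/p}$, which is the natural randomized guarantee; by Jensen's inequality it also bounds $\E\|\cdot\|_p$. Choosing the optimal $\gamma$ completes the proof.

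\textbf{Main obstacles.} The first delicate point is the well-definedness of the phases: only finitely many $i$ are relevant, and phases with $k_i=1$ (cost $0$) contribute nothing, so the concatenation is finite. The second is the bookkeeping behind the phase-cost bound, which needs that the $\alpha$-latency of a walk is at most its undiscounted length with each edge's second traversal discounted. I expect the main obstacle to be the claim that the $j$-th smallest algorithmic latency is bounded via $t_j$. The step that makes it work is that the phase containing $T_j$ visits at least $j$ distinct vertices, so sorted latencies can be compared pointwise.
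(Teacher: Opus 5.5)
Your proposal is correct and follows essentially the paper's proof. Both use $2$-approximate $k$-MSTs for every $k$, a random geometric offset $b=\gamma^U$ with thresholds $2b\gamma^r$, the $(1+\alpha)$ tree-to-tour bound summed as a geometric series, a $p$-th moment computation over $U$, and Jensen's inequality. Your uniformly distributed overshoot $\tau_{i(j)}=t_j\gamma^Y$ turns the paper's two-case integral into a single integral. You also justify $c(T_j)\le 2t_j$ explicitly from the prefix of the optimal traversal, which the paper uses without comment.
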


The approximation factor from \Cref{thm:u-main-p} is depicted in \Cref{fig:approximation-ratio}.
For fixed $\alpha$, the optimized factor tends to $8(1+\alpha)$ as $p\to\infty$.
Over the range $\alpha\in[0,1]$, the smallest value is attained at $\alpha=0$ and $p=1$, where the factor is $2\mathrm{e}$, while the largest value is approached for $\alpha=1$ and $p\to\infty$, where the factor tends to $16$.

For the traveling firefighter problem, corresponding to the cases~$\alpha = 1$ and~$p=2$, this yields an approximation factor of~$\approx 11.28$. This is worse by a factor of $2$ than the approximation factor of $\approx 5.65$ claimed by \citet{farhadi2021traveling}. However, as we discuss in \Cref{ssec:related-work}, we do not believe that this discrepancy can be resolved straightforwardly.

Finally, we derandomize the $p$-norm algorithm using an auxiliary graph, as in the case $p=1$.
The difference is that for $p>1$ the state must also record the accumulated cost of the trees traversed so far, since this quantity determines the nonlinear contribution to the $p$-norm objective.
This yields a two-dimensional, time-expanded auxiliary graph and the following deterministic guarantee.

\begin{restatable}{theorem}{pderandomized}
\label{thm:p-derandomized}
For every $p\geq 1$, there is a deterministic pseudo-polynomial-time algorithm for the \psp with approximation factor $\varrho_p(\alpha)$ defined in \Cref{thm:u-main-p}.
Moreover, if the tree costs are integral and polynomially bounded, then the algorithm runs in polynomial time.
\end{restatable}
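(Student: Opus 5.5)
The plan is to derandomize the algorithm of \Cref{thm:u-main-p} by rewriting its random choice as a shortest-path computation in an auxiliary graph, and then using that the expected cost of the random choice bounds the best deterministic one.

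\textbf{Recalling the randomized algorithm.} The randomized algorithm computes, for every $k\in[n]$, a $2$-approximate $k$-MST $T_k$ rooted at $s$. It draws a random offset $\beta$ and sets thresholds $L_i=\gamma^{i+\beta}$. For each $i$ it takes the largest tree $T_{k_i}$ with cost at most $L_i$ and concatenates the traversals of these trees.

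\textbf{Cost of one concatenation.} For any fixed deterministic choice of trees $T_{k_1},T_{k_2},\dots$ with $k_1<k_2<\dots$, each tree is traversed by doubling it. The newly covered vertices of the $j$-th tree then get $\alpha$-latency at most $\sum_{i<j}(1+\alpha)c(T_{k_i})+(1+\alpha)c(T_{k_j})$, possibly with the refinement that the first traversal costs only $c$ plus $\alpha c$ for the return. Writing $S_j$ for the accumulated prefix cost, the $p$-th power objective is at most $\sum_j (k_j-k_{j-1})\,S_j^p$. This is additive along a path once $S_{j-1}$ is known.

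\textbf{Auxiliary graph.} Build a DAG whose states are pairs $(k,S)$, where $k$ is the number of covered vertices and $S$ is the accumulated cost so far. There is an arc from $(k,S)$ to $(k',S+(1+\alpha)c(T_{k'}))$ for every $k'>k$, with weight $(k'-k)(S+(1+\alpha)c(T_{k'}))^p$. We start at $(1,0)$ for the root and want a cheapest path to any state $(n,\cdot)$.

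Since tree costs are integers and $\alpha$ enters only as a fixed multiplier, $S$ ranges over values $a+\alpha b$ with integers $a,b\le n\sum_e c_e$. We can equally record the pair $(a,b)$, which gives pseudo-polynomially many states. A shortest path is computed by dynamic programming in topological order of $k$.

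\textbf{Correctness.} Every realization of the random offset yields a path in this DAG whose weight is at least the $p$-th power objective of the produced traversal and equal to the bound used in the analysis of \Cref{thm:u-main-p}. The expectation of that bound over $\beta$ is at most $\varrho_p(\alpha)^p\,\opt^p$, so some path has weight at most this. The minimum-weight path gives a traversal with $\|\vec{\lat}_\alpha\|_p\le\varrho_p(\alpha)\opt$.

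\textbf{Running time.} If tree costs are polynomially bounded, the number of distinct values of $S$, and hence of states and arcs, is polynomial in $n$, so the algorithm runs in polynomial time.

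\textbf{Main obstacle.} The delicate point is making sure the randomized analysis really bounds the path weight as defined, not some looser quantity that depends on the thresholds $L_i$ instead of the actual tree costs. If that analysis is phrased in terms of $L_i$, I would instead use arc weights depending only on actual costs and note that actual cost is at most $L_i$. Monotonicity of $x\mapsto x^p$ then gives the needed domination realization by realization. A second concern is that the accumulated cost must be stored exactly, which is why the graph is two-dimensional and only pseudo-polynomial.
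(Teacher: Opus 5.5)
Your proposal is correct and follows essentially the same route as the paper. The paper also uses a two-dimensional auxiliary graph on states (number of covered vertices, accumulated tree cost) and computes a shortest path in it. It then averages over the random offset to show that some realization's path has length at most $\eta_\gamma\,\mathrm{OPT}^p$ for every $\gamma>1$, and pseudo-polynomiality comes from the number of possible accumulated costs. The only difference is cosmetic: you charge an arc directly $(k'-k)\bigl(S+(1+\alpha)c(T_{k'})\bigr)^p$, whereas the paper charges $(n-i)\bigl(((1+\alpha)z')^p-((1+\alpha)z)^p\bigr)$; by telescoping, both give the same total length $\sum_h (n_h-n_{h-1})\bigl((1+\alpha)z_h\bigr)^p$ along any path.
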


\subsection{Related Work}
\label{ssec:related-work}
The pathwise search problem is also known as the \emph{traveling repairperson problem}.
This problem was shown to be $\mathsf{NP}$-hard on general graphs by \citet{sahni1976complete}, and to be $\mathsf{NP}$-hard even on weighted trees by \citet{sitters2002minimum}.
In the weighted setting, each vertex has a non-negative weight and the task is to minimize the weighted sum of latencies.
The problem can be solved efficiently on unweighted trees and paths~\cite{afrati1986complexity,garcia2002note,minieka1989delivery}.
The first approximation algorithm for this problem was devised by \citet{blum1994minimum} who gave a 144-approximation for general graphs and an $8$-approximation for weighted trees.
\citet{goemans1998improved} improved the approximation ratio for general graphs to $21.55$ and the one for weighted trees to $3.59$.
In particular, they show that a $\beta$-approximation for the rooted $k$-MST problem of finding a shortest tree containing a given root and $k-1$ further vertices implies an approximation ratio of $3.59\beta$ for the traveling repairperson problem.
Thus, the improved $k$-MST approximations of $\beta=3$ by \citet{garg1996approximation}, of~$\beta=2.5$ by \citet{AryaRamesh1998}, of~$\beta= 2+\eps$ by \citet{arora2006approximation}, and of~$\beta=2$ by \citet{Garg05} immediately yield better approximations of $10.77$, $8.98$, $7.18+\eps$, and $7.18$, respectively. In addition, the work of \citet{Arora98} implies an improved approximation ratio of $3.59 + \eps$ for the Euclidean case.
\citet{archer2008faster} showed that the approximation ratio of $7.18 + \eps$ can be obtained with fewer calls to the $k$-MST subroutine.
\citet{chaudhuri2003paths} improved the approximation ratio for general graphs to match the ratio of $3.59$ for weighted trees. \citet{archer2010improved} further improved the approximation ratio for weighted trees to $3.03$.
\citet{AroraK03} gave a quasi-polynomial-time approximation scheme (QPTAS) both for weighted trees and for Euclidean instances. \citet{sitters2021polynomial} gave a polynomial-time approximation scheme (PTAS) for Euclidean instances, weighted trees, and planar graphs.

The expanding search problem has been shown to be $\mathsf{NP}$-hard by~\citet{averbakh2012flowtime}. \citet{alpern2013mining} proposed an algorithm for the expanding search problem on weighted trees.
The first approximation algorithm was devised by~\citet{HermansLM22} who gave an $8$-approximation.
The approximation ratio was improved by~\citet{GriesbachHKS26} to $2\mathrm{e} \approx 5.44$.

\citet{farhadi2021traveling} study the $L_p$-TSP, where the objective is to minimize the~$p$-norm of the vector of visit times; this problem interpolates between the traveling repairperson problem for $p=1$ and the path variant of TSP for $p \to \infty$ and corresponds to our case of $\alpha=1$.
They claimed to have a universal $8$-approximation that is valid for all $p$ simultaneously. If correct, this would improve over a $16$-approximation by~\citet{golovin2008all}.
\citet{farhadi2021traveling} further claimed a $5.65$-approximation for $p=2$. However, we believe that both their results have a substantial gap.
Their algorithms rely on the ability to compute good $k$-trees for all values $k \in [n]$; Lemma 3.1 in their paper cites \citet{chaudhuri2003paths} for the claim that this can be done in polynomial time. However, this is not what \citet{chaudhuri2003paths} show. They only show ``how to obtain such $k$-trees for some values of $k$'' and ``that
it actually suffices to use the few $k$-trees that we found''~\cite[p.~3]{chaudhuri2003paths}.
The algorithms of \citet{farhadi2021traveling}, on the other hand, crucially require good~$k$-trees for \emph{all} values of $k$ to be computable in polynomial time.
It seems to be an open problem whether this can be done.
A positive answer to this open problem appears to require new techniques, as~\citet[p.~1]{Garg05} writes that ``we cannot argue that the cost of the tree picked is at most
the cost of the best $k$-stroll. As was shown in~\cite{chaudhuri2003paths} such an
argument can be made for certain values of $k$ by exploiting
the same slack in the Goemans-Williamson argument''.
Our algorithms for $p > 1$ avoid this issue by computing $2$-approximate $k$-MSTs for all values of $k$ instead of good $k$-trees; the former can be done with the algorithm of~\citet{Garg05}. This is exactly why we incur an additional loss of a factor of $2$. The discussion of \citeauthor{Garg05} cited above makes it plausible that this additional factor of $2$ cannot be avoided with current techniques.
This route was also taken by~\citet{golovin2008all}, who use $2$-approximate $k$-MSTs for all values of $k$ instead of good $k$-trees.
It is interesting to note that our approximation guarantees for any~$p \geq 1$ and $\alpha \in [0,1]$ are strictly below $16$ and approach this value when $\alpha=1$ and $p \to \infty$.

Norm-based interpolation also appears in ordered optimization problems such as ordered $k$-median, which interpolates between $k$-median and $k$-center~\cite{chakrabarty2018interpolating}.
In scheduling and load balancing, one often minimizes the $L_p$-norm of the machine-load vector, which similarly interpolates between average-load objectives and makespan minimization; see, e.g., the work of \citet{awerbuch1995load} and subsequent work on all-norm and minimum-norm load balancing~\cite{azar2004allnorm, chakrabarty2019minimum, ibrahimpur2021minimum}.
While these problems differ from ours, they illustrate the role of $p$-norm objectives as a natural way to interpolate between sum-type and bottleneck-type criteria.

\section{Preliminaries}
\label{sec:prelim}

Let $G=(V,E)$ be an undirected graph with $|V|=n+1$
and a designated start vertex $s$ and let $\alpha \in [0, 1]$ be the discount factor.
Each edge $e\in E$ has a non-negative \edgec $c_e\in \N$.
We assume that $c_e  > 0$ for every edge incident to $s$ since otherwise these edges can be contracted.
We call a sequence of edges $\seq=(e_1,\dots,e_k)$ a \emph{path} if for all $1\leq i<k$ the end vertex of $e_i$ and the start vertex of $e_{i+1}$ coincide and the sequence starts in $s$.
If the path also ends in $s$, we call it a \emph{tour}.
The set of all tours is denoted by $\Seq$.
For two paths $\seq=(e_1,\dots,e_k)$ and $\seq'=(e'_1,\dots,e'_l)$ where the end vertex of~$\seq$ coincides with the start vertex of~$\seq'$, we write $\seq+\seq'=(e_1,\dots,e_k,e'_1,\dots,e'_l)$ for the concatenation of those two paths.

For a sequence of edges $\seq$, we define the \emph{$\alpha$-\edgec} $c_{\alpha}(\seq)$ of $\seq$ as the sum of all edges in $\seq$, where for the second and further traversals of an edge $e$ its \edgec $c_e$ is multiplied by $\alpha$.
We call a sequence $\seq$ a \emph{tree} if the set of edges in $\seq$ spans a tree in $G$ that contains $s$.
Since a tree contains no edge more than once, the $\alpha$-\edgec of a tree is independent of $\alpha$, and we simply write $c(\seq)$.
The following lemma states the simple fact that every tree can be transformed into a tour by increasing the $\alpha$-\edgec by at most a factor of $1+\alpha$. We state it here for future reference; its proof is deferred to \Cref{app:tree-to-tour}.

\begin{restatable}{proposition}{treetotour}
    \label{prop:tree-to-tour}
    For a tree $T$, there exists a tour $\seq(T)$ that visits all vertices in $T$ and has $\alpha$-\edgec at most $c_{\alpha}(\seq(T))\leq (1+\alpha)c(T)$.
\end{restatable}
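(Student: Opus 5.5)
The plan is the standard depth-first (Euler) traversal of $T$ rooted at $s$: double every tree edge and walk around the tree, then charge the two traversals of each edge separately. Let $\seq(T)$ be the closed walk that starts at $s$, visits the children of each vertex in a fixed order, and recursively explores each child subtree before returning to the parent.

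First I will check that $\seq(T)$ is a tour in the sense of \Cref{sec:prelim}. I will argue by induction on the number of edges of $T$ that the DFS walk is a sequence of edges in which consecutive edges share endpoints, that it starts and ends at $s$, and that it visits every vertex of $T$. The key observation for the inductive step is that the subtree hanging below a child $u$ of $s$ contributes the edge $\{s,u\}$, then the walk of that subtree, then $\{u,s\}$.

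Second, I will count how often each edge appears. The same induction shows that every edge of $T$ appears exactly twice in $\seq(T)$: once going down from parent to child and once coming back up. No edge outside $T$ is ever used. By the definition of the $\alpha$-cost, the first traversal of $e$ is charged $c_e$ and the second is charged $\alpha c_e$. Summing over the edges of $T$ gives
\[
c_{\alpha}(\seq(T)) = \sum_{e\in T}(1+\alpha)c_e = (1+\alpha)\,c(T),
\]
which is the claimed bound, in fact with equality.

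I expect no real obstacle. The only point needing care is the claim that every edge is traversed exactly twice and that the walk is well defined as a closed walk from $s$. The inductive decomposition at the root handles this. I will also note the degenerate case: if $T$ consists of $s$ alone, the empty tour works and has cost $0$.
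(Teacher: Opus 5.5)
Your proposal is correct and is essentially the paper's argument. The paper doubles every edge of $T$ and takes an Euler tour from $s$ in the resulting even-degree multigraph; your DFS walk is an explicit Euler tour of that doubled tree, and it gives $c_{\alpha}(\seq(T)) = (1+\alpha)c(T)$ because each edge is traversed exactly twice, first at cost $c_e$ and then at cost $\alpha c_e$.
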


For a vertex $v\in V$ visited by a tour $\seq\in \Seq$, we denote by $\seq_v$ the prefix of $\seq$ until $v$ is visited for the first time.
The \emph{$\alpha$-latency} $\lat_{{\alpha},v}(\seq)$ of vertex $v$ in $\seq$ is defined as $\lat_{{\alpha},v}(\seq)\coloneqq c_{\alpha}(\seq_v)$
where we set $\lat_{\alpha,v}(\seq) = \infty$ if $v$ is not visited by~$\seq$ at all.
Note that $\lat_{{\alpha},s}(\seq) = 0$.
The \emph{total $\alpha$-latency} $\lat_{\alpha}(\seq)$ of $\seq$ is then defined as $\lat_{\alpha}(\seq)\coloneqq \sum_{v\in V}\lat_{{\alpha},v}(\seq)$.
In particular, we have $\lat_{\alpha}(\seq) = \infty$ if there is a vertex $v \in V$ that is never visited by~$\seq$.
The set of feasible solutions for the \emph{\psp} is the set $\Seq$.
Further, a tour $\seq^*\in\Seq$ is optimal for the \psp with $\alpha\in[0,1]$ if $\lat_{\alpha}(\seq^*)\leq \lat_{\alpha}(\seq)$ for all tours $\seq\in\Seq$.

A \emph{$k$-tree}, \emph{$k$-path}, or \emph{$k$-tour} is a tree, path, or tour, respectively, that contains $s$ and visits at least $k+1$ distinct vertices of $V$.
For a $k$-tour $\seq\in \Seq$, we denote by $\seq_k$ the prefix of $\seq$ until the $k$-th distinct vertex  of $V \setminus \{ s\}$ is visited for the first time.
A $k$-path $\seq^*$ is \emph{optimal} with respect to~$\alpha$ if
$c_{\alpha}(\seq^*)\leq c_{\alpha}(\seq)$ for all~$k$-paths $\seq$.
If~$\alpha$ is clear from context, we simply say that $\seq^*$ is optimal.
Further, a~$k$-tree $T$ is called \emph{good} if its \edgec is at most
$
	c(T)\leq \frac{2}{1+\alpha} c_{\alpha}(\seq^*),
$
where $\seq^*$ is an optimal $k$-path.

\section{Algorithms for the~$1$-Norm of~$\alpha$-Latencies}
\label{sec:unweighted}

Recall that $\varrho(\alpha) = 2 \frac{\alpha}{(1+\alpha)W(\alpha/\mathrm{e})}$ when $\alpha \in (0,1]$ and $\varrho(0) = 2\mathrm{e}$.
We prove the following result.

\umain*

The intuitive idea behind the algorithm is to concatenate a specific subset of \emph{good trees}.
We construct an initial set of good trees in~\Cref{ssec:u-good-trees-alpha} using an algorithm by \citet{chaudhuri2003paths}.
They introduced this algorithm for the computation of good $k$-trees for a specific subset of values of $k$ and in the special case of $\alpha = 1$.
A more thorough analysis of their approach yields that the obtained trees are indeed good trees for arbitrary choices of $\alpha \in [0,1]$.
Since the algorithm does not return good $k$-trees for all values of $k\in\setn_0$, we use so-called phantom trees for the remaining values of~$k$ in \Cref{ssec:u-phantom-trees}.
Next, we introduce the concatenating algorithm in \Cref{ssec:u-concatenating-algo}, which takes as input a set of good trees and returns a feasible solution to the \psp.
More precisely, it constructs an auxiliary graph~$H$ whose vertices correspond to the good trees.
Within this auxiliary graph, the algorithm computes a shortest path and concatenates the corresponding trees to the desired tour.
The algorithm is thus well-defined only if the input contains only real trees.
We temporarily ignore this constraint and show in \Cref{ssec:u-concatenating-algo-phantom} that, under this relaxation, the returned (phantom) tour has an approximation guarantee of~$\factor$.
The analysis builds on the construction of a randomized path in the auxiliary graph that visits exponentially growing trees.
Then, we show that the same guarantee can be obtained by a path that visits only real trees.
This allows us to restrict the input to good real trees while maintaining the approximation guarantee.
Combining these results yields a well-defined algorithm that returns the desired solution.
The running-time analysis then concludes the proof of \Cref{thm:u-main}.

\subsection{Finding Good Trees}
\label{ssec:u-good-trees-alpha}

To obtain good $k$-trees, we use an algorithm by \citet{chaudhuri2003paths}.
They study the pathwise search problem, i.e., the \psp with $p=\alpha=1$, for which they introduce a primal--dual algorithm that computes good $k$-trees for a subset of values of $k$.
With a more thorough analysis, we can show that these trees are indeed good trees for all choices of $\alpha\in [0,1]$.

A subtle issue is that the primal--dual procedure is not monotone in the vertex-budget parameter.
In contrast to what is sometimes assumed in related Lagrangian approaches for $k$-MST~\cite{arora2006approximation}, increasing the parameter need not produce a larger tree, even before the final pruning step.
This is because the parameter can change the order in which moats merge and become inactive.
An explicit example can be found in \Cref{sec:app:non-monotonicity}.

In \Cref{app:finding-good-trees}, we give a detailed explanation of how the framework of \citet{chaudhuri2003paths} needs to be adapted to be able to compute good $k$-trees for all values of $k \in [n]$.
For this purpose, it is important to distinguish between values of $k$ for which a good $k$-tree can be computed and those for which this is not the case.
In the latter case, we define a phantom tree, which is not an actual tree but a placeholder whose costs are interpolated from the costs of the closest two real $k$-trees.
We later argue that these phantom trees are not used by our approximation algorithm, so their inclusion as placeholders does not introduce any issues down the line.

The main result of \Cref{app:finding-good-trees} in the appendix is the following lemma.

\begin{restatable}{lemma}{ugoodphantomtrees}
	\label{lem:u-good-phantom-trees}
	There is a polynomial-time algorithm that computes a set~$\mathcal{T}$ containing, for each value~$k\in \setn_0$, a real or phantom tree~$T_k\in\mathcal{T}$.
	Each such $T_k$ is a good $k$-tree.
\end{restatable}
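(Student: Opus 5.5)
The plan is to run the primal--dual $k$-MST procedure of \citet{chaudhuri2003paths} (a Goemans--Williamson-type algorithm with a uniform vertex penalty $\lambda$). First I will show that every tree it outputs is good for every $\alpha\in[0,1]$; then I will sweep $\lambda$ to get real trees at breakpoints and fill the remaining $k$ with phantom trees, defined so that goodness still holds.

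\textbf{Step 1: the lower bound on an optimal $k$-path.} Consider the rooted LP relaxation whose constraints are $x(\delta(S))\ge 1$ for all sets $S$ with $s\notin S$ that must be reached, with vertex penalties $\lambda$. The key inequality is that an optimal $k$-path $\seq^*$ with respect to~$\alpha$ yields a feasible primal point, and its $\alpha$-cost dominates the dual. Let $P$ be the underlying simple path of $\seq^*$ from $s$ to its last vertex. Each edge of $\seq^*$ is either on $P$ or traversed at least twice, since every non-$P$ edge must be entered and left. Hence
\[
c_\alpha(\seq^*) \;\ge\; c(P) + (1+\alpha)\, c(E(\seq^*)\setminus P).
\]
Chaudhuri et al.\ bound the GW tree cost by $2\sum y_S$ minus a slack. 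That slack is exactly the dual charged to moats crossed only once by the path end, and it comes from the fact that the last vertex's moats are cut by $P$ only once. Redoing their accounting with $(1+\alpha)$ in place of $2$ for doubly traversed edges, I expect to obtain
\[
\tfrac{1+\alpha}{2}\, c(T) \;\le\; c_\alpha(\seq^*) + \lambda\,(\text{penalty terms}),
\]
with the penalty terms cancelling when $|T|=k$. At $\alpha=1$ this is their bound. At $\alpha=0$ it reads $c(T)\le 2c(\seq^*)$, which is consistent with the $2$-approximate $k$-MST bound because $c_0(\seq^*)$ is the cost of a $k$-tree. Verifying this interpolation edge by edge, moat by moat, is the main obstacle. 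Concretely, I must check that an edge traversed twice contributes $(1+\alpha)c_e$ and that this is still at least $(1+\alpha)/2$ times the dual it is charged for in the GW tight-edge argument. I would first try to split each tight edge's cost into dual contributions $y_S$ and handle moats cut once versus at least twice by $\seq^*$ separately.

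\textbf{Step 2: sweeping $\lambda$.} Binary-search on $\lambda$ as in \citet{Garg05} and \citet{chaudhuri2003paths}. Any $\lambda$ whose output tree has exactly $k$ vertices gives a real good $k$-tree. Non-monotonicity in $\lambda$ (\Cref{sec:app:non-monotonicity}) means I cannot assume the sizes increase along the sweep. I would therefore collect polynomially many breakpoints where the combinatorial event order changes; only $O(n^2)$ events can occur. At each breakpoint I record the pair of trees $T_{k_1},T_{k_2}$ obtained at $\lambda^-$ and $\lambda^+$, with $k_1<k<k_2$, both satisfying the Lagrangian bound at a common $\lambda$.

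\textbf{Step 3: phantom trees.} For a $k$ strictly between $k_1$ and $k_2$, define $T_k$ as the convex combination with weight $\mu=(k_2-k)/(k_2-k_1)$ on $T_{k_1}$, and set its cost to $\mu c(T_{k_1})+(1-\mu)c(T_{k_2})$. Averaging the two Lagrangian inequalities at the same $\lambda$ with these weights makes the penalty terms cancel exactly at size $k$. This gives $c(T_k)\le \frac{2}{1+\alpha}c_\alpha(\seq^*_k)$, which is goodness in the sense of Section~2. Each phantom is specified by polynomially many data, so the whole set $\mathcal{T}$ is computed in polynomial time.
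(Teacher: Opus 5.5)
Your proposal has a genuine gap in Step~1, which is the only part of the lemma that goes beyond \citet{chaudhuri2003paths}. You do not prove $\frac{1+\alpha}{2}c(T)\le c_\alpha(\seq^*)+\lambda(\cdots)$; you only expect it from a moat-by-moat redo that you yourself flag as the main obstacle. The one concrete ingredient you offer, $c_\alpha(\seq^*)\ge c(P)+(1+\alpha)\,c(E(\seq^*)\setminus P)$, is false. The edges traversed an odd number of times form an $\{s,t\}$-join, i.e., an $s$--$t$ path plus cycles, so an edge off $P$ can be traversed exactly once. For example, the walk $s,a,b,c,a,t$ traverses each edge of the triangle $a,b,c$ once, and none of these edges lies on $P=(s,a,t)$. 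With unit costs on its five edges and $\alpha\ge 1/2$, this walk is even optimal among walks from $s$ to $t$ visiting $a,b,c$, yet $c_\alpha=5<5+3\alpha$. There is a second problem with your starting point. The LP with $x(\delta(S))\ge 1$ is a tree relaxation, against which the Goemans--Williamson analysis loses a factor~$2$. At $\alpha=1$, goodness requires factor~$1$ against a path, which is why \citet{chaudhuri2003paths} use the path LP $\mathrm{P}_{1,k,t}$, with requirement $2x_v$ on sets avoiding $t$ and $1$ on sets containing $t$.

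The missing idea is that no new accounting is needed. \citet{chaudhuri2003paths} already give $c(T_{\lambda,t})\le\mathrm{D}^*_{1,k,t}\le\mathrm{P}^*_{1,k,t}$ for $k=k_{\lambda,t}$, and the same dual bound for interpolated phantom costs. Moreover, $\mathrm{P}^*_{1,k,t}\le c_1(W)$ for every $k$-walk $W$ ending in $t$, where $c_1$ charges every traversal fully. Apply this to the $\alpha$-optimal such walk $W$, which you may assume uses every edge at most twice: removing two copies of an edge of multiplicity at least $3$ preserves parity and connectivity, so an Euler $s$--$t$ trail on the same vertices remains. An edge used once contributes $c_e$ to both $c_1(W)$ and $c_\alpha(W)$. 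An edge used twice contributes $2c_e=\frac{2}{1+\alpha}(1+\alpha)c_e$ to $c_1(W)$ and $(1+\alpha)c_e$ to $c_\alpha(W)$. Hence $c_1(W)\le\frac{2}{1+\alpha}c_\alpha(W)$, which is exactly goodness. The paper packages this as the LP identity $\mathrm{P}^*_{1,k,t}=\frac{2}{1+\alpha}\mathrm{P}^*_{\alpha,k,t}$ (\Cref{lem:u-optimal-solution,lem:u-P1-Palpha}) and then takes the minimum over all terminals~$t$, a step your sketch omits.

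Step~2 also needs repair. The bound ``$O(n^2)$ events'' limits a single run, not the number of event-order breakpoints over all $\lambda$, for which the naive bound is $(n+m)^{O(n)}$. \Cref{lem:u-running-time-lambda} avoids this by refining, in each of at most $2n$ rounds, only one subinterval that still brackets~$k$. Your Step~3 interpolation is the standard argument. With the transfer above, you can simply reuse the bound $c(T_{k,t})\le\mathrm{D}^*_{1,k,t}$ of \citet{chaudhuri2003paths} for phantom trees.
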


\subsection{The Concatenating Algorithm}
\label{ssec:u-concatenating-algo}

We introduce the \emph{concatenating algorithm}, which takes as input a set of good $k$-trees, selects a subset of these, and concatenates them to obtain a solution for the \psp.
To this end, let $\mathcal{I}$ be a set of good $k$-trees for a subset of values $k\in\setn_0$ and let $I\subseteq \setn_0$ be such that $k\in I$ if and only if~$\mathcal{I}$ contains a good $k$-tree denoted by~$T_k$.
Then the concatenating algorithm with input~$\mathcal{I}$ is defined as follows:
\begin{enumerate}
	\item Construct a directed auxiliary graph $H=(V_H,A_H)$ with vertices $V_H=I$, edges $A_H=\{(i,j) : i<j\}$, and edge lengths $\ell_{i,j}=\bigl[(1+\alpha)n-\alpha j-i\bigr]c(T_j)$.
	\item \label{it:u-path} Compute a shortest $0$--$n$-path $P^*=(n_0,n_1,\dots,n_l)$ with $n_0=0$ and $n_l=n$ in $H$.
	\item \label{it:u-tour} Start with the empty sequence $\seq_\alg=(\cdot)$. \\
		For each phase $j=1,\dots,l$, construct the tour $\seq(T_{n_j})$ according to \Cref{prop:tree-to-tour}.
		Let $\seq_\text{f}$ and $\seq_\text{b}$ be the tour obtained from traversing the sequence $\seq(T_{n_j})$ in forward or backward direction, respectively.
		Let $V_{j}\coloneqq V_{n_j} \big\backslash \big(\bigcup_{i=1}^{j-1} V_{n_i}\big)$ be the set of vertices that are contained in $T_{n_j}$ but in no previous tree $T_{n_i}$ for $1\leq i<j$.
		Let $\seq_{\min} \in\{\seq_{\text{f}},\seq_{\text{b}}\}$ be such that
		$\sum_{v\in V_{j}}\lat_{{\alpha},v}(\seq_\alg+\seq_{\min})$ is minimized.
		Set $\seq_\alg\gets \seq_\alg+\seq_{\min}$.
\end{enumerate}

Note that the concatenating algorithm is only well-defined under the following two assumptions.
First, the input must contain a good $0$-tree and a good $n$-tree.
Second, all trees~$T_{n_0},\dots,T_{n_l}$ on the shortest path must be real trees, because otherwise Step~\ref{it:u-tour} is not well defined.
The second constraint is trivially fulfilled if the input~$\mathcal{I}$ only contains real trees.

\subsection{The Concatenating Algorithm with Phantom Trees}
\label{ssec:u-concatenating-algo-phantom}

We want to apply the concatenating algorithm to the set~$\mathcal{T}$ of good trees whose existence is guaranteed by \Cref{lem:u-good-phantom-trees}.
However, this set may contain phantom trees, so the concatenating algorithm is not necessarily well-defined on this input.
We ignore this issue for now by assuming that all phantom trees are real trees and can therefore be transformed into tours.
Under this assumption, we analyze the approximation ratio of the resulting tour~$\seq_\alg$ and then show how to obtain the same guarantee when restricting to real trees.

To this end, let~$\mathcal{T}$ be the set of good $k$-trees whose existence is guaranteed by \Cref{lem:u-good-phantom-trees}.
Note that $\mathcal{T}$ contains a good (real or phantom) $k$-tree \emph{for all} values of $k\in\setn_0$ and that the trees~$T_0$ and~$T_n$ are real trees.
Before applying the concatenating algorithm to~$\mathcal{T}$, we slightly adjust the cost of the phantom trees~$\mathcal{P}\subset\mathcal{T}$.
More precisely, for a phantom tree~$T_k\in\mathcal{P}$, let $k_l<k$ be maximal and $k<k_r$ be minimal such that $T_{k_l},T_{k_r}\in\mathcal{T}$ are two real trees.
We redefine the cost of $T_k$ to

\begin{align}
	\label{eq:u-cost-phantom-tree}
	c(T_k)\coloneqq (1-\mu)c(T_{k_l})+\mu\, c(T_{k_r})
	\quad \text{with }
	\quad \mu\coloneqq\frac{k-k_l}{k_r-k_l}.
\end{align}
Since the phantom tree~$T_k$ was originally constructed by a linear interpolation of two real trees, this redefinition of the cost can only decrease its cost.
See also~\cite{chaudhuri2003paths} for a more detailed discussion.
Hence, the phantom tree~$T_k$ remains a good $k$-tree.
We denote the set~$\mathcal{T}$ with the adjusted costs of the phantom trees by~$\mathcal{T}'$.
We analyze the approximation guarantee of the tour~$\seq_\alg$ obtained by running the concatenating algorithm on input~$\mathcal{T}'$.
In this section, we prove the following lemma.

\begin{restatable}{lemma}{lemufactorwithphantom}
	\label{lem:u-factor-with-phantom}
	If all trees in~$\mathcal{T}'$ were real trees, the tour~$\seq_\alg$ obtained from the concatenating algorithm with input~$\mathcal{T}'$ is a feasible solution for the \psp with approximation guarantee~$\factor$.
\end{restatable}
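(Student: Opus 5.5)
The plan has three parts: bound the algorithm's total $\alpha$-latency by the length of $P^*$ in $H$, exhibit a randomized $0$--$n$-path in $H$ whose expected length is at most $\factor\cdot\opt$, and lower-bound $\opt$ by the costs of the good trees. Since $P^*$ is a shortest path, it is no longer than the expected length of the random path, which gives the claim. Feasibility is immediate: $n_l=n$, so the last tour $\seq(T_n)$ visits all vertices, and the concatenation is a tour starting and ending in $s$.

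\emph{Upper bound.} Consider a phase $j$ with $i=n_{j-1}$ and $j'=n_j$. Before the phase starts, the previous phases have $\alpha$-cost at most $\sum_{t<j}(1+\alpha)c(T_{n_t})$ by \Cref{prop:tree-to-tour}. Concatenation can only reduce $\alpha$-costs further, because of the discounting. The phase-$j$ tour adds at most $(1+\alpha)c(T_{j'})$. I will charge this amount to every vertex still unvisited after phase $j$; there are at most $n-j'$ of them, since $T_{j'}$ contains $j'$ vertices other than $s$. For the new vertices $V_j$, of which there are at most $j'-i$, I will use the forward/backward choice. The key claim is that, for the doubled-DFS tour of a tree $T$, the within-phase latency of any vertex $v$ in the forward direction plus that in the backward direction is at most $2c(T)$. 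The reason is that every edge is traversed once in each direction, and the first-visit prefixes in the two directions together cover each edge at most twice at full cost. Averaging over the two directions, each new vertex then pays at most $c(T_{j'})$ beyond the prefix. Summing over all phases and regrouping by the phase in which each cost is incurred gives
\[
\lat_\alpha(\seq_\alg)\le \sum_j \bigl[(1+\alpha)(n-n_j)+(n_j-n_{j-1})\bigr]c(T_{n_j})=\sum_j \ell_{n_{j-1},n_j}.
\]
This matches the definition of $\ell$.

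\emph{Lower bound and random path.} Let $\seq^*$ be an optimal tour and let $v_k$ be the $k$-th distinct vertex it visits. The prefix $\seq^*_k$ is a $k$-path, so $\lat_{\alpha,v_k}(\seq^*)\ge c_\alpha(\text{opt.\ }k\text{-path})\ge\frac{1+\alpha}{2}c(T_k)$, using that every $T_k\in\mathcal{T}'$ is good (\Cref{lem:u-good-phantom-trees}). The interpolation costs \eqref{eq:u-cost-phantom-tree} only decrease costs, so phantom trees stay good. Hence $\opt\ge\frac{1+\alpha}{2}\sum_{k=1}^n c(T_k)$. For the random path, I fix $\gamma>1$ and draw $U\in[0,1)$ uniformly. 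I set thresholds $L_t=\gamma^{t+U}$ and, for each $t$, let $k_t$ be the largest $k$ with $c(T_k)\le L_t$, then take the path through $0,k_0,k_1,\dots,n$. This requires $c(T_k)$ to be available for all $k$, which is exactly why phantom trees are allowed here. I then rewrite $\sum_j \ell_{n_{j-1},n_j}$ as a sum over $k$: each index $k$ lying between consecutive chosen trees pays the cost of later trees. For $k$ with threshold $c(T_k)\approx\gamma^{t}$, the expected charge is a geometric sum, roughly $(1+\alpha)\sum_{s\ge t}\gamma^{s+U}$ for the full tours of later phases plus $\gamma^{t+U}$ for the phase that covers $k$. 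Taking expectation over $U$ relates this to $c(T_k)$ with a factor $\frac{1}{\ln\gamma}\bigl(\frac{(1+\alpha)\gamma}{\gamma-1}-\alpha\bigr)$ or a similar expression. Dividing by the $\frac{1+\alpha}{2}$ in the lower bound and optimizing over $\gamma$ should produce the Lambert-$W$ expression $\factor$; for $\alpha=0$ the optimum is $\gamma=\mathrm{e}$, giving $2\mathrm{e}$.

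\emph{Main obstacle.} I expect the per-index expectation computation to be the hardest step. One part is the careful accounting of the term $-\alpha j-i$ in $\ell$, which rewards the new vertices of a phase for paying only about one $c$ rather than $(1+\alpha)c$. The other part is ensuring that the last threshold is capped at $n$ without loss. After that comes the one-variable optimization in $\gamma$, which yields $W(\alpha/\mathrm{e})$. Monotonicity of $c(T_k)$ in $k$ is not guaranteed, so I would define the $k_t$ through the lower envelope of the costs; the interpolation only helps with this.
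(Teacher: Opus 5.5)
Your plan has the same architecture as the paper's proof. You bound $\lat_\alpha(\seq_\alg)$ by $\ell(P^*)$ through the forward/backward choice, build a random geometric path whose expected length is compared to $\opt$ via goodness of the trees, and optimize over $\gamma$. Comparing directly with $c(T_k)$ rather than with the optimal $k$-path cost, as the paper does, is an inessential variation, and no lower envelope is needed, since $k_t\ge k$ whenever $c(T_k)\le L_t$.

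The gap is the step you call the crux. You do not carry it out, and your sketch of it is wrong in two places:
\begin{itemize}
\item After regrouping $\ell(P)=\sum_k\psi_P(k)$, an index $k$ with $n_{j-1}<k\le n_j$ is charged $c(T_{n_j})+(1+\alpha)\sum_{i<j}c(T_{n_i})$, i.e.\ the tours of the \emph{earlier} phases. A charge $(1+\alpha)\sum_{s\ge t}\gamma^{s+U}$ over later phases is not bounded by any constant multiple of $c(T_k)$.
\item Your factor $\frac{1}{\ln\gamma}\bigl(\frac{(1+\alpha)\gamma}{\gamma-1}-\alpha\bigr)=\frac{\gamma+\alpha}{(\gamma-1)\ln\gamma}$ tends to $0$ as $\gamma\to\infty$, so it cannot be right. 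It is also inconsistent with your own claim that $\alpha=0$ gives $2\mathrm{e}$. The factor $\gamma-1$ produced by integrating $\gamma^U$ is missing.
\end{itemize}
Here is the correct computation. With $L_s=\gamma^{s+U}$, an index first covered at threshold $s'$ has $\psi_P(k)\le L_{s'}+(1+\alpha)\sum_{s<s'}L_s\le\frac{\gamma+\alpha}{\gamma-1}L_{s'}$. Write $c(T_k)=d\gamma^t$ with $d\in[1,\gamma)$. The covering threshold is at most $t$ if $\gamma^U\ge d$, and at most $t+1$ otherwise. Hence
\[
\E\bigl[\psi_P(k)\bigr]\le\frac{\gamma+\alpha}{\gamma-1}\,\gamma^t\Bigl(\int_{\log_\gamma d}^{1}\gamma^U\,\mathrm{d}U+\gamma\int_0^{\log_\gamma d}\gamma^U\,\mathrm{d}U\Bigr)=\frac{\gamma+\alpha}{\gamma-1}\cdot\frac{(\gamma-1)\,d\gamma^t}{\ln\gamma}=\frac{\gamma+\alpha}{\ln\gamma}\,c(T_k).
\]
Combined with $\opt\ge\frac{1+\alpha}{2}\sum_k c(T_k)$, this gives the ratio $\frac{2(\gamma+\alpha)}{(1+\alpha)\ln\gamma}$. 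Its minimum over $\gamma>1$ is attained at $\gamma=\mathrm{e}^{1+W(\alpha/\mathrm{e})}$ and equals $\factor$.

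There is also a smaller flaw in your upper bound. The claim $|V_j|\le n_j-n_{j-1}$ is false when the chosen trees are not nested, since phase $j$ can discover up to $n_j$ new vertices. So the two charges cannot be bounded separately as you do. Your displayed inequality is still true, but it needs an exchange argument:
\begin{itemize}
\item Put $a_j\coloneqq c(T_{n_j})+(1+\alpha)\sum_{i<j}c(T_{n_i})$. This is nondecreasing in $j$, because $a_{j+1}-a_j=c(T_{n_{j+1}})+\alpha c(T_{n_j})\ge 0$.
\item Your direction argument gives $\sum_{v\in V_j}\lat_{\alpha,v}(\seq_\alg)\le|V_j|\,a_j$.
\item We have $\sum_{i\le j}|V_i|\ge n_j$ for all $j$, with equality at $j=l$. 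Abel summation then yields $\sum_j|V_j|\,a_j\le\sum_j(n_j-n_{j-1})\,a_j=\ell(P^*)$.
\end{itemize}
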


For the remainder of this section, we denote by $P^*$ the shortest $0$--$n$-path in $H$ computed in Step~\ref{it:u-path} of the concatenating algorithm on input~$\mathcal{T}'$.
Since each vertex $i\in V_H$ corresponds to a unique tree $T_i\in\mathcal{T}'$, we often refer to the vertices on $P^*$ as the chosen trees.
Furthermore, we extend the procedure of Step~\ref{it:u-tour} to an arbitrary $0$--$n$-path $P=(n_0,n_1,\dots,n_{l'})$ in $H$.
More precisely, we introduce a randomized variant of Step~\ref{it:u-tour} that runs as follows.
We start with the empty sequence $\seq_P=(\cdot)$.
For each phase $j=1,\dots,l'$, we construct the tour $\seq(T_{n_j})$ according to \Cref{prop:tree-to-tour}.
Then we pick a traversal direction (forward or backward) of the tour $\seq(T_{n_j})$ uniformly at random and set $\seq_P \gets \seq_P + \seq(T_{n_j})$.

We proceed to analyze the sequence~$\seq_\alg$.
First, observe that $P^*$ contains vertex $n$.
Thus, the corresponding good $n$-tree $T_n$ is traversed and appended to $\seq_\alg$ in Step~\ref{it:u-tour}.
This ensures that the returned tour $\seq_\alg$ is a feasible solution to the \psp with finite total $\alpha$-latency, i.e., $\seq_\alg$ visits all vertices.
It remains to analyze the approximation factor obtained by $\seq_\alg$.
To this end, we first show that for any given $0$--$n$-path $P$ in $H$, the expected total $\alpha$-latency $\E[\lat_{\alpha}(\seq_P)]$ of $\seq_P$ is at most the length $\ell(P)$ of $P$ in $H$.
Next, we use a probabilistic argument to prove the existence of a $0$--$n$-path in $H$ whose length is at most $\factor$ times the total $\alpha$-latency of an optimal tour.
Finally, we explain how these two results imply that the concatenating algorithm on input~$\mathcal{T}'$ indeed computes a tour with an approximation ratio of~$\factor$.

In that direction, we introduce the function~$\psi_P$ such that $\psi_P(k)$ gives an upper bound on the latency of the $k$-th distinct vertex in~$V \setminus \{ s \}$ visited by the tour constructed from path~$P$ in~$H$.
More formally, for a fixed $0$--$n$-path $P=(n_0,n_1,\dots,n_l)$ in $H$, the function $\psi_P: \setn_0 \rightarrow \R_{\geq 0}$ is defined by
\begin{align*}
	\psi_P(k) \coloneqq
	\begin{cases}
		0 & \qquad \text{if } k=0\\
		c(T_{n_{j}})+(1+\alpha)\sum_{i=0}^{j-1}c(T_{n_i}) & \qquad \text{if } n_{j-1}< k \leq n_{j}, \text{for some } j\in \{ 1,\dots,l \}.
	\end{cases}
\end{align*}
This definition is well-defined as $n_0=0$ and $n_l=n$.
We obtain the following lemma whose proof is deferred to \Cref{app:u-expected-latency}.

\begin{restatable}{lemma}{uexpectedlatency}
    \label{lem:u-expected-latency}
    Let $\seq_P$ be the randomized tour obtained from a~$0$--$n$-path~$P$ in~$H$ and let $v\in V$ be the $k$-th distinct vertex of $V\setminus \{ s \}$ in $\seq_P$ for some $k\in \setn_0$.
    Then we have $
    	\E[\lat_{\alpha,v}(\seq_P)]
        \leq \psi_P(k)$.
\end{restatable}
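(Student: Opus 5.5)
We bound the latency of $v$ phase by phase. Let $P=(n_0,\dots,n_l)$. Since tree $T_{n_i}$ contains at least $n_i+1$ distinct vertices including $s$, after phase $i$ the tour $\seq_P$ has visited at least $n_i$ distinct vertices of $V\setminus\{s\}$. Hence if $v$ is the $k$-th distinct vertex with $n_{j-1}<k\le n_j$, then $v$ is first visited no later than during phase $j$. This is because $T_{n_j}$ contains at least $n_j\ge k$ non-root vertices, so the $k$-th new vertex appears by the end of that phase. (We take $n_0=0$ with $T_0$ the trivial tree of cost $0$, so it contributes nothing.)

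We split $\lat_{\alpha,v}(\seq_P)$ into the $\alpha$-cost of the completed phases $1,\dots,j'-1$, where $j'\le j$ is the phase in which $v$ is first visited, plus the $\alpha$-cost of the prefix of phase $j'$ up to $v$. \Cref{prop:tree-to-tour} gives $c_\alpha(\seq(T_{n_i}))\le(1+\alpha)c(T_{n_i})$. The $\alpha$-cost of a subsequence inside a concatenation is at most its standalone $\alpha$-cost, since earlier traversals only make later ones cheaper. So the completed phases contribute at most $(1+\alpha)\sum_{i<j'}c(T_{n_i})$. If $j'<j$, the partial phase contributes at most $(1+\alpha)c(T_{n_{j'}})\le c(T_{n_{j'}})+\alpha c(T_{n_{j'}})$, which is at most $(1+\alpha)\sum_{i\le j-1} c(T_{n_i})$. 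This already bounds the total by $\psi_P(k)$ deterministically.

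The main case is $j'=j$, where we need the expected prefix cost inside phase $j$ to be at most $c(T_{n_j})$. Here the random direction is used. Take the tour $\seq(T_{n_j})$ from \Cref{prop:tree-to-tour}, namely a DFS doubling with $\alpha$-cost $(1+\alpha)c(T)$. Fix the set of vertices of $T_{n_j}$ that are new, so $v$ is a fixed vertex of the tree. Its first visit in the forward direction and its last visit in the backward direction are not directly complementary. I would instead argue per edge: in the DFS Euler tour each tree edge $e$ is traversed exactly twice, once down and once up. Consider the standalone cost, which upper bounds the cost in concatenation. The forward prefix to the first visit of $v$ and the backward prefix to the first visit of $v$ together traverse each edge at most twice in total, with each edge's first traversal in that prefix charged at most $c_e$. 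More precisely, for edges on the $s$--$v$ path, each prefix crosses $e$ once, at full cost, giving $2c_e$. Every other edge is fully traversed, at cost $(1+\alpha)c_e$, in exactly one of the two prefixes, since the subtree lies before or after $v$ in DFS order. So the sum of the two prefix costs is at most $2c(T)$, and because each direction has probability $1/2$, the expected prefix cost is at most $c(T_{n_j})$. The problem is that the "other edges" are charged $(1+\alpha)c_e\le 2c_e$, so the sum is at most $2c(T)$ when counting $2c_e$ on path edges and $(1+\alpha)c_e$ elsewhere; this is fine since $\alpha\le1$.

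One issue remains: even if $v$ is in $T_{n_j}$, its first visit might occur before phase $j$. But that is the case $j'<j$, already handled. Also, which vertex is the $k$-th depends on the random directions. To handle this, I condition on the directions of phases $1,\dots,j-1$. Then the set of already visited vertices is fixed, and $v$ is determined up to the phase-$j$ randomness. In this case I bound the latency of the $k$-th new vertex by the latency of a fixed vertex $u$ in phase $j$ chosen so that $k$-th-ness is monotone. I expect this conditioning, making sure the bound holds for the $k$-th vertex rather than for a fixed vertex, to be the main obstacle. I would try to show that in either direction the $k$-th new vertex arrives no later than the point where the standalone prefix cost reaches the per-edge accounting above, and then average over the two directions.
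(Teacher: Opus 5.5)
Your first three paragraphs are sound and follow the paper's route. The paper also bounds the earlier phases by $(1+\alpha)\sum_{i<j}c(T_{n_i})$ and then simply asserts that the average of the forward and backward latency of $v$ inside $\seq(T_{n_j})$ is at most $c(T_{n_j})$. Your per-edge accounting supplies the justification the paper omits: path edges cost $2c_e$ in total, and every hanging subtree is fully traversed, at cost $(1+\alpha)c_e$, in at most one of the two prefixes.

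\textbf{The gap is your last paragraph.} The obstacle you single out cannot be overcome, because for the \emph{random} vertex that occupies rank $k$ the inequality is false. Take a star with center $s$ and leaves $a,b,c$, all edges of cost $1$, so $n=3$. The star $T_3$ is a good $3$-tree, since an optimal $3$-path has $\alpha$-cost $3+2\alpha\geq \frac{3(1+\alpha)}{2}$. The path $P=(0,3)$ has $\psi_P(3)=c(T_3)=3$. Every Euler tour of the doubled star returns to $s$ between leaves. Hence in \emph{both} directions the third new vertex is reached at $\alpha$-cost $1+\alpha+1+\alpha+1=3+2\alpha$, which exceeds $\psi_P(3)$ for every $\alpha>0$. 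By contrast, each fixed leaf has expected latency $2+\alpha\leq 3$. So no argument that the $k$-th new vertex arrives early ``in either direction'' can succeed, and neither can any choice of a fixed vertex $u$ with monotone rank.

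\textbf{What to do instead.} Read $v$ as a fixed vertex. This is what the paper's proof implicitly does, and it is all that is used afterwards. Let $U_j$ be the union of the vertex sets of $T_{n_1},\dots,T_{n_j}$. Each phase visits its whole tree, so in every realization the first $|U_j\setminus\{s\}|\geq n_j$ new vertices are exactly $U_j\setminus\{s\}$. Hence, if $v$ is the $k$-th vertex in some realization with $n_{j-1}<k\leq n_j$, then $v\in U_j$. Moreover, the phase $j'$ in which $v$ is first visited is the deterministic index $\min\{i : v\in V(T_{n_i})\}\leq j$. In particular, no conditioning on earlier directions is needed. Your case $j'<j$ and your averaging argument for $j'=j$ then give $\E[\lat_{\alpha,v}(\seq_P)]\leq\psi_P(k)$ directly.

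For the subsequent bound $\E[\lat_\alpha(\seq_P)]\leq\sum_{k}\psi_P(k)$, assign to each vertex its rank in one fixed realization, say with all phases traversed forward. This is a bijection of $V\setminus\{s\}$ onto $\setn$, so summing the fixed-vertex bounds gives the claim.

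A small correction: edges in the subtree below $v$ lie in neither prefix, so ``exactly one of the two prefixes'' should read ``at most one''.
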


For a fixed $0$--$n$-path $P$ in $H$, \Cref{lem:u-expected-latency} yields
$
	\E[\lat_{\alpha}(\seq_P)]=
	\sum_{v\in V} \E[\lat_{\alpha,v}(\seq_P)]
	\leq \sum_{k=0}^n \psi_P(k)$.
Recall that in Step~\ref{it:u-tour} of the concatenating algorithm, we pick the traversal direction of each subtour such that
$\sum_{v\in V_{j}}\lat_{\alpha,v}(\seq_\alg+\seq_{\min})$ is minimized, where $V_{j}\coloneqq V_{n_j}\big\backslash \big(\bigcup_{i=1}^{j-1} V_{n_i}\big)$.
Hence, for the tour $\seq_\alg$ we obtain the deterministic bound
\begin{align}
	\label{eq:u-expected-latency}
	\lat_{\alpha}(\seq_\alg)
	\leq \sum_{v\in V} \lat_{\alpha,v}(\seq_{P^*})
	\leq \sum_{k=0}^n \psi_{P^*}(k).
\end{align}

Next, we use the following identity, which expresses the length of a path $P$ in $H$ in terms of $\psi_P$; the proof is deferred to \Cref{app:u-path-pi}.

\begin{restatable}{lemma}{upathpi}
    \label{lem:u-path-pi}
    For a $0$--$n$-path $P$ in $H$ we have
    $
    	\sum_{k=0}^n \psi_{P}(k) = \ell(P).
    $
\end{restatable}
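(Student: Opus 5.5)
The identity follows from a direct double-counting computation. Write $P=(n_0,n_1,\dots,n_l)$ with $n_0=0$ and $n_l=n$. Its arcs are $(n_{j-1},n_j)$ for $j=1,\dots,l$, so by the definition of the edge lengths in $H$,
\begin{align*}
\ell(P)=\sum_{j=1}^{l}\bigl[(1+\alpha)n-\alpha n_j-n_{j-1}\bigr]c(T_{n_j}).
\end{align*}
The plan is to expand $\sum_{k=0}^n\psi_P(k)$, regroup it by the coefficient of each $c(T_{n_i})$, and compare these coefficients with those in $\ell(P)$.

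First, the term $k=0$ contributes $0$. The remaining indices $k\in\{1,\dots,n\}$ are partitioned into the blocks $(n_{j-1},n_j]$ for $j=1,\dots,l$. Block $j$ contains exactly $n_j-n_{j-1}$ integers, and $\psi_P$ is constant on it. Hence
\begin{align*}
\sum_{k=0}^n\psi_P(k)=\sum_{j=1}^{l}(n_j-n_{j-1})\,c(T_{n_j})+(1+\alpha)\sum_{j=1}^{l}(n_j-n_{j-1})\sum_{i=0}^{j-1}c(T_{n_i}).
\end{align*}

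Next, I exchange the order of summation in the second term. For a fixed $i$, the quantity $c(T_{n_i})$ appears for every $j$ with $i<j\le l$. Its coefficient is therefore $(1+\alpha)\sum_{j=i+1}^{l}(n_j-n_{j-1})$, which telescopes to $(1+\alpha)(n_l-n_i)=(1+\alpha)(n-n_i)$. Combining this with the first term, the total coefficient of $c(T_{n_i})$ for $i\ge1$ is
\begin{align*}
(n_i-n_{i-1})+(1+\alpha)(n-n_i)=(1+\alpha)n-\alpha n_i-n_{i-1}.
\end{align*}
This is exactly the coefficient of $c(T_{n_i})$ in $\ell(P)$.

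The only remaining point, and the one mild obstacle, is the index $i=0$. It appears in the double sum with coefficient $(1+\alpha)n$ but has no counterpart in $\ell(P)$. I handle it by showing $c(T_0)=0$. An optimal $0$-path is the empty path, which has $\alpha$-cost $0$. Since $T_0$ is a good $0$-tree (it is a real tree in $\mathcal{T}'$), the definition of a good tree gives $c(T_0)\le\frac{2}{1+\alpha}\cdot 0=0$. So the $i=0$ term vanishes and the two expressions coincide, proving $\sum_{k=0}^n\psi_P(k)=\ell(P)$.
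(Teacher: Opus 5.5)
Your proof is correct and is essentially the paper's argument: group the indices into the blocks $(n_{j-1},n_j]$, swap the order of summation, and telescope to obtain the coefficient $(1+\alpha)n-\alpha n_j-n_{j-1}$ of each $c(T_{n_j})$. Your explicit handling of the $i=0$ term is a small improvement in care. The paper silently drops that term, relying on $c(T_0)=0$, which also holds directly because $T_0=(\cdot)$ is the trivial tree by construction.
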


The following lemma is the technically most challenging part of the analysis, as it compares the length of a $0$--$n$-path $P$ in $H$ against the total $\alpha$-latency of the optimal solution.
The proof uses similar ideas as in~\cite{chaudhuri2003paths} and~\cite{GriesbachHKS26}.
We consider a randomized sequence of good $k$-trees with exponentially increasing costs where $\gamma \in (1,\infty)$ is a factor determining the rate of the exponential growth.
Then the expected length of that path can be bounded by a constant factor times the optimal total $\alpha$-latency.
A probabilistic argument then yields that this bound is also attained by the shortest $0$--$n$-path $P^*$ in $H$.
Specifically, we obtain the following bound. Its proof can be found in \Cref{app:u-opt-bound}.

\begin{restatable}{lemma}{uoptbound}
   \label{lem:u-opt-bound}
    Let $\seq^* \in \Seq$ be an optimal solution for the \psp.
    Then there exists a $0$--$n$-path $P$ in $H$ such that
    $
    	\ell(P)\leq 2\frac{\gamma+\alpha}{(1+\alpha)\ln\gamma} \lat_{\alpha} (\seq^*)
    $
    for any $\gamma\in (1,\infty)$.
\end{restatable}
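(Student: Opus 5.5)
The plan is a randomized geometric-threshold argument in the spirit of \citet{chaudhuri2003paths} and \citet{GriesbachHKS26}. By \Cref{lem:u-path-pi} it suffices to exhibit a (random) $0$--$n$-path $P$ in $H$ with $\E\bigl[\sum_{k=0}^n\psi_P(k)\bigr]\le 2\frac{\gamma+\alpha}{(1+\alpha)\ln\gamma}\lat_\alpha(\seq^*)$. Some realization of $P$ then attains at most the expectation.

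\emph{Step 1: compare the good trees to $\seq^*$.} Fix $\gamma>1$. Let $\seq^*$ be optimal and let $c_k\coloneqq c_\alpha(\seq^*_k)$ be the $\alpha$-latency of the $k$-th distinct vertex of $V\setminus\{s\}$ visited by $\seq^*$. Then $c_0=0$, $c_1\le c_2\le\dots\le c_n$, and $\lat_\alpha(\seq^*)=\sum_k c_k$. Moreover $c_1>0$, since edges at $s$ have positive cost. The prefix $\seq^*_k$ is a $k$-path, so an optimal $k$-path has $\alpha$-cost at most $c_k$. Since every $T_k\in\mathcal{T}'$ is good (by \Cref{lem:u-good-phantom-trees} and the remark that the cost adjustment \eqref{eq:u-cost-phantom-tree} keeps phantom trees good), we get
\[
c(T_k)\le \tfrac{2}{1+\alpha}\,c_k \quad\text{for all } k\in\setn_0 .
\]

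\emph{Step 2: build the random path.} Draw $U$ uniformly from $[0,1)$ and set thresholds $L_i\coloneqq\gamma^{i+U}$ for $i\in\Z$. For each $i$, let $k_i\coloneqq\max\{k: c_k\le L_i\}$. The sequence $(k_i)$ is nondecreasing, equals $0$ for $i$ small enough (because $c_1>0$), and equals $n$ for $i$ large enough. Let $P=(n_0=0,n_1,\dots,n_l=n)$ be the sequence of distinct values of the $k_i$. This is a $0$--$n$-path in $H$, because arcs $(i,j)$ exist for all $i<j$. Assign to each $n_j$ the smallest threshold index $i_j$ with $k_{i_j}=n_j$. The indices $i_j$ are strictly increasing, so $i_{j'}\le i_j-(j-j')$ for $j'<j$. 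Step 1 gives $c(T_{n_{j'}})\le\frac{2}{1+\alpha}\gamma^{i_{j'}+U}$.

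\emph{Step 3: bound $\psi_P$ pointwise.} Take $k$ with $n_{j-1}<k\le n_j$. By the definition of $\psi_P$ and a geometric sum,
\[
\psi_P(k)\le \tfrac{2}{1+\alpha}\Bigl[\gamma^{i_j+U}+(1+\alpha)\textstyle\sum_{m<i_j}\gamma^{m+U}\Bigr]
=\tfrac{2}{1+\alpha}\,\gamma^{i_j+U}\,\tfrac{\gamma+\alpha}{\gamma-1}.
\]
We also have $k>k_{i_j-1}$, so $c_k>\gamma^{i_j-1+U}$. Hence $\gamma^{i_j+U}$ is exactly the smallest threshold value at least $c_k$, namely $c_k\gamma^{\,1-\{\log_\gamma c_k-U\}'}$, where the exponent $1-\{\cdot\}'$ is uniformly distributed on $(0,1]$ because $U$ is uniform. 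Therefore
\[
\E\bigl[\gamma^{i_j+U}\bigr]=c_k\int_0^1\gamma^x\,dx=c_k\frac{\gamma-1}{\ln\gamma},
\]
and so $\E[\psi_P(k)]\le 2\frac{\gamma+\alpha}{(1+\alpha)\ln\gamma}c_k$. For $k=0$ both sides are $0$. Summing over $k$ and applying \Cref{lem:u-path-pi} gives the claimed bound on $\E[\ell(P)]$, and hence the existence of a suitable $P$.

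\emph{Main obstacle.} I expect the delicate point to be the bookkeeping when several thresholds map to the same $k_i$, or when the path passes through phantom trees. Handling the first issue by choosing the smallest index $i_j$ makes the geometric sum over earlier trees an upper bound even after duplicates are removed, since we only drop nonnegative terms. The second issue is harmless here, because the lemma only concerns lengths in $H$ and uses only the goodness of trees in $\mathcal{T}'$. The remaining care goes into the uniformity argument for the rounding $c_k\mapsto\gamma^{i_j+U}$, including the boundary case $c_k=L_i$.
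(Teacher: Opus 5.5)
Your proof is correct and is essentially the paper's argument: a random geometric grid with offset $\gamma^U$, the geometric-series bound $\psi_P(k)\le\frac{2(\gamma+\alpha)}{(1+\alpha)(\gamma-1)}L$ where $L$ is the first grid value at least the relevant latency, and $\int_0^1\gamma^x\,dx=\frac{\gamma-1}{\ln\gamma}$. The only difference is that you place the thresholds on the optimal prefix latencies $c_k$, whereas the paper places them on the scaled tree costs $c(T_k)$. This makes ``$k$ is covered exactly at the first threshold $\ge c_k$'' hold by construction and replaces the paper's two cases $d\le b$ and $d>b$ by a single uniform-rounding step. The price is a path that depends on the unknown optimum, which is harmless since only existence is needed. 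A small slip: the rounding exponent is uniform on $[0,1)$ rather than $(0,1]$, which does not affect the integral.
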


\Cref{lem:u-opt-bound} yields an approximation ratio parametrized in $\gamma\in (1,\infty)$.
To obtain the smallest approximation ratio possible, we optimize over $\gamma$.
To this end, we denote by $W : \R_{\geq 0} \to \mathbb{R}_{\geq 0}$ the Lambert-$W$ function that assigns~$x\geq 0$ the unique value~$w\geq 0$ such that~$w\mathrm{e}^w=x$. The proof of the following result is deferred to \Cref{app:min-gamma}.

\begin{restatable}{lemma}{mingamma}
    \label{lem:min-gamma}
    For $\gamma \in (1,\infty)$, let $h$ be defined as
$\smash{h(\gamma) \coloneqq 2 \frac{\gamma + \alpha}{(1+\alpha) \ln(\gamma)}}$.
    Then $h$ is minimized for
    $\smash{\gamma^* = e^{1 + W(\alpha/e)}}$.
    In particular, we have $\smash{\factor=h(\gamma^*)}$.
\end{restatable}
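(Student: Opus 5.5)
This is one-variable calculus: find the critical point of $h$ on $(1,\infty)$, check that it is the global minimum, and evaluate $h$ there to recover the formula for $\factor$.

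\textbf{Derivative and critical point.} The factor $2/(1+\alpha)$ is a positive constant, so it suffices to minimize $g(\gamma)\coloneqq (\gamma+\alpha)/\ln\gamma$ on $(1,\infty)$. Differentiating gives
\[
g'(\gamma)=\frac{\ln\gamma-(\gamma+\alpha)/\gamma}{(\ln\gamma)^2}.
\]
Its sign is the sign of $\phi(\gamma)\coloneqq \ln\gamma-1-\alpha/\gamma$. The function $\phi$ is strictly increasing, since $\phi'(\gamma)=1/\gamma+\alpha/\gamma^2>0$. Its limits are $\phi(1^+)=-1-\alpha<0$ and $\phi(\gamma)\to\infty$ as $\gamma\to\infty$. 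Hence $\phi$ has a unique root $\gamma^*$, with $g$ decreasing before it and increasing after it, so $\gamma^*$ is the global minimizer. The boundary behaviour confirms this: $g\to\infty$ both as $\gamma\to 1^+$ and as $\gamma\to\infty$.

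\textbf{Solving for the root.} Write $\gamma^*=e^{1+w}$. The equation $\ln\gamma=1+\alpha/\gamma$ becomes $1+w=1+\alpha e^{-1-w}$, that is, $we^{w}=\alpha/e$. So $w=W(\alpha/e)$, which is nonnegative. This gives $\gamma^*=e^{1+W(\alpha/e)}$ as claimed. For $\alpha=0$ we get $W(0)=0$ and $\gamma^*=e$.

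\textbf{Evaluating the minimum.} At the critical point, $\gamma^*\ln\gamma^*=\gamma^*+\alpha$, so
\[
h(\gamma^*)=\frac{2}{1+\alpha}\cdot\frac{\gamma^*+\alpha}{\ln\gamma^*}=\frac{2\gamma^*}{1+\alpha}.
\]
\begin{itemize}
\item For $\alpha=0$ this gives $2e$, matching the definition of $\factor$.
\item For $\alpha>0$, use the identity $e^{W(x)}=x/W(x)$ with $x=\alpha/e$. Then $\gamma^*=e\cdot\frac{\alpha/e}{W(\alpha/e)}=\frac{\alpha}{W(\alpha/e)}$, so $h(\gamma^*)=\frac{2\alpha}{(1+\alpha)W(\alpha/e)}=\factor$.
\end{itemize}

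The only mildly delicate point is that this closed form is undefined at $\alpha=0$, since it reads $0/0$. This is why the $\alpha=0$ case is checked directly via $\gamma^*=e$.
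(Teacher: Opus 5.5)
Your proof is correct and takes the same route as the paper. Both differentiate, reduce the critical-point equation to $\gamma^*(\ln\gamma^*-1)=\alpha$, solve it with the Lambert-$W$ substitution $\gamma^*=\mathrm{e}^{1+w}$, and simplify $h(\gamma^*)$ using $W(x)\mathrm{e}^{W(x)}=x$, handling $\alpha=0$ separately. Your two additions make it slightly cleaner without changing the argument: the monotonicity of $\phi$ explicitly certifies a unique critical point and hence a global minimum, and the shortcut $h(\gamma^*)=2\gamma^*/(1+\alpha)$ from the critical-point identity replaces the paper's direct substitution.
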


Combining all previous results yields the proof for~\Cref{lem:u-factor-with-phantom}; this proof can be found \Cref{app:lem:u-factor-with-phantom}.

We have shown so far that the concatenating algorithm with input~$\mathcal{T}'$ (the set of trees guaranteed by \Cref{lem:u-good-phantom-trees} with adjusted costs for phantom trees) computes a solution to the \psp with approximation guarantee~$\factor$.
However, we obtained this result only under the assumption that all trees were real trees, i.e., that they could be transformed into tours.
It remains to show that we can obtain the same approximation guarantee when using only real trees as input to the concatenating algorithm.
The issue with phantom trees as input arises only if phantom trees lie on the shortest $0$--$n$-path~$P^*$ computed by the concatenating algorithm, since these trees are the ones traversed to obtain the final tour~$\seq_\alg$.
As a first step towards proving \Cref{thm:u-main}, we thus show that whenever the shortest $0$--$n$-path~$P^*$ on input~$\mathcal{T}'$ contains a phantom tree, there exists another $0$--$n$-path with the same length that visits only real trees.
The proof of the lemma follows the argument of~\cite{archer2008faster,chaudhuri2003paths} and can be found in \Cref{app:u-no-phantom-trees}.

\begin{restatable}{lemma}{unophantomtrees}
\label{lem:u-no-phantom-trees}
	Let $H$ be the auxiliary graph constructed by the concatenating algorithm on input~$\mathcal{T}'$.
	Then there exists a shortest $0$--$n$-path~$P$ in~$H$ that does not visit any vertex whose corresponding tree is a phantom tree.
\end{restatable}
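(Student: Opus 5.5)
Let $P$ be a shortest $0$--$n$-path in $H$ that, among all shortest paths, visits the fewest phantom vertices. Suppose $P$ visits a phantom vertex $k$, and let $k_l<k<k_r$ be the neighbouring real trees used in \eqref{eq:u-cost-phantom-tree}. Every vertex strictly between $k_l$ and $k_r$ is phantom. The plan is to show that a whole maximal run of consecutive phantom vertices on $P$ can be moved together to $k_l$ or to $k_r$ without increasing the length. This reduces the number of phantom vertices on the path, contradicting minimality. The approach mirrors the argument of \citet{chaudhuri2003paths,archer2008faster}: the path length is a linear function of the phantom cost parameter, so it is minimized at an endpoint of the interpolation interval.

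Concretely, fix the maximal run $i<k^{(1)}<\dots<k^{(m)}<j$ of phantom vertices on $P$ lying strictly between $k_l$ and $k_r$. Here $i$ is the predecessor and $j$ the successor on $P$, and $i\le k_l$, $j\ge k_r$ or $j=k_r$. (The endpoints $0$ and $n$ are real, so $i$ and $j$ exist.) Write $c(T_k)=a+bk$ for $k_l\le k\le k_r$, which is affine by \eqref{eq:u-cost-phantom-tree}. The contribution of the run to $\ell(P)$ is
\[
\sum_{t=1}^{m}\bigl[(1+\alpha)n-\alpha k^{(t)}-k^{(t-1)}\bigr]c(T_{k^{(t)}})+\bigl[(1+\alpha)n-\alpha j-k^{(m)}\bigr]c(T_j),
\]
with $k^{(0)}=i$. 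I would first collapse the run to a single phantom vertex. Deleting an interior phantom vertex changes the length by a quantity that I expect, using linearity of $c$, to be expressible so that either deletion is not worse or one can shift. Rather than rely on that, the cleaner route is to shift all run vertices simultaneously: replace $k^{(t)}$ by $k^{(t)}+\delta$ for a common real shift $\delta$. The length is then a polynomial in $\delta$ of degree at most $2$, because each term is (affine in $\delta$)$\times$(affine in $\delta$).

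The main obstacle is that this polynomial may be convex in $\delta$, so that endpoints are not automatically optimal. To handle it, I would instead use the observation (as in \citet{chaudhuri2003paths}) that it suffices to parametrize a single phantom vertex $k$ with its neighbours on $P$ fixed. The length is then
\[
f(k)=\bigl[(1+\alpha)n-\alpha k-i\bigr](a+bk)+\bigl[(1+\alpha)n-\alpha j-k\bigr]c(T_j),
\]
whose $k^2$-coefficient is $-\alpha b\le 0$ since costs are nondecreasing ($b\ge0$, as $T_{k_r}$ may be assumed no cheaper than $T_{k_l}$ by goodness/monotone choice). So $f$ is concave on $[\max(i,k_l),\min(j,k_r)]$, and its minimum over that interval is attained at an endpoint. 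The endpoint is either the real vertex $k_l$ or $k_r$, or a neighbour $i$ or $j$ itself, in which case the vertex is simply removed; removal gives a path at most as long because an edge $(i,j)$ with $k=j$ coincides with the length formula with the $k$-term merged. Handling the degenerate case $k=i$ needs a small check that $f$ at $k=i$ dominates the length of the path with $k$ deleted. Repeating this one vertex at a time strictly decreases the number of phantom vertices while not increasing the length, which yields a shortest path through real trees only.
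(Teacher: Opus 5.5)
Your final argument is correct and is essentially the paper's proof. You fix the neighbours $i<k<j$ of a single phantom vertex, observe that the path length is quadratic in $k$ with leading coefficient $-\alpha b\le 0$ (given the monotonicity $c(T_{k_l})\le c(T_{k_r})$, which the paper also assumes without loss of generality), and move $k$ to the better endpoint of $[\max(i,k_l),\min(j,k_r)]$, deleting it if it lands on $i$ or $j$; the paper does the same, only parametrizing by the interpolation weight $\mu$ instead of $k$. The earlier detour through shifting whole runs is unnecessary, and your worry there was unfounded: with $b\ge 0$ that polynomial is also concave.
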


By \Cref{lem:u-no-phantom-trees}, applying the concatenating algorithm to the set~$\mathcal{R}\subseteq\mathcal{T}$ of real trees in $\mathcal{T}$ ensures that the algorithm is well-defined while maintaining an approximation guarantee of~$\factor$.
We can now prove \Cref{thm:u-main}; the proof can be found in \Cref{app:u-main}.

\section{Algorithms for the~$p$-Norm of $\alpha$-Latencies}
\label{sec:p}
We now turn to the case of arbitrary $p\geq 1$.
The case $p=1$ was treated in the previous section, where the linearity of the objective allowed us to use an auxiliary graph whose edge lengths directly represent the additional latency caused by adding a tree.
For $p>1$, this linearity is lost, and the same construction no longer applies directly.

At a high level, we would still like to use sequences of trees of geometrically increasing cost.
However, two difficulties arise.
First, as pointed out by \citet{Garg05}, it is not possible, for every $k \in [n]$, to use the primal--dual algorithm to obtain a $k$-MST with cost no more than the optimum $k$-stroll.
In particular, this issue occurs for those values of $k$ for which the primal--dual procedure does not directly compute a $k$-MST\@.
Second, when $p>1$, the objective is nonlinear in the vertices' $\alpha$-latencies.
Thus, an analogue of the auxiliary-graph cost from the previous section has to keep track not only of how many vertices have already been visited, but also of the accumulated cost of the trees used so far.

We subdivide this section into two parts.
First, we present a randomized algorithm based on a geometric scaling argument.
Then, we show how to derandomize it using a time-expanded auxiliary graph; the resulting running time is pseudo-polynomial in the edge costs.

\subsection{A Randomized Algorithm}
\label{sec:p-randomized}

To avoid the difficulties with the auxiliary-graph construction for the moment, we first give a randomized algorithm with the desired approximation guarantee.

\pmain*

The approximation ratio given in the above theorem is depicted in \Cref{fig:p-norm-heatmap}.
It ranges from $2\mathrm{e}$ for $\alpha = 0$ and $p=1$ to a limiting value of $16$ for $\alpha = 1$ as $p \rightarrow \infty$.
Note that the value $2\mathrm{e}$ matches the bound for $\alpha = 0$ and $p=1$ from the previous section.

Instead of computing a \emph{good} $k$-tree, we use Garg's 2-approximation~\cite{Garg05} to compute a 2-approximate $k$-MST for every $k \in [n]$.
Recall that such a tree contains $s$ and visits $k$ additional vertices.
Let these trees be denoted by $T_1,T_2,\ldots,T_n$.

We next define a randomized sequence of trees, similar to the construction in the previous section.
Fix some $\gamma > 1$, which will be optimized later.
Let $b = \gamma^U$, where $U$ is drawn uniformly at random from $[0,1]$.
For all relevant values of $r$, define
$n_r = \max \{k \in \setn_0 : c(T_k) \leq 2b \gamma^r\}$.
Here, $c(T_k)$ denotes only the total length of the tree and, in particular, does not depend on $p$.

Let $P = (n_0,n_1,\ldots,n_l)$ be the sequence of values obtained by the above procedure, where $n_0=0$.
As in the previous section, we concatenate the sequence of trees induced by $P$ to obtain a tour, which we denote by~$\seq_P$.
We also define a function~$\psi_P$ such that $\psi_P(k)$ gives an upper bound on the $\alpha$-latency of the $k$-th distinct vertex in~$V \setminus \{ s \}$ visited by~$\seq_P$.
Formally, the function $\psi_P: \setn_0 \rightarrow \R_{\geq 0}$ is defined by
\begin{align*}
	\psi_P(k) \coloneqq
	\begin{cases}
		0 & \qquad \text{if } k=0,\\
		(1+\alpha)\sum_{i=0}^{j}c(T_{n_i}) & \qquad \text{if } n_{j-1}< k \leq n_{j}, \text{for some } j\in \{ 1,\dots,l \}.
	\end{cases}
\end{align*}

We obtain the following lemma.

\begin{restatable}{lemma}{uexpectedlatencyp}
    \label{lem:u-expected-latency-p}
    Let $\seq_P$ be the randomized tour obtained from $P$, and let $v\in V$ be the $k$-th distinct vertex of $V\setminus \{ s \}$ in $\seq_P$, for some $k\in \setn_0$.
    Then
    $\lat_{\alpha,v}(\seq_P)
        \leq \psi_P(k)$.
\end{restatable}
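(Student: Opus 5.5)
The bound is deterministic, so no averaging over traversal directions is needed: unlike \Cref{lem:u-expected-latency}, where the last tree contributed only $c(T_{n_j})$ in expectation, here we charge the full $(1+\alpha)c(T_{n_j})$ for the current phase. I will bound the $\alpha$-latency of every vertex newly discovered in phase $j$ by the $\alpha$-cost of the whole prefix of $\seq_P$ through the end of phase $j$, and then argue that the rank of such a vertex is large enough that it falls in the interval $(n_{j-1},n_j]$ used to define $\psi_P$.

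\textbf{Step 1: cost of a prefix.} The tour is $\seq_P=\seq(T_{n_1})+\dots+\seq(T_{n_l})$, where each $\seq(T_{n_i})$ comes from \Cref{prop:tree-to-tour}. Discounting only depends on whether an edge was traversed earlier in the whole sequence. Hence the $\alpha$-cost of the $i$-th block inside the concatenation is at most its $\alpha$-cost when taken in isolation, which is at most $(1+\alpha)c(T_{n_i})$. Since $\alpha$-costs are monotone along prefixes, every vertex first visited during phase $j$ has
\[
\lat_{\alpha,v}(\seq_P)\le (1+\alpha)\sum_{i=1}^{j}c(T_{n_i}).
\]
This is at most $(1+\alpha)\sum_{i=0}^{j}c(T_{n_i})$, because $c(T_{n_0})\ge 0$.

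\textbf{Step 2: matching ranks to phases.} Let $v$ be the $k$-th distinct vertex of $V\setminus\{s\}$, and let $j$ be the phase in which $v$ is first visited. By the end of phase $j$, the tour has visited every vertex of $T_{n_j}$, which are at least $n_j$ non-root vertices. So the number of distinct non-root vertices visited by the end of phase $j$ is at least $n_j$.

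This step requires care. If $k>n_{j'-1}$ with $j'\ge j$, then $\psi_P(k)=(1+\alpha)\sum_{i\le j'}c(T_{n_i})$, which is at least the phase-$j$ prefix cost, and we are done. The problematic case is $k\le n_{j-1}$, which would place $v$ in an earlier interval. To exclude it, I use that the first $j-1$ phases already visited at least $n_{j-1}$ distinct non-root vertices, namely all of $T_{n_{j-1}}$. A vertex first seen in phase $j$ therefore has rank $k>n_{j-1}$, so $k\in(n_{j'-1},n_{j'}]$ for some $j'\ge j$.

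\textbf{Step 3: monotonicity and the main obstacle.} Since $\psi_P$ is nondecreasing in $k$, Steps 1 and 2 give $\lat_{\alpha,v}(\seq_P)\le\psi_P(k)$. The main thing to verify is that the indices $n_r$ increase along $P$, so that each later tree carries more vertices than the vertex count already reached. This holds because the thresholds $2b\gamma^r$ increase with $r$ and $n_r$ is defined as a maximum, so $n_r$ is nondecreasing in $r$. Repeated values can be dropped from $P$ without affecting the argument, and the sequence ends at $n_l=n$ once the threshold exceeds $c(T_n)$.
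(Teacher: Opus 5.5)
Your proof is correct and uses the same idea as the paper's brief argument. You bound the $\alpha$-latency of the $k$-th vertex by the $\alpha$-cost of the prefix through the phase $j$ with $n_{j-1}<k\le n_j$, and each phase costs at most $(1+\alpha)c(T_{n_j})$ by the tree-to-tour conversion, since discounting in the concatenation can only help. Your Step 2 could be shortened by noting directly that the fully traversed tree $T_{n_j}$ already supplies $n_j\ge k$ distinct non-root vertices by the end of phase $j$ (and $k=0$ is trivial), but your route through the phase of first visit is equally valid.
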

\begin{proof}
    This follows directly from the definition of $\psi_P$.
    In the worst case, all trees are nested and vertex $k$ is visited at the very end of the last tree containing it.
    Moreover, the cost of the tour induced by tree $T_{n_i}$ is bounded by $(1+\alpha)c(T_{n_i})$.
\end{proof}

\Cref{lem:u-expected-latency-p} and the randomized sequence of trees imply \Cref{thm:u-main-p}.

\begin{proof}[Proof of \Cref{thm:u-main-p}.]

Let $\seq^*$ be an optimal tour for the $p$-norm $\alpha$-latency objective.
Let $v_i^*$ be the $i$-th vertex visited by $\seq^*$.
Then
\begin{align*}
\mathrm{OPT}
=
\left(
\sum_{i \in [n]}
\bigl(\lat_{\alpha, v_i^*}(\seq^*)\bigr)^p
\right)^{1/p}.
\end{align*}

Fix some arbitrary $i \in [n]$.
Let $r$ be such that
$\lat_{\alpha, v_i^*}(\seq^*) = d\gamma^r$
for some $d \in [1,\gamma)$.
We consider two cases.

\paragraph{First case: $d \leq b$.}
In this case, there is a tree $T_{n_r}$ with cost at most $2b\gamma^r$ and $n_r \geq i$.
Hence,
\begin{align*}
\psi_P(i)
\leq
(1+\alpha) \sum_{k=1}^{r} c(T_{n_k})
\leq
(1+\alpha) \sum_{k=1}^{r} 2b\gamma^k
<
(1+\alpha) \sum_{k=-\infty}^{r} 2b\gamma^k
=
(1+\alpha) 2b \gamma^r \frac{\gamma}{\gamma - 1}.
\end{align*}

\paragraph{Second case: $d > b$.}
In this case, the same argument applied to the next threshold gives
\begin{align*}
\psi_P(i)
\leq
(1+\alpha) \sum_{k=1}^{r+1} c(T_{n_k})
\leq
(1+\alpha) \sum_{k=1}^{r+1} 2b\gamma^k
<
(1+\alpha) \sum_{k=-\infty}^{r+1} 2b\gamma^k
=
(1+\alpha) 2b \gamma^{r+1} \frac{\gamma}{\gamma - 1}.
\end{align*}

Since $U$ is chosen uniformly at random in $[0,1]$ and $b=\gamma^U$, we can average over the two cases and obtain
\begin{align*}
\E_U[\psi_P(i)^p]
\leq{}&
\int_{\log_\gamma d}^1
\biggl( (1+\alpha) 2b\gamma^r \frac{\gamma}{\gamma -1} \biggr)^{p}
\, \mathrm{d} U
+
\int_{0}^{\log_\gamma d}
\biggl((1+\alpha)2b\gamma^{r+1} \frac{\gamma}{\gamma - 1} \biggr)^{p}
\, \mathrm{d}U.
\end{align*}

We now compute the right-hand side.
Let $a \coloneqq \log_\gamma d$.
Since $b=\gamma^U$, we have
\begin{align*}
&\int_{\log_\gamma d}^1
\biggl( (1+\alpha) 2b\gamma^r \frac{\gamma}{\gamma -1} \biggr)^{p}
\, \mathrm{d} U
+
\int_{0}^{\log_\gamma d}
\biggl((1+\alpha)2b\gamma^{r+1} \frac{\gamma}{\gamma - 1} \biggr)^{p}
\, \mathrm{d}U
\\
&=
\biggl((1+\alpha)2\frac{\gamma}{\gamma-1}\biggr)^p
\left[
\gamma^{rp}\int_a^1 \gamma^{pU}\,\mathrm{d}U
+
\gamma^{(r+1)p}\int_0^a \gamma^{pU}\,\mathrm{d}U
\right].
\end{align*}
Using
\[
\int \gamma^{pU}\,\mathrm{d}U
=
\frac{\gamma^{pU}}{p\ln\gamma},
\]
we obtain
\begin{align*}
&\biggl((1+\alpha)2\frac{\gamma}{\gamma-1}\biggr)^p
\left[
\gamma^{rp}\int_a^1 \gamma^{pU}\,\mathrm{d}U
+
\gamma^{(r+1)p}\int_0^a \gamma^{pU}\,\mathrm{d}U
\right]
\\
&=
\biggl((1+\alpha)2\frac{\gamma}{\gamma-1}\biggr)^p
\frac{
\gamma^{rp}(\gamma^p-\gamma^{pa})
+
\gamma^{(r+1)p}(\gamma^{pa}-1)
}{p\ln\gamma}.
\end{align*}
Since $a=\log_\gamma d$, we have $\gamma^{pa}=d^p$.
Hence,
\begin{align*}
\E_U[\psi_P(i)^p]
&\leq
\biggl((1+\alpha)2\frac{\gamma}{\gamma-1}\biggr)^p
\frac{
\gamma^{rp}(\gamma^p-d^p)
+
\gamma^{(r+1)p}(d^p-1)
}{p\ln\gamma}
\\
&=
\biggl((1+\alpha)2\frac{\gamma}{\gamma-1}\biggr)^p
\frac{
d^p\gamma^{rp}(\gamma^p-1)
}{p\ln\gamma}
\\
&=
(1+\alpha)^p
\frac{2^p}{p\ln\gamma}
\left[
\left(\frac{d\gamma^{r+2}}{\gamma-1}\right)^p
-
\left(\frac{d\gamma^{r+1}}{\gamma-1}\right)^p
\right].
\end{align*}

The contribution of $v_i^*$ to the optimum is
\[
\bigl(\lat_{\alpha,v_i^*}(\seq^*)\bigr)^p
=
(d\gamma^r)^p.
\]
Therefore, the previous bound implies

\begin{align}
\label{eq:p-factor}
\frac{\E_U[\psi_P(i)^p]}{(d\gamma^r)^p}
&\leq
(1+\alpha)^p
\frac{2^p}{p\ln\gamma}
\left[
\left(\frac{\gamma^2}{\gamma-1}\right)^p
-
\left(\frac{\gamma}{\gamma-1}\right)^p
\right]
=
(1+\alpha)^p
\frac{2^p}{p\ln\gamma}
\cdot
\frac{\gamma^{2p}-\gamma^p}{(\gamma-1)^p}
=: \eta.
\end{align}

In particular, since $\lat_{\alpha,v_i^*}(\seq^*)=d\gamma^r$, we have
\begin{align}
\label{eq:p-expected-single}
\E_U[\psi_P(i)^p]
\leq
\eta \cdot
\bigl(\lat_{\alpha,v_i^*}(\seq^*)\bigr)^p.
\end{align}

We now use this bound to compare the expected $p$-norm $\alpha$-latency of the randomized tour with the optimum.
By the definition of $\psi_P$, for every realization of $U$ we have
\[
\lat_{\alpha,v_i^*}(\seq_P)
\leq
\psi_P(i)
\qquad \text{for all } i\in[n].
\]
Hence,
\begin{align*}
\E_U\left[
\left(
\sum_{i\in[n]}
\lat_{\alpha,v_i^*}(\seq_P)^p
\right)^{1/p}
\right]
&\leq
\E_U\left[
\left(
\sum_{i\in[n]}
\psi_P(i)^p
\right)^{1/p}
\right].
\end{align*}
We next use Jensen's inequality.
Recall that if $f$ is concave and $X$ is a nonnegative random variable, then
\[
\E[f(X)] \leq f(\E[X]).
\]
In our setting, we apply this with
\[
f(x)=x^{1/p}
\qquad\text{and}\qquad
X=\sum_{i\in[n]}\psi_P(i)^p.
\]
Since $p\geq 1$, the function $f(x)=x^{1/p}$ is concave on $\R_{\geq 0}$.
Moreover, $X\geq 0$ for every realization of the random choice of $U$.
Therefore, Jensen's inequality gives
\begin{align*}
\E_U\left[
\left(
\sum_{i\in[n]}
\psi_P(i)^p
\right)^{1/p}
\right]
\leq
\left(
\E_U\left[
\sum_{i\in[n]}
\psi_P(i)^p
\right]
\right)^{1/p}
=
\left(
\sum_{i\in[n]}
\E_U[\psi_P(i)^p]
\right)^{1/p}.
\end{align*}

Using~\eqref{eq:p-expected-single}, we obtain
\begin{align*}
\E_U\left[
\left(
\sum_{i\in[n]}
\lat_{\alpha,v_i^*}(\seq_P)^p
\right)^{1/p}
\right]
&\leq
\left(
\sum_{i\in[n]}
\eta \cdot
\bigl(\lat_{\alpha,v_i^*}(\seq^*)\bigr)^p
\right)^{1/p}
\\
&=
\eta^{1/p}
\left(
\sum_{i\in[n]}
\bigl(\lat_{\alpha,v_i^*}(\seq^*)\bigr)^p
\right)^{1/p}
\\
&=
\eta^{1/p}\cdot \mathrm{OPT}.
\end{align*}

It remains to simplify $\eta^{1/p}$. By~\eqref{eq:p-factor},
\begin{align*}
\eta^{1/p}
&=
(1+\alpha)
\left[
\frac{2^p}{p\ln\gamma}
\cdot
\frac{\gamma^{2p}-\gamma^p}{(\gamma-1)^p}
\right]^{1/p}\\
&=
(1+\alpha)
\frac{2}{(p\ln\gamma)^{1/p}}
\cdot
\frac{(\gamma^{2p}-\gamma^p)^{1/p}}{\gamma-1}\\
&=
(1+\alpha)
\frac{2}{\gamma-1}
\left[
\frac{\gamma^{2p}-\gamma^p}{p\ln\gamma}
\right]^{1/p}.
\end{align*}

Thus, for every fixed $\gamma>1$, the randomized algorithm has expected approximation factor
\[
(1+\alpha)
\frac{2}{\gamma-1}
\left[
\frac{\gamma^{2p}-\gamma^p}{p\ln\gamma}
\right]^{1/p}.
\]
Optimizing over $\gamma>1$ gives the claimed bound.
\end{proof}

\subsection{Derandomization via a Time-Expanded Auxiliary Graph}
\label{ssec:p-derandomization}

We now describe how to derandomize the randomized algorithm from the previous subsection.
The idea is similar to the derandomization used for the case $p=1$: we construct an auxiliary graph and compare the length of a shortest path in this graph to the expected length of a suitable randomized path.

For $p=1$, the auxiliary graph only needs to keep track of how many vertices have already been visited.
For $p>1$, however, the objective is nonlinear in the latencies.
Thus, the additional cost of traversing a tree depends not only on the number of vertices visited so far, but also on the latency accumulated so far.
We therefore use a two-dimensional auxiliary graph.
One dimension keeps track of the number of vertices that have already been visited, and the other dimension keeps track of the total cost of the trees traversed so far.

\pderandomized*

Let $T_1,\dots,T_n$ be the 2-approximate $k$-MST trees used in the previous subsection, and let $c(T_k)$ denote the cost of tree $T_k$.
We set $T_0$ to be the trivial tree containing only $s$, with $c(T_0)=0$.
Let
\[
Z
\coloneqq
\left\{
\sum_{k\in S} c(T_k) : S\subseteq \setn
\right\}.
\]
Thus, $Z$ is the set of all values that can occur as the accumulated cost of a set of traversed trees.

We construct a directed auxiliary graph
$H=(V_H,A_H)$
as follows:
The vertex set is
$V_H = \bigl(\setn_0 \times Z\bigr) \cup \{\omega\}$,
where $\omega$ is an additional terminal vertex.
A vertex $(i,z)$ represents the state in which $i$ vertices of $V\setminus\{s\}$ have already been visited, and the total cost of the trees traversed so far is $z$.
For every $i,j\in \setn_0$ with $i<j$ and every $z,z'\in Z$ with
$z'=z+c(T_j)$,
we add the directed edge
$\bigl((i,z),(j,z')\bigr)$.
The length of this edge is defined as
\begin{align}
\label{eq:p-aux-edge-length}
\ell_{(i,z),(j,z')}
\coloneqq
(n-i)
\left(
\bigl((1+\alpha)z'\bigr)^p
-
\bigl((1+\alpha)z\bigr)^p
\right).
\end{align}
Finally, for every $z\in Z$, we add an edge $((n,z),\omega)$ of length zero.
The algorithm computes a shortest path from $(0,0)$ to $\omega$ in $H$.
Let
$P^*
=
\bigl((n_0,z_0),(n_1,z_1),\dots,(n_l,z_l),\omega\bigr)$
be such a shortest path, where $n_0=0$, $z_0=0$, and $n_l=n$.
The algorithm then concatenates the trees
$T_{n_1},T_{n_2},\dots,T_{n_l}$
in this order, using the tree-to-tour conversion from \Cref{prop:tree-to-tour}.
We denote the resulting tour by $\seq_\alg$.
We first relate the length of a path in $H$ to the $p$-norm $\alpha$-latency of the corresponding concatenated tour.
\begin{restatable}{lemma}{lemppathlengthbound}
\label{lem:p-path-length-bounds-tour}
Let
$P = \bigl((n_0,z_0),(n_1,z_1),\dots,(n_l,z_l),\omega\bigr)$
be any $(0,0)$--$\omega$ path in $H$, where $n_0=0$, $z_0=0$, and $n_l=n$.
Let $\seq_P$ be the tour obtained by concatenating the trees
$T_{n_1},\dots,T_{n_l}$.
Then
$\smash{\bigl(\sum_{v\in V}
\lat_{\alpha,v}(\seq_P)^p\bigr)^{1/p}
\leq
\ell(P)^{1/p}}$.
\end{restatable}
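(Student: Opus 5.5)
We aim to show $\sum_{v\in V}\lat_{\alpha,v}(\seq_P)^p \leq \ell(P)$; taking $p$-th roots then gives the claim.

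First we bound individual latencies. Along the path, $z_j = \sum_{i=1}^{j} c(T_{n_i})$ by construction of the edges, with $z_0=0$. Each phase $j$ appends the tour $\seq(T_{n_j})$, whose $\alpha$-cost is at most $(1+\alpha)c(T_{n_j})$ by \Cref{prop:tree-to-tour}. Hence the whole prefix up to the end of phase $j$ has $\alpha$-cost at most $(1+\alpha)z_j$. The $\alpha$-cost of a prefix is monotone in the prefix, and $\alpha$-costs of concatenations are at most the sum of the $\alpha$-costs of the pieces. The latter holds because discounting only lowers later traversals, so computing each piece's $\alpha$-cost in isolation can only overcount.

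Next we count visited vertices. After phase $j$ the tour has visited all vertices of $T_{n_1}\cup\dots\cup T_{n_j}$, which is at least $n_j$ vertices of $V\setminus\{s\}$, since $T_{n_j}$ alone contains $n_j$ of them. So if $v$ is the $k$-th distinct vertex visited, with $n_{j-1}<k\leq n_j$, its first visit happens no later than the end of phase $j$. This gives
\begin{align*}
\lat_{\alpha,v}(\seq_P)\leq (1+\alpha)z_j .
\end{align*}
Vertices appearing earlier only get smaller bounds, and $s$ contributes $0$.

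Now we sum via a telescoping identity. Set $w_j := ((1+\alpha)z_j)^p$, which is nondecreasing in $j$ with $w_0=0$. Then
\begin{align*}
\sum_{v\in V}\lat_{\alpha,v}(\seq_P)^p \leq \sum_{j=1}^{l}(n_j-n_{j-1})\,w_j .
\end{align*}
We rewrite the right-hand side by Abel summation, writing $w_j = \sum_{t=1}^{j}(w_t-w_{t-1})$:
\begin{align*}
\sum_{j=1}^{l}(n_j-n_{j-1})\,w_j = \sum_{t=1}^{l}(w_t-w_{t-1})\sum_{j=t}^{l}(n_j-n_{j-1}) = \sum_{t=1}^{l}(n-n_{t-1})(w_t-w_{t-1}).
\end{align*}
This uses $n_l=n$. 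The final sum is exactly the total of the edge lengths \eqref{eq:p-aux-edge-length} along $P$, and the final edge to $\omega$ has length zero, so it equals $\ell(P)$. This completes the argument.

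The main obstacle is the latency step. Counting is subtle because trees need not be nested: new vertices of $T_{n_j}$ may already have been covered earlier. The key is that the rank $k$ of any vertex first visited in phase $j$ is at most the number visited by the end of phase $j$, while the number visited by the end of phase $j-1$ is at least $n_{j-1}$. Hence every vertex of rank $k\le n_j$ is visited by the end of phase $j$, which is the bound used above.
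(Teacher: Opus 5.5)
Your proof is correct and follows essentially the same route as the paper. Both bound the $\alpha$-latency of the $k$-th visited vertex with $n_{h-1}<k\le n_h$ by $(1+\alpha)z_h$ via \Cref{prop:tree-to-tour}, then identify $\sum_h (n_h-n_{h-1})\bigl((1+\alpha)z_h\bigr)^p$ with $\ell(P)$ by telescoping. Your explicit justifications (subadditivity of the $\alpha$-cost under concatenation, and at least $n_j$ vertices being visited by the end of phase $j$ even without nesting) spell out a step the paper covers with a single ``hence''.
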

\begin{proof}
For $h\in\{1,\dots,l\}$, the transition from $(n_{h-1},z_{h-1})$ to $(n_h,z_h)$ corresponds to traversing tree $T_{n_h}$, and by construction
\[
z_h=z_{h-1}+c(T_{n_h}).
\]
After this traversal, the accumulated cost of the traversed trees is $z_h$.
By the tree-to-tour conversion, the $\alpha$-latency of every vertex that is first covered by $T_{n_h}$ is at most
\[
(1+\alpha)z_h.
\]
Hence, if $n_{h-1}<k\leq n_h$, then the $k$-th distinct vertex visited by $\seq_P$ has $\alpha$-latency at most $(1+\alpha)z_h$.

Define
\[
\psi_P(k)
\coloneqq
(1+\alpha)z_h
\qquad
\text{if } n_{h-1}<k\leq n_h.
\]
Then
\[
\lat_{\alpha,v_k}(\seq_P)\leq \psi_P(k)
\]
for the $k$-th distinct vertex $v_k$ of $V\setminus\{s\}$ visited by $\seq_P$.
Therefore,
\[
\sum_{v\in V}\lat_{\alpha,v}(\seq_P)^p
\leq
\sum_{k=1}^n \psi_P(k)^p.
\]

It remains to observe that the right-hand side is exactly the length of $P$.
Indeed, using the definition of the edge lengths in~\eqref{eq:p-aux-edge-length}, we get
\begin{align*}
\ell(P)
&=
\sum_{h=1}^l
(n-n_{h-1})
\left(
\bigl((1+\alpha)z_h\bigr)^p
-
\bigl((1+\alpha)z_{h-1}\bigr)^p
\right)
\\
&=
\sum_{h=1}^l
(n_h-n_{h-1})
\bigl((1+\alpha)z_h\bigr)^p
\\
&=
\sum_{k=1}^n \psi_P(k)^p.
\end{align*}
The second equality follows by telescoping.
Thus,
\[
\sum_{v\in V}\lat_{\alpha,v}(\seq_P)^p
\leq
\ell(P),
\]
which proves the claim.
\end{proof}

We next show that the auxiliary graph contains a path whose length is bounded by the same expression that appeared in the randomized analysis.
For fixed $\gamma>1$, define
\[
\eta_\gamma
\coloneqq
(1+\alpha)^p
\frac{2^p}{p\ln\gamma}
\cdot
\frac{\gamma^{2p}-\gamma^p}{(\gamma-1)^p}.
\]

\begin{restatable}{lemma}{lempexistsshortpath}
\label{lem:p-exists-short-path}
Let $\seq^*$ be an optimal tour for the $p$-norm $\alpha$-latency objective.
Then, for every fixed $\gamma>1$, there exists a $(0,0)$--$\omega$ path $P$ in $H$ such that
\[
\ell(P)
\leq
\eta_\gamma
\sum_{i\in[n]}
\bigl(\lat_{\alpha,v_i^*}(\seq^*)\bigr)^p.
\]
\end{restatable}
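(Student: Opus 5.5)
The plan is a probabilistic-method argument that reuses the randomized construction from \Cref{sec:p-randomized}. Fix $\gamma>1$ and draw $U$ uniformly from $[0,1]$, with $b=\gamma^U$. Set $n_r=\max\{k\in\setn_0 : c(T_k)\le 2b\gamma^r\}$ for the relevant $r$, and let $(n_0,n_1,\dots,n_l)$ be the resulting sequence, with $n_0=0$, repeated values removed, and the sequence truncated at the first index with $n_r=n$. Every $k$-th tree satisfies $c(T_k)\le 2\,c(\text{optimal }k\text{-MST})$, and the $\alpha$-cost of the optimal tour's prefix up to $v_k^*$ is at least the cost of a tree spanning those $k$ vertices. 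Hence $n_r=n$ eventually, so the sequence is finite and strictly increasing. Define $z_h=\sum_{i=1}^{h}c(T_{n_i})$. Each $z_h$ is a subset sum of distinct tree costs, so it lies in $Z$. The sequence $P_U=((n_0,z_0),\dots,(n_l,z_l),\omega)$ is therefore a valid $(0,0)$--$\omega$ path in $H$ for every realization of $U$.

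Next I compute its length. By the telescoping identity in the proof of \Cref{lem:p-path-length-bounds-tour},
\[
\ell(P_U)=\sum_{k=1}^n \psi_{P_U}(k)^p, \qquad \psi_{P_U}(k)=(1+\alpha)z_h \text{ for } n_{h-1}<k\le n_h .
\]
This $\psi_{P_U}$ coincides exactly with the function $\psi_P$ from \Cref{sec:p-randomized}: there $\psi_P(k)=(1+\alpha)\sum_{i\le j}c(T_{n_i})$, and the term $c(T_{n_0})=c(T_0)=0$ vanishes. So $\ell(P_U)=\sum_{i\in[n]}\psi_P(i)^p$ pointwise in $U$.

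The bound then comes from the proof of \Cref{thm:u-main-p}, which showed, before any Jensen step, that for every $i$
\[
\E_U[\psi_P(i)^p]\le \eta_\gamma\,\bigl(\lat_{\alpha,v_i^*}(\seq^*)\bigr)^p
\]
(see \eqref{eq:p-expected-single}). By linearity of expectation, $\E_U[\ell(P_U)]\le \eta_\gamma\sum_{i}(\lat_{\alpha,v_i^*}(\seq^*))^p$. Some realization $U$ attains a value at most the expectation, and this yields the required path $P$.

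The main obstacle is checking that the per-vertex bound from the randomized proof really applies to this path. That bound needs $n_r\ge i$ whenever $2b\gamma^r\ge 2\lat_{\alpha,v_i^*}(\seq^*)$. This holds because the prefix of $\seq^*$ up to $v_i^*$ contains an $i$-tree of cost at most its $\alpha$-cost, so the 2-approximate $T_i$ has cost at most twice that and qualifies. One also needs the truncation and deduplication of the sequence to only decrease the $z_h$. Both are handled by noting that removing repeated or surplus indices only removes nonnegative summands, so each $\psi$ value is at most the one analyzed.
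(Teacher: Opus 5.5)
Your proposal is correct and follows the paper's proof: you realize the randomized geometric tree sequence as a $(0,0)$--$\omega$ path in $H$, use the telescoping identity $\ell(P_U)=\sum_{i\in[n]}\psi_{P_U}(i)^p$, apply the per-vertex moment bound \eqref{eq:p-expected-single} with linearity of expectation, and pick a realization of $U$ whose length is at most the mean. Your extra checks fill in details the paper leaves implicit: each $z_h$ lies in $Z$ because the $n_h$ are distinct, and deduplication and truncation only drop nonnegative summands.
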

\begin{proof}
We use the randomized construction from the previous subsection.
For a random choice of $U\in[0,1]$, let $P(U)$ be the corresponding sequence of trees.
After removing consecutive repetitions, this sequence defines a path in the auxiliary graph $H$.
Indeed, the first coordinate records the number of vertices covered, while the second coordinate records the accumulated cost of the selected trees.

By the definition of the auxiliary graph and the calculation in the proof of \Cref{lem:p-path-length-bounds-tour}, we have
\[
\ell(P(U))
=
\sum_{i\in[n]} \psi_{P(U)}(i)^p.
\]
From the randomized analysis, for every $i\in[n]$ we have
\[
\E_U[\psi_{P(U)}(i)^p]
\leq
\eta_\gamma
\bigl(\lat_{\alpha,v_i^*}(\seq^*)\bigr)^p.
\]
Taking the sum over all $i\in[n]$ and using linearity of expectation gives
\begin{align*}
\E_U[\ell(P(U))] =
\E_U\left[
\sum_{i\in[n]} \psi_{P(U)}(i)^p
\right]
=
\sum_{i\in[n]} \E_U[\psi_{P(U)}(i)^p]
\leq
\eta_\gamma
\sum_{i\in[n]}
\bigl(\lat_{\alpha,v_i^*}(\seq^*)\bigr)^p.
\end{align*}
Therefore, there exists at least one realization of $U$ such that
\[
\ell(P(U))
\leq
\eta_\gamma
\sum_{i\in[n]}
\bigl(\lat_{\alpha,v_i^*}(\seq^*)\bigr)^p.
\]
This proves the lemma.
\end{proof}

Using the above lemmas and analyzing the size of the time-expanded auxiliary graph, we prove \Cref{thm:p-derandomized}.

\begin{proof}[Proof of \Cref{thm:p-derandomized}.]
Let $P^*$ be a shortest $(0,0)$--$\omega$ path in $H$, and let $\seq_\alg$ be the tour obtained by concatenating the trees along $P^*$.
By \Cref{lem:p-path-length-bounds-tour},
\[
\left(
\sum_{v\in V}
\lat_{\alpha,v}(\seq_\alg)^p
\right)^{1/p}
\leq
\ell(P^*)^{1/p}.
\]
Since $P^*$ is a shortest path, for every $\gamma>1$ and for the path $P$ whose existence is guaranteed by \Cref{lem:p-exists-short-path}, we have
\[
\ell(P^*)\leq \ell(P).
\]
Thus,
\[
\ell(P^*)
\leq
\eta_\gamma
\sum_{i\in[n]}
\bigl(\lat_{\alpha,v_i^*}(\seq^*)\bigr)^p
=
\eta_\gamma \cdot \mathrm{OPT}^p.
\]
Taking $p$-th roots gives
\[
\left(
\sum_{v\in V}
\lat_{\alpha,v}(\seq_\alg)^p
\right)^{1/p}
\leq
\eta_\gamma^{1/p}\cdot \mathrm{OPT}.
\]
Finally,
\[
\eta_\gamma^{1/p}
=
(1+\alpha)
\frac{2}{\gamma-1}
\left[
\frac{\gamma^{2p}-\gamma^p}{p\ln\gamma}
\right]^{1/p}.
\]
Since this holds for every $\gamma>1$, we may minimize over $\gamma>1$, which gives the claimed approximation factor.

It remains to discuss the running time.
Let
\[
N_H \coloneqq |V_H|+|A_H|
\]
denote the size of the auxiliary graph.
All edge lengths in $H$ are nonnegative, and hence a shortest $(0,0)$--$\omega$ path can be computed with Dijkstra's algorithm in time
\[
O(N_H\log N_H).
\]
In general, the set $Z$ may be exponentially large, since it consists of all subset sums of the tree costs.
Thus, the algorithm is pseudo-polynomial.
If the tree costs are integral and polynomially bounded, then $|Z|$ is polynomially bounded as well.
In this case, $N_H$ is polynomially bounded, and the algorithm runs in polynomial time.
\end{proof}

\section{Discussion}

We studied the discounted graph search problem, a two-parameter framework that contains pathwise search, expanding search, TSP-type objectives, and MST-type objectives as special or limiting cases.
The first parameter, $\alpha\in[0,1]$, determines how repeated traversals of edges are charged, while the second parameter, $p\geq 1$, determines how the individual vertex latencies are aggregated.
For $p=1$, we obtain a polynomial-time approximation algorithm whose guarantee interpolates between the currently best-known approximation ratios for expanding search and pathwise search.
For general $p\geq 1$, we obtain a randomized constant-factor approximation and a deterministic pseudo-polynomial-time algorithm with the same approximation guarantee.

We stated and analyzed the problem for a uniform discount factor $\alpha$.
The results for~$p=1$ also extend to the case where every edge $e$ has its own discount factor $\alpha_e\in[0,1]$.
In this setting, the first traversal of edge $e$ costs $c_e$, while every further traversal costs $\alpha_e c_e$.
Let
$\alphamin \coloneqq \min_{e\in E}\alpha_e$ and $\alphamax \coloneqq \max_{e\in E}\alpha_e$.
Then the analysis yields the approximation guarantee
\[
	\factor \coloneqq
 \begin{cases}
2\mathrm{e} & \text{if } \alphamax = \alphamin = 0,\\[2mm]
2 \dfrac{\alphamax}{(1+\alphamin)W(\alphamax/\mathrm{e})} & \text{otherwise.}
\end{cases}
\]
The reason is that the analysis uses the largest possible discount factor when comparing the cost of trees to paths, while the conversion of trees into tours benefits from the smallest discount factor.
Thus, the same proof goes through with $\alpha$ replaced by $\alphamax$ in the former part of the analysis and by $\alphamin$ in the latter.

A similar generalization is possible when the cost of an edge depends on how often it has already been used.
Suppose that for each edge~$e$ we are given a sequence
$\smash{c_e^{(1)}, c_e^{(2)}, c_e^{(3)}, \dots}$
such that the $k$-th traversal of edge~$e$ costs $\smash{c_e^{(k)}}$.
If these sequences are non-increasing, i.e.,
$\smash{c_e^{(k)} \geq c_e^{(k+1)}}$ for all $e\in E$ and $k\in\mathbb{N}$,
then our analysis can again be extended.
Define
$\smash{\alphamin \coloneqq
\min_{e\in E}
c_e^{(2)} / c_e^{(1)}}$ and
$\smash{\alphamax \coloneqq
\max_{e\in E}
c_e^{(2)} / c_e^{(1)}}$.
Then the same approximation guarantee as above follows.
Indeed, in the comparison to the optimum, it is sufficient to consider paths in which each edge is used at most twice.
On the other hand, the algorithm may use edges more often, and for non-increasing traversal costs, every traversal after the second one can only become cheaper.
Thus, replacing all costs $\smash{c_e^{(k)}}$ for $k\geq 2$ by $\smash{c_e^{(2)}}$ can only make the algorithm more expensive, while preserving the relevant comparison to the optimum.

By contrast, for arbitrary non-decreasing traversal-cost sequences, one cannot hope for a con\-stant-factor approximation in full generality.
For example, suppose that
$\smash{c_e^{(1)}=0}$ and $\smash{c_e^{(k)}=1}$ for all $k\geq 2$
for every edge $e\in E$.
Then there is a traversal with total latency zero if and only if the graph has a Hamiltonian path starting at $s$.
Otherwise, every feasible traversal must repeat some edge and hence has positive latency.
Since deciding the existence of such a Hamiltonian path is $\mathsf{NP}$-complete, no constant-factor approximation can exist in this setting unless $\mathsf{P}=\mathsf{NP}$.
It would be interesting to understand which structural assumptions on the sequences $c_e^{(1)}, c_e^{(2)}, c_e^{(3)}, \dots$ still allow constant-factor approximations.

Finally, for the case $p>1$, one can also compare our bounds to algorithms for the limiting case $p=\infty$.
For any nonnegative latency vector $x\in\mathbb{R}_{\geq 0}^n$, we have
$\smash{\|x\|_\infty
\leq
\|x\|_p
\leq
n^{1/p}\|x\|_\infty}$.
Thus, any $\beta$-approximation for the maximum-latency objective immediately gives a
$\beta n^{1/p}$-
approximation for the $p$-norm objective.
At the endpoint $\alpha=1$, the maximum-latency objective corresponds to a TSP-path-type problem, while at the endpoint $\alpha=0$ it corresponds to the MST problem.
Therefore, for large values of $p$, one can use approximation algorithms for these limiting problems to obtain alternative approximation guarantees.
In particular, once $p = \Omega(\log n)$, the factor $n^{1/p}$ becomes constant.
For even larger values of $p$, this factor approaches~$1$, and the guarantee approaches that of the corresponding maximum-latency approximation algorithm.
This gives better bounds for sufficiently large $p$ in regimes where the limiting TSP- or MST-type objective can be approximated more accurately than the general bound obtained from the randomized geometric construction.

\bibliography{alpha}

\newpage
\appendix

\section{Deferred Proofs of \Cref{sec:prelim}}

\subsection{Proof of Proposition~\ref{prop:tree-to-tour}}
\label{app:tree-to-tour}
\treetotour*

\begin{proof}
	Let $G_T\subseteq G$ be the subgraph of $G$ that only contains the edges in $T$.
	By definition of a tree, $G_T$ contains vertex $s$.
	For each edge $e\in T$, we add a copy of $e$ with \edgec $\alpha c_e$ and denote the obtained graph by $G_T'$.
	In $G_T'$ every vertex has an even degree and thus, $G_T'$ contains an Euler tour $\seq(T)$ starting in $s$ that visits all vertices of $T$ and has $\alpha$-\edgec $
		c_{\alpha}(\seq(T))=\sum_{e\in G_T} (1+\alpha)c_e = (1+\alpha)c(T)$.
\end{proof}

\section{Computation of good $k$-trees}
\label{app:finding-good-trees}

\newparagraph{The Primal--Dual Subroutine}
The primal--dual subroutine as introduced by \cite{chaudhuri2003paths} and its analysis are based on the primal--dual algorithm for the prize-collecting Steiner tree problem by \citet{goemans1995general} and further inspired by algorithms for the $k$-MST problem used by \citet{blum1996constant} and \citet{garg1996approximation}.
The input consists of a graph $G=(V,E)$, a start vertex $s\in V$, a terminal vertex $t\in V \setminus \{s\}$, and a parameter $\lambda\geq 0$.
Thereupon, the algorithm returns a tree that spans $k_{\lambda,t}$ many vertices, including $s$ and $t$, which is attained as follows.

At the beginning, the initial solution $F$ is given by the empty edge set $F=\emptyset$.
Each vertex $v\in V$ is equipped with a budget $b_v\geq 0$, where $b_s=0, b_t=\infty$, and $b_v=\lambda$ for all $v\in V\setminus \{s,t\}$.
Furthermore, for each vertex set $S\subseteq V$, there exists a dual variable $y_S$ which is initialized with value $0$.
The main part of the algorithm consists of two phases: the \emph{growth phase} and the \emph{delete phase}.

Throughout the growth phase, connected components are divided into \emph{active} and \emph{inactive} sets.
At the beginning, each vertex~$v$ with $b_v>0$ is an active connected component, and only vertex~$s$ is inactive.
Then, the dual variables $y_S$ of all active components $S\subseteq V$ grow continuously at the same speed until one of the following two cases occurs.
Either an edge becomes tight, i.e., $\sum_{S\subset V:e\in \delta(S)}y_S=c_e$, which is referred to as an \emph{edge event};
or the total budget of an active component is depleted, i.e., $\sum_{v\in S}b_v=\sum_{T\subseteq S}y_T$, which is referred to as a \emph{set event}.
In the case of an edge event, the tight edge $e$ is added to the current solution $F$, and the two components $S_1$ and $S_2$ that are linked by $e$ are merged, resulting in a new component $S=S_1\cup S_2$.
If $s\in S$, the new component is inactive, otherwise it is active. 
The previous components $S_1$ and $S_2$ become inactive and hence, their corresponding dual variables are not further increased.
In the case of a set event, the component becomes inactive.
The growth phase terminates when all components are inactive.
Note that at any point in time during the growth phase, the current solution $F$ induces a forest in $G$.
More precisely, if two edges become tight at the same time, the second edge is only added to the set~$F$ if it does not close a cycle in~$F$.
The order in which two simultaneous events are carried out is chosen arbitrarily.
Since vertex $t$ has a budget of $b_t=\infty$, the connected component containing $t$ only becomes inactive when it merges with the component that contains $s$.
Hence, when the growth phase terminates, there exists a tree $T$ in $F$ that contains both vertices $s$ and $t$.
After the growth phase, the delete phase starts.
In this phase, the algorithm iterates over the edges in $T$ and deletes edge $e\in T$ under the following condition:
Let $T_s$ and $T_{-s}$ be the two components obtained from $T\setminus {e}$, such that $s\in T_s$.
If $T_{-s}$ was once inactive during the growth phase, edge $e$ is deleted from $T$, otherwise it is kept.
Since vertex~$t$ was always in an active component until merging with~$s$, the tree $T$ still contains vertices $s$ and $t$ at the end of the growth phase.
Furthermore, after the delete phase, no leaf of tree~$T$ was in an inactive component during the growth phase.
The tree~$T$ is then returned by the algorithm.
\citeauthor{chaudhuri2003paths} argue that this primal--dual subroutine, as a variant of already existing algorithms for the prize-collecting Steiner tree problem, can be implemented in time $\mathcal{O}(n^2)$.

\newparagraph{The Primal--Dual Algorithm}
In contrast to the primal--dual subroutine, the \emph{primal--dual algorithm} does not require a distinct terminal vertex $t$ as input.
In fact, for a given parameter~$\lambda$, the algorithm simply runs the primal--dual subroutine with parameter $\lambda$ and all possible choices of the terminal vertex $t\in V \setminus \{s\}$.
Each subroutine returns a tree~$T_{{\lambda,t}}$ out of which the algorithm returns the one with the lowest \edgec.
Thus, the final output is a tree $T_{\lambda}$, that contains vertex $s$ and $k_\lambda$ many vertices of~$V \setminus \{s\}$.

\newparagraph{Analyzing the Primal--Dual Algorithm}
Our goal is to show that the tree $T_{\lambda}$ returned by the primal--dual algorithm is a good $k_\lambda$-tree for any choice of $\alpha \in [0,1]$.
For the proof, we consider the returned tree~$T_{{\lambda,t}}$ of the primal--dual subroutine for an arbitrary but fixed terminal vertex $t\in V \setminus \{s\}$ and compare it to the $\alpha$-\edgec of an optimal $k_{\lambda,t}$-path that ends in~$t$.
As the primal--dual algorithm chooses the cheapest tree~$T_{{\lambda,t}}$ over all possible choices of $t$, the desired result then follows.

In that direction, we introduce a relaxation $\mathrm{P}_{\alpha,k,t}$ 
of the linear program that computes an optimal $k$-path for $\alpha$ that starts in $s$ and ends in a designated vertex $t\in V \setminus \{s\}$.
Each vertex $v\in V\setminus \{s,t\}$ is associated with a variable $x_v$ equal to $1$ if $v$ is visited by the path, and $0$ otherwise.
Each edge $e\in E$ is associated with two variables $x_e^1$ and $x_e^2$.
The variable $x_e^1$ equals $1$ if $e$ is used at least once in the $s$--$t$-path and $0$ otherwise, while the variable $x_e^2$ equals $1$ if $e$ is used at least twice in the $s$--$t$-path and $0$ otherwise.
The primal linear program $\mathrm{P}_{\alpha,k,t}$ is formally given by
\begin{align*}
\begin{array}{lrcll}
	\text{Min.}
	&\multicolumn{4}{l}{\sum_{e \in E} c_e x_e^1 +\alpha c_e x_e^2}\\
	\text{s.t.}
	&\sum _{e\in \delta(S)}x_e^1 + x_e^2 &\geq &2x_v
	& \text{ for all } S\subseteq V\setminus\{s,t\}, \text{ for all } v\in S\\
	&\sum_{e\in\delta(U)}x_e^1 + x_e^2  &\geq& 1
	& \text{ for all }U\subseteq V :t\in U, s\notin U \\
	&\sum_{v\in V\setminus\{s,t\}}x_v &\geq & k-1
	& \\
	&x_v &\leq &1 
	& \text{ for all } v\in V\setminus\{s,t\}\\
	&x_v &\geq &0 
	& \text{ for all } v\in V\setminus\{s,t\}\\
	&x_e^1 -x_e^2 &\geq &0
	&\text{ for all } e \in E \\
	&x_e^2 &\geq &0
	&\text{ for all } e \in E. 
\end{array}
\end{align*}
Let $P_{k,t}$ be a $k$-path ending in $t$.
Without loss of generality, $P_{k,t}$ does not use any edge more than twice.
Otherwise, we can shortcut $P_{k,t}$ to obtain a $k$-path $P_{k,t}'$ ending in $t$ that visits the same vertices as $P_{k,t}$ and $c_{\alpha}(P_{k,t}') \leq c_{\alpha}(P_{k,t})$.
The path $P_{k,t}$ provides a feasible solution to $\mathrm{P}_{\alpha,k,t}$ with an objective function value equal to its $\alpha$-\edgec $c_{\alpha}(P_{k,t})$.
Hence, the objective function value $\mathrm{P}_{\alpha,k,t}^*$ of an optimal solution to $\mathrm{P}_{\alpha,k,t}$ yields a lower bound for the $\alpha$-\edgec of any $k$-path ending in $t$, i.e.,
\begin{align}
    \label{eq:u-primal-opt-path}
	\mathrm{P}_{\alpha,k,t}^*\leq c_{\alpha}(P_{k,t})
\end{align}
for all $k\in \setn$ and all $k$-paths $P_{k,t}$ ending in $t$.

\citeauthor{chaudhuri2003paths} use a similar linear program to prove that the same primal--dual algorithm returns good trees for the special case of $\alpha=1$.
We refer to their primal linear program by $\mathrm{P}_{1,k,t}$, which is a relaxation of the linear program that computes an optimal $k$-path for $\alpha=1$ that starts in $s$ and ends in $t$.
Thus, any $k$-path from $s$ to $t$ yields a feasible solution to $\mathrm{P}_{1,k,t}$.
In contrast to $\mathrm{P}_{\alpha,k,t}$, in $\mathrm{P}_{1,k,t}$ each edge $e$ is only equipped with one variable $x_e$ indicating how often $e$ is used in the path.
With $\alpha=1$, each traversal of an edge contributes the same to the $\alpha$-\edgec of a path and thus, it is not necessary to distinguish between first and further traversals.
Accordingly, the objective function of $\mathrm{P}_{1,k,t}$ is slightly simpler and given by Min. $\sum_{e\in E}c_e x_e$.
We proceed to analyze and compare optimal solutions of $\mathrm{P}_{1,k,t}$ and $\mathrm{P}_{\alpha,k,t}$.
To this end, let $\left(( x_e^*)_{e\in E},( x_v^*)_{v\in V}\right)\in \R^{|E|+|V|}$ be an optimal solution for $\mathrm{P}_{1,k,t}$.
We define $\left((\hat{ x}_e^1)_{e\in E},(\hat{ x}_e^2)_{e\in E},(\hat{ x}_v)_{v\in V})\right)\in \R^{2|E|+|V|}$ by
\begin{align}
	\hat{x}_e^1  &\coloneqq \lfrac{x_e^*}{2} \text{ for all } e\in E, &
	\hat{x}_e^2  &\coloneqq \lfrac{x_e^*}{2} \text{ for all } e\in E, &&\text{ and } &
	\hat{x}_v  &\coloneqq x_v \text{ for all } v\in V.\label{eq:u-def-solution}
\end{align}
It is easy to verify that this construction yields a feasible solution to $\mathrm{P}_{\alpha,k,t}$.
In the following, we argue that this solution is also optimal.
For better readability, we write $\vec x_e$ for $( x_e)_{e\in E}$ and $\vec x_v$ for $( x_v)_{v\in V}$.

\begin{restatable}{lemma}{uoptimalsolution}
\label{lem:u-optimal-solution}
	For $k\in \N$ and a fixed vertex $t\in V \setminus \{s\}$, let $(\vec x_e^*,\vec x_v^*)\in \R^{|E|+|V|}$ be an optimal solution for the linear program $\mathrm{P}_{1,k,t}$.
	Then $(\hat{\vec x}_e^1,\hat{\vec x}_e^2,\hat{\vec x}_v)\in \R^{2|E|+|V|}$ defined by \eqref{eq:u-def-solution} yields an optimal solution for the linear program $\mathrm{P}_{\alpha,k,t}$.
\end{restatable}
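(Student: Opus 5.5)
The plan is to compare objective values directly. I will show that the value of $(\hat{\vec x}_e^1,\hat{\vec x}_e^2,\hat{\vec x}_v)$ in $\mathrm{P}_{\alpha,k,t}$ equals $\frac{1+\alpha}{2}\mathrm{P}_{1,k,t}^*$, and that every feasible solution of $\mathrm{P}_{\alpha,k,t}$ has value at least $\frac{1+\alpha}{2}\mathrm{P}_{1,k,t}^*$.

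\textbf{Feasibility and value of the constructed solution.} I take $\mathrm{P}_{1,k,t}$ to have the same constraints as $\mathrm{P}_{\alpha,k,t}$, with $x_e^1+x_e^2$ replaced by the single variable $x_e\geq 0$. This is how it was described: it uses the same cut and covering constraints with one edge variable. Under this reading the constructed solution is feasible. Indeed, $\hat x_e^1+\hat x_e^2=x_e^*$, so all cut constraints and vertex constraints carry over verbatim. Moreover $\hat x_e^1-\hat x_e^2=0\geq 0$ and $\hat x_e^2=x_e^*/2\geq 0$. Its objective value is
\[
\sum_{e\in E} c_e\Bigl(\tfrac{x_e^*}{2}+\alpha\tfrac{x_e^*}{2}\Bigr)=\tfrac{1+\alpha}{2}\sum_{e\in E}c_e x_e^*=\tfrac{1+\alpha}{2}\,\mathrm{P}_{1,k,t}^*.
\]

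\textbf{Lower bound for any feasible solution.} Let $(\vec x_e^1,\vec x_e^2,\vec x_v)$ be feasible for $\mathrm{P}_{\alpha,k,t}$. Setting $x_e\coloneqq x_e^1+x_e^2$ gives a feasible solution of $\mathrm{P}_{1,k,t}$, since the constraints only involve this sum and $x_e\geq 0$ follows from $x_e^1\geq x_e^2\geq 0$. Hence $\sum_e c_e(x_e^1+x_e^2)\geq \mathrm{P}_{1,k,t}^*$.

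The key pointwise inequality uses the constraint $x_e^1\geq x_e^2$ together with $\alpha\leq 1$:
\[
x_e^1+\alpha x_e^2-\tfrac{1+\alpha}{2}(x_e^1+x_e^2)=\tfrac{1-\alpha}{2}(x_e^1-x_e^2)\geq 0 .
\]
Multiplying by $c_e\geq 0$ and summing over $e\in E$, the objective value of $(\vec x_e^1,\vec x_e^2,\vec x_v)$ is at least $\frac{1+\alpha}{2}\sum_e c_e(x_e^1+x_e^2)$. This is at least $\frac{1+\alpha}{2}\mathrm{P}_{1,k,t}^*$, which is the value of the constructed solution, so that solution is optimal.

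\textbf{Main obstacle.} The only real point is this inequality, namely recognizing that the constraint $x_e^1\geq x_e^2$ is exactly what makes the symmetric split optimal. The rest is bookkeeping, checking that the constraint systems of $\mathrm{P}_{1,k,t}$ and $\mathrm{P}_{\alpha,k,t}$ correspond under $x_e=x_e^1+x_e^2$.
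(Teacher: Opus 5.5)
Your proposal is correct and follows essentially the paper's argument: map a feasible $(x_e^1,x_e^2,x_v)$ of $\mathrm{P}_{\alpha,k,t}$ to $x_e=x_e^1+x_e^2$ in $\mathrm{P}_{1,k,t}$, and use $x_e^1\ge x_e^2$ together with $\alpha\le 1$ to see that the symmetric split $x_e^1=x_e^2$ is optimal. The difference is only presentational. The paper argues by contradiction with a ``without loss of generality $\tilde x_e^1=\tilde x_e^2$'' symmetrization step, whereas your pointwise inequality $\frac{1-\alpha}{2}(x_e^1-x_e^2)\ge 0$ gives the same bound directly.
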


\begin{proof}
    We prove the statement via contradiction.
    To this end, assume $(\vec x_e^*,\vec x_v^*)\in \R^{|E|+|V|}$ is an optimal solution for $\mathrm{P}_{\vec{1},k,t}$, but $(\hat{\ x}_e^1,\hat{\vec x}_e^2,\hat{\vec x}_v)\in \R^{2|E|+|V|}$ defined by \eqref{eq:u-def-solution} is not optimal for $\mathrm{P}_{{\alpha},k,t}$.
    Instead, let $(\tilde{\vec x}_e^1,\tilde{\vec x}_e^2,\tilde{\vec x}_v)\in \R^{2|E|+|V|}$ be an optimal solution for $\mathrm{P}_{\alpha,k,t}$.
    We construct a feasible solution for $\mathrm{P}_{1,k,t}$ from $(\tilde{\vec x}_e^1,\tilde{\vec x}_e^2,\tilde{\vec x}_v)$ that has a smaller objective function value than $(\vec x_e^*,\vec x_v^*)$, contradicting its optimality.
    For that purpose, we first observe that by optimality, we may assume without loss of generality that $\tilde{x}_e^1=\tilde{x}_e^2$ for all $e\in E$.
    Otherwise, one can decrease individual entries of $\tilde{x}_e^1$ while simultaneously increasing $\tilde{x}_e^2$ until equality holds.
    Since $\alpha\in [0,1]$ this can only improve the objective value function.
    Furthermore, let $(\vec x_e,\vec x_v)\in \R^{|E|+|V|}$ be defined by $x_e\coloneqq\tilde{x}_e^1+\tilde{x}_e^2$ and $x_v\coloneqq\tilde{x}_v$ for all $e\in E$ and for all $v\in V$.
    Then, $(\vec x_e,\vec x_v)$ is a feasible solution to $\mathrm{P}_{1,k,t}$ with objective function value
    \begin{align*}
        \mathrm{P}_{1,k,t}((\vec x_e,\vec x_v))
        &=\sum_{e\in E} c_ex_e \\
        &=2\sum_{e\in E} c_e\tilde{x}_e^1 \\
        &=\frac{2}{1+\alpha}\bigg[\sum_{e\in E} c_e\tilde{x}_e^1 + \alpha c_e\tilde{x}_e^2\bigg] \\
        &=\frac{2}{1+\alpha}\mathrm{P}_{\vec{\alpha},k,t}((\tilde{\vec x}_e^1,\tilde{\vec x}_e^2,\tilde{\vec x}_v)) \\
        &<\frac{2}{1+\alpha}\mathrm{P}_{\vec{\alpha},k,t}((\hat{\vec x}_e^1,\hat{\vec x}_e^2,\hat{\vec x}_v)) \\
        &=\frac{2}{1+\alpha}\bigg[\sum_{e\in E} c_e\hat{x}_e^1 + \alpha c_e\hat{x}_e^2\bigg] \\
        &=2\sum_{e\in E} c_e\hat{x}_e^1 \\
        &=\sum_{e\in E} c_ex_e^* \\
        &=\mathrm{P}_{1,k,t}((\vec x_e^*,\vec x_v^*)),
    \end{align*}
    contradicting the optimality of $(\vec x_e^*,\vec x_v^*)$ and, thus, proving that $(\hat{\vec x}_e^1,\hat{\vec x}_e^2,\hat{\vec x}_v)$ defined by \eqref{eq:u-def-solution} is optimal for $\mathrm{P}_{\alpha,k,t}$.
\end{proof}

By \Cref{lem:u-optimal-solution}, an optimal solution for $\mathrm{P}_{1,k,t}$ yields an optimal solution for $\mathrm{P}_{\alpha,k,t}$ by the construction outlined in \eqref{eq:u-def-solution}.
The following lemma compares the optimal objective function values of $\smash{\mathrm{P}_{1,k,t}}$ and $\smash{\mathrm{P}_{\alpha,k,t}}$.

\begin{restatable}{lemma}{uPonePalpha}
\label{lem:u-P1-Palpha}
    For $k\in \N$ and a fixed vertex $t\in V \setminus \{s\}$, let $\mathrm{P}_{1,k,t}^*$ and $\mathrm{P}_{\alpha,k,t}^*$ denote the optimal objective function values of $\mathrm{P}_{1,k,t}$ and $\mathrm{P}_{\alpha,k,t}$, respectively.
    Then, it holds that $
    	\mathrm{P}_{1,k,t}^*= \frac{2}{1+\alpha}\mathrm{P}_{\alpha,k,t}^*.$
\end{restatable}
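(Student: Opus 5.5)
The identity follows almost immediately from \Cref{lem:u-optimal-solution}, together with one linear computation on the objective values. I would argue as follows.

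Let $(\vec x_e^*,\vec x_v^*)$ be an optimal solution of $\mathrm{P}_{1,k,t}$. Its value is $\mathrm{P}_{1,k,t}^*=\sum_{e\in E}c_e x_e^*$. Build $(\hat{\vec x}_e^1,\hat{\vec x}_e^2,\hat{\vec x}_v)$ as in \eqref{eq:u-def-solution}. By \Cref{lem:u-optimal-solution}, this solution is optimal for $\mathrm{P}_{\alpha,k,t}$, so its objective value equals $\mathrm{P}_{\alpha,k,t}^*$. Evaluating that value gives
\begin{align*}
\mathrm{P}_{\alpha,k,t}^*=\sum_{e\in E}\Bigl(c_e\frac{x_e^*}{2}+\alpha c_e\frac{x_e^*}{2}\Bigr)=\frac{1+\alpha}{2}\sum_{e\in E}c_e x_e^*=\frac{1+\alpha}{2}\,\mathrm{P}_{1,k,t}^*.
\end{align*}
Multiplying both sides by $\frac{2}{1+\alpha}$, which is well defined since $\alpha\geq 0$, yields the claim.

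The only step needing attention is the existence of an optimal solution of $\mathrm{P}_{1,k,t}$, since \Cref{lem:u-optimal-solution} presupposes one. I would argue as follows. The LP is feasible whenever a $k$-path ending in $t$ exists, and in the relevant cases one does. Its objective is bounded below by $0$ because costs and variables are nonnegative. Hence an optimum exists by LP theory.

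As a sanity check, and as a fallback that avoids relying on the contradiction argument, I would also prove the two inequalities directly. The map \eqref{eq:u-def-solution} turns any feasible solution of $\mathrm{P}_{1,k,t}$ into a feasible solution of $\mathrm{P}_{\alpha,k,t}$ with value scaled by $\frac{1+\alpha}{2}$; this gives $\mathrm{P}_{\alpha,k,t}^*\leq\frac{1+\alpha}{2}\mathrm{P}_{1,k,t}^*$. Conversely, take any feasible solution of $\mathrm{P}_{\alpha,k,t}$. First balance it to $x_e^1=x_e^2$; this keeps feasibility and does not increase the value, because $\alpha\leq 1$. Then set $x_e=x_e^1+x_e^2$ to get a solution of $\mathrm{P}_{1,k,t}$ with value $\frac{2}{1+\alpha}$ times the $\alpha$-value. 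This gives the reverse inequality.

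I expect the main obstacle to be the balancing step, i.e.\ checking that moving mass from $x_e^1$ to $x_e^2$ preserves the constraint $x_e^1-x_e^2\geq 0$. It does: we only lower $x_e^1$ and raise $x_e^2$ by the same amount, stopping at equality, so that constraint holds with equality. The cut constraints also survive, because they depend only on the sum $x_e^1+x_e^2$, which is unchanged.
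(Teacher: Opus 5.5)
Your main argument is exactly the paper's proof: you take an optimal solution of $\mathrm{P}_{1,k,t}$, map it via \eqref{eq:u-def-solution}, invoke \Cref{lem:u-optimal-solution} to get optimality for $\mathrm{P}_{\alpha,k,t}$, and evaluate the objective to obtain the factor $\frac{1+\alpha}{2}$. Your fallback two-inequality argument is also correct; it is a direct version (no contradiction) of the balance-then-sum argument the paper uses to prove \Cref{lem:u-optimal-solution}, and since it works with infima it has the small advantage of not needing optima to exist.
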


\begin{proof}
    Let $k\in \N$ and $t\in V \setminus \{ s \}$ be fixed and let $(\vec x_e^*,\vec x_v^*)\in \R^{|E|+|V|}$ be an optimal solution for the linear program $\mathrm{P}_{1,k,t}$.
    Let $(\hat{\vec x}_e^1,\hat{\vec x}_e^2,\hat{\vec x}_v)\in \R^{2|E|+|V|}$ be defined by \eqref{eq:u-def-solution}.
    By \Cref{lem:u-optimal-solution}, $(\hat{\vec x}_e^1,\hat{\vec x}_e^2,\hat{\vec x}_v)$ is an optimal solution for $\mathrm{P}_{\alpha,k,t}$.
    Thus, we obtain
	\begin{align*}
		\mathrm{P}_{1,k,t}^*
        &= \mathrm{P}_{1,k,t}((\vec x_e^*,\vec x_v^*)) \\
        &= \sum_{e\in E} c_ex_e^* \\
        &= 2\sum_{e\in E} c_e\hat{x}_e^1 \\ 
        &= \frac{2}{1+\alpha}\bigg[\sum_{e\in E} c_e\hat{x}_e^1 +\alpha c_e\hat{x}_e^2 \bigg] \\
        &= \frac{2}{1+\alpha} \mathrm{P}_{\alpha,k,t}((\hat{\vec x}_e^1,\hat{\vec x}_e^2,\hat{\vec x}_v)) \\
        &= \frac{2}{1+\alpha}\mathrm{P}_{\alpha,k,t}^* \ ,
	\end{align*}        
    proving the claim.
\end{proof}

With \Cref{lem:u-P1-Palpha} we can now show that the trees returned by the primal--dual algorithm are good trees.
The proof uses the dual of the primal program $\mathrm{P}_{1,k,t}$.

\begin{restatable}{lemma}{ualgogivesgoodtrees}
	\label{lem:u-algo-gives-good-trees}
	Let $T_{\lambda}$ be the tree returned by the primal--dual algorithm when run with parameter~$\lambda\geq 0$ and let $k_\lambda$ be the number of vertices in~$V \setminus \{ s \}$ that $T_{\lambda}$ contains.
	Then $T_{\lambda}$ is a good $k_\lambda$-tree for all choices of~$\alpha \in[0,1]$.
\end{restatable}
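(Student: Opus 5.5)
The plan is to reduce to the case $\alpha=1$, which is what \citet{chaudhuri2003paths} analyze, and then convert via \Cref{lem:u-P1-Palpha}. Fix an arbitrary terminal $t\in V\setminus\{s\}$ and let $T_{\lambda,t}$ be the tree returned by the primal--dual subroutine, spanning $k\coloneqq k_{\lambda,t}$ vertices of $V\setminus\{s\}$.

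\textbf{Step 1: the bound at $\alpha=1$.} I will first show that
\[
c(T_{\lambda,t})\le \mathrm{P}_{1,k,t}^*.
\]
To do this, I will write down the dual of $\mathrm{P}_{1,k,t}$. It has variables $y_S$ for the vertex-covering constraints (indexed by pairs $(S,v)$, collapsed in the standard way into budgets), variables $y_U$ for the $t$-connectivity cuts, a variable $\lambda$ for the cardinality constraint, and variables $p_v$ for the bounds $x_v\le 1$.

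The duals grown in the growth phase, with $\lambda$ as the per-vertex budget, form a feasible dual solution. Edges in $F$ are tight, and the delete phase ensures that every remaining leaf-side component was always active. The standard Goemans--Williamson degree argument then charges each edge of the pruned tree to the moats crossing it.

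The key point, following \citet{chaudhuri2003paths}, is the slack. Because the final tree is a single $s$--$t$ tree rather than a closed structure, the usual factor $2$ in Goemans--Williamson is exactly absorbed by the primal constraint coefficient $2x_v$. In addition, the moats containing $t$ are counted once, which matches the right-hand side $1$ of the $t$-cut constraints. The result is
\[
c(T_{\lambda,t})\le \sum_S y_S\cdot(\text{coefficient}) - \lambda(k-1)\cdot(\ldots) \le \mathrm{P}_{1,k,t}^*.
\]
This is done by weak duality, after checking that the set events account precisely for the budget $\lambda$ of the $k$ spanned vertices, so that the $\lambda(k-1)$ terms cancel.

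\textbf{Main obstacle.} Step 1 is where I expect the difficulty. I would mirror Chaudhuri et al.'s accounting exactly, being careful about vertices in deleted components, whose budgets are fully paid by the duals. Their proof is for $\alpha=1$ only, so I would reuse it verbatim; the whole point of introducing $\mathrm{P}_{1,k,t}$ is that nothing $\alpha$-specific enters here.

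\textbf{Step 2: convert to general $\alpha$.} By \Cref{lem:u-P1-Palpha} and \eqref{eq:u-primal-opt-path}, for any $k$-path $P_{k,t}$ ending in $t$,
\[
c(T_{\lambda,t})\le \mathrm{P}_{1,k,t}^*=\frac{2}{1+\alpha}\mathrm{P}_{\alpha,k,t}^*\le \frac{2}{1+\alpha}c_\alpha(P_{k,t}).
\]

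\textbf{Step 3: minimize over $t$.} Let $\seq^*$ be an optimal $k_\lambda$-path and $t^*$ its endpoint. The issue is that $k_{\lambda,t}$ may depend on $t$. I will handle this by arguing, as in \citet{chaudhuri2003paths}, that the algorithm's returned $T_\lambda=T_{\lambda,t'}$ together with $k_\lambda=k_{\lambda,t'}$ is compared with paths ending at an arbitrary vertex.

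Concretely, since $\seq^*$ visits $k_\lambda$ vertices and ends at some $t^*$, it is a feasible solution of $\mathrm{P}_{1,k_\lambda,t^*}$. I will then need the lower bound from Step 1 for the pair $(t^*,k_\lambda)$, i.e. $c(T_{\lambda,t^*})\le \mathrm{P}^*_{1,k_{\lambda,t^*},t^*}$. I would resolve the mismatch by using the dual bound in its Lagrangian form,
\[
c(T_{\lambda,t})+\lambda(n-k_{\lambda,t})\le \text{value of any path ending at } t \text{ plus } \lambda\cdot(\text{unvisited}).
\]
Applying this at $t=t^*$ and combining it with the minimality of $c(T_\lambda)$ over $t$ (with ties broken so that the chosen tree also has the right count) yields $c(T_\lambda)\le \frac{2}{1+\alpha}c_\alpha(\seq^*)$, i.e. $T_\lambda$ is good.
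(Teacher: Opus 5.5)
Your Steps 1 and 2 follow the paper's argument. The paper also imports $c(T_{\lambda,t})\le \mathrm{D}^*_{1,k_{\lambda,t},t}$ from Chaudhuri et al.\ without reproving it. It then applies weak duality, and combines \Cref{lem:u-P1-Palpha} with \eqref{eq:u-primal-opt-path} to get \eqref{eq:u-chaudhuri-total} for every terminal $t$. One cosmetic point: in your dual sketch, the multiplier of the cardinality constraint is the variable $p$ of $\mathrm{D}_{1,k,t}$, not the budget $\lambda$ itself.

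The gap is in Step 3. Let $t'$ be the terminal whose tree is returned, so $k_\lambda=k_{\lambda,t'}$, and let $\seq^*$ be an optimal $k_\lambda$-path with endpoint $t^*$. A single dual solution is feasible for $\mathrm{D}_{1,k',t^*}$ for every $k'$, so the Lagrangian form of the bound at $t^*$ gives
\[
c(T_{\lambda,t^*})\le \frac{2}{1+\alpha}\,c_\alpha(\seq^*)+p\,(k_{\lambda,t^*}-k_\lambda).
\]
The algorithm's selection rule gives only $c(T_\lambda)\le c(T_{\lambda,t^*})$. When $k_{\lambda,t^*}>k_\lambda$, the error term is positive and nothing in your argument cancels it.

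Tie-breaking does not help. $T_{\lambda,t^*}$ need not have the same cost as $T_\lambda$, and the count of the returned tree is $k_\lambda$ by definition, so "the right count" has no content. Raw-cost minimality over $t$ therefore does not yield $c(T_\lambda)\le \frac{2}{1+\alpha}c_\alpha(\seq^*)$.

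There are two ways to close the gap.

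\emph{Option (i).} Select the terminal by the Lagrangian objective, minimizing $c(T_{\lambda,t})-p\,k_{\lambda,t}$ over $t$. This requires the multiplier $p$ to be the same for all $t$, as it is when $p$ is set from the common budget $\lambda$. Then
\[
c(T_\lambda)-p\,k_\lambda\le c(T_{\lambda,t^*})-p\,k_{\lambda,t^*}\le \frac{2}{1+\alpha}\,c_\alpha(\seq^*)-p\,k_\lambda .
\]

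\emph{Option (ii).} Fix $k$ first and minimize over $t$ only among real or phantom trees with that same count, as \Cref{alg:u-phantom-tree} does. Then $c(T_k)\le c(T_{k,t^*})\le \frac{2}{1+\alpha}c_\alpha(\seq^*)$ is immediate.

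The paper's own proof does not resolve this point either. After \eqref{eq:u-chaudhuri-total}, it passes in one line to the cheapest $T_{\lambda,t}$ and an optimal $k_\lambda$-path, which tacitly needs $k_{\lambda,t^*}\le k_\lambda$. You were right to flag the mismatch, but your proposed repair does not fix it. The paper's downstream use, \Cref{lem:u-good-phantom-trees2}, relies only on the per-terminal bound with $k$ fixed before minimizing over $t$, which is option (ii).
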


\begin{proof}
	Let $k\in \setn$, $t\in V \setminus \{ s \}$, and $\alpha\in[0,1]$ be fixed.
    Further, let $P_{\alpha,k,t}$ and $P_{\vec{\alpha},k,t}$ be two $k$-paths that end in $t$ such that 
	$P_{\alpha,k,t}$ is optimal with respect to $\alpha$ and 
	$P_{\alpha,k,t}$ is optimal with respect to $\alpha$.
	Combining \Cref{lem:u-P1-Palpha} with observation~\eqref{eq:u-primal-opt-path} yields
	\begin{align}
		\label{eq:inequality-chain}
		\begin{split}
			\mathrm{P}_{1,k,t}^*
			&= \frac{2}{1+\alpha}\mathrm{P}_{\alpha,k,t}^*
			\leq \frac{2}{1+\alpha}c_{\alpha}(P_{\alpha,k,t})
		\end{split}
	\end{align}
	for all $k\in \setn$ and $t\in V \setminus \{ s \}$.
	
	Let $T_{{\lambda,t}}$ be the tree returned by the primal--dual subroutine when run on terminal vertex $t$ and parameter $\lambda$ and let $k_{\lambda,t}\in \setn$ be the number of vertices in $V\setminus \{ s \}$ that $T_{{\lambda,t}}$ contains.

    To compare the \edgec of~$T(\lambda,t)$ to~$\mathrm{P}_{1,k(\lambda,t),t}^*$, we use the dual linear program~$\mathrm{D}_{1,k,t}$ of~$\mathrm{P}_{1,k,t}$ introduced in \cite{chaudhuri2003paths} and stated as follows:
	\begin{align*}
	\begin{array}{lrcll}
		\text{Max.}
		&\multicolumn{4}{l}{\!\!\!\!(k-1)p-\sum_{v \in V\setminus\{s,t\}} p_v + \sum_{U:t\in U, s\notin U}y_{t,U}}\\
		\text{s.t.}
		&2\sum _{S \ni v}y_{v,S}+p_v &\geq &p
		&\! \text{ for all } v\in V\setminus\{t\}\\
		&\!\!\!\!\sum_{S:e\in\delta(S)}\sum_{v\in S}y_{v,S}+\sum_{U:t\in U,e\in \delta(U)}y_{t,U}  &\leq & c_e
		&\! \text{ for all }e\in E \\
		&p_v &\geq &0 
		&\! \text{ for all } v\in V\\
		&y_{v,S} &\geq &0 
		&\! \text{ for all } S\subseteq V\setminus\{s,t\}, \text{ for all } v\in S\\
		&y_{t,U} &\geq &0
		&\!\text{ for all } U\subseteq V\setminus\{s\} : t\in U. 
	\end{array}
	\end{align*}
    
    Let~$\mathrm{D}_{1,k,t}^*$ and~$\mathrm{P}_{1,k,t}^*$ be the objective values of optimal solutions for~$\mathrm{D}_{1,k,t}$ and~$\mathrm{P}_{1,k,t}$, respectively.
	By weak duality, they obtain
	\begin{align}
	    \label{eq:u_dual}
	    \mathrm{D}_{1,k,t}^* \leq \mathrm{P}_{1,k,t}^*
	\end{align}
	for all $k\in \setn$ and for all $t\in V\setminus \{ s \}$.
	Furthermore, they prove that the \edgec of $T_{{\lambda,t}}$ is bounded by the optimal objective function value $\mathrm{D}_{1,k_{\lambda,t},t}^*$ of $\mathrm{D}_{1,k_{\lambda,t},t}$.
	Together with \eqref{eq:inequality-chain} and \eqref{eq:u_dual}, this yields
	\begin{align}
		\begin{split}
			\label{eq:u-chaudhuri-total}
		    c(T_{{\lambda,t}})
		    \leq \mathrm{D}_{1,k_{\lambda,t},t}^*
		    \leq \mathrm{P}_{1,k_{\lambda,t},t}^*
		    \leq \frac{2}{1+\alpha}c_{\alpha}(P_{{\alpha},k_{\lambda,t},t})
		\end{split}
	\end{align}
	for all $t\in V\setminus \{ s \}$.
	In particular, let $T_{\lambda}$ be the tree returned by the primal--dual algorithm, i.e., the tree $T_{{\lambda,t}}$ with the lowest \edgec over all possible choices of $t\in V\setminus \{ s \}$ and let $P_{\alpha,k_{\lambda}}$ be an optimal $k_{\lambda}$-path.
	Then, \eqref{eq:u-chaudhuri-total} yields that
	\begin{align*}
		c(T_{{\lambda}})\leq \frac{2}{1+\alpha}c_{\alpha}(P_{\alpha,k_{\lambda}}).
	\end{align*}
	This proves that the trees returned by the primal--dual-algorithm of \cite{chaudhuri2003paths} are \emph{good} trees for all choices of ${\alpha}\in[0,1]$.
\end{proof}

Our goal is to use the primal--dual algorithm to compute good $k$-trees for a sufficiently large set of values of~$k$ such that these trees can be used as input for the concatenating algorithm.
However, for a given value~$k$, there may not exist a parameter~$\lambda$ such that the tree~$T_\lambda$ contains exactly $k$ vertices of~$V \setminus \{s\}$, and even if there was such a parameter, there is no simple formula that computes the required parameter~$\lambda$.
Still, the following lemma shows that we can compute either a parameter~$\lambda$ such that~$k=k_{\lambda,t}$ or two values $\lambda, \lambda'$ such that $k_{\lambda,t}<k<k_{\lambda',t}$ and $k_{\lambda''}\in \{k_\lambda,k_{\lambda'}\}$ for all $\lambda''\in [\lambda,\lambda']$ within polynomial time.
The proof works via induction.
In particular, we start with a sufficiently large interval~$(\lambda_l,\lambda_r)$ such that $k_{\lambda_l} < k< k_{\lambda_r}$.
In each step of the induction, we then partition the interval into smaller subintervals such that in the end we either find a parameter~$\lambda$ with~$k=k_\lambda$ or an interval $(\lambda_l^*,\lambda_r^*)$ such that $k_{\lambda_l^*} < k< k_{\lambda_r^*}$ and the primal--dual subroutine returns the same tree for all~$\lambda\in (\lambda_l^*,\lambda_r^*)$.
We denote by $T_{\lambda,t}$ the tree returned by the primal--dual subroutine when run on parameter~$\lambda$ and terminal vertex~$t$ and by~$k_{\lambda,t}$ the number of vertices in~$V \setminus \{ s \}$ it contains.

\begin{restatable}{lemma}{urunningtimelambda}
	\label{lem:u-running-time-lambda}
	Let $k\in\setn$ and $t\in V \setminus \{s\}$ be fixed.
	If $k_{0,t}\leq k_{\lambda,t}$, then one can compute either a parameter~$\lambda$ such that $k=k_{\lambda,t}$
	or two parameters~$\lambda, \lambda'$ such that $k_{\lambda,t}<k<k_{\lambda',t}$ and $k_{\lambda''}\in \{k_{\lambda,t},k_{\lambda',t}\}$ for all $\lambda''\in [\lambda,\lambda']$ within polynomial time.
\end{restatable}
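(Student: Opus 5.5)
The plan is a binary search on $\lambda$ combined with a structural argument. The output of the primal--dual subroutine depends on $\lambda$ only through the order in which edge events and set events occur. The hypothesis is read as $k_{0,t}\le k$, together with an upper endpoint $\lambda_r$ at which $k_{\lambda_r,t}\ge k$.

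\textbf{Endpoints.} First I fix the starting interval. If $k_{0,t}=k$ we are done. Otherwise I choose $\lambda_r$ large enough, for instance $\lambda_r=n\sum_{e}c_e+1$. With this budget no set event can occur before every vertex has merged with $s$, so the tree before pruning spans all of $V$. No component other than $s$'s ever becomes inactive, so the delete phase removes nothing and $k_{\lambda_r,t}=n\ge k$. If equality holds we are again done. Thus we have an interval $[\lambda_l,\lambda_r]$ with $k_{\lambda_l,t}<k<k_{\lambda_r,t}$.

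\textbf{Bisection.} While the tree counts at the two endpoints differ from $k$, I evaluate the subroutine at the midpoint $\lambda_m$. If $k_{\lambda_m,t}=k$ I stop. Otherwise I keep the half whose endpoint counts still straddle $k$: the left half if $k_{\lambda_m,t}>k$, the right half if $k_{\lambda_m,t}<k$. This keeps the invariant that the endpoints straddle $k$, even though the map $\lambda\mapsto k_{\lambda,t}$ is not monotone (cf.\ \Cref{sec:app:non-monotonicity}).

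\textbf{Termination.} All event times in the growth phase are solutions of linear equations in the $y$-variables. Their coefficients come from the integral costs $c_e$ and from budget sums of the form $|S|\lambda$. Hence every breakpoint $\lambda$ at which the combinatorial order of events changes is a rational number whose numerator and denominator have polynomially bounded encoding length. Consequently, two distinct breakpoints differ by at least $1/\Delta$, where $\log\Delta$ is polynomial in the input size. After $O(\log(\lambda_r\Delta))$ halvings, which is polynomially many, the interval contains at most one breakpoint. On each open piece between breakpoints the sequence of events is fixed, so the subroutine returns the same tree there. With at most one breakpoint in the interval, every $\lambda''\in[\lambda,\lambda']$ yields one of at most two trees. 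The tree at the breakpoint itself coincides with one of the two sides, given a fixed tie-breaking rule. Every $\lambda''$ in the interval therefore gives $k_{\lambda'',t}\in\{k_{\lambda,t},k_{\lambda',t}\}$, and the endpoints straddle $k$ as required.

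\textbf{Main obstacle.} The hardest step is the bound on the separation of breakpoints. I would argue it by parametric analysis, in the style of Megiddo. Each event time is an affine function of $\lambda$ whose coefficients are bounded by $n$ and by the sums of the $c_e$. A change in event order happens exactly when two such affine functions cross, and the crossing point is a ratio of integers with polynomially many bits. The subtle point is that event times inherit dependence on earlier events. An induction over the polynomially many events, however, shows that all quantities stay affine in $\lambda$ with polynomially sized rational coefficients. As an alternative to bisection, one can then trace the breakpoints directly.
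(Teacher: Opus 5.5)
Your route differs from the paper's. The paper traces the parametric run explicitly. On the current interval it keeps every $y_S$ affine in $\lambda$ and writes the candidate next-event times as affine functions. It computes the at most $n+m-1$ breakpoints of their lower envelope and runs the subroutine \emph{at} those breakpoints. It then recurses into an open piece whose endpoints straddle $k$, and after at most $2n$ events it reaches a final open piece with a fixed event sequence. You replace this by bisection plus a bit-length separation bound on breakpoints. That bound is fine in substance: the event-time coefficients are dyadic with denominators $2^{\mathcal{O}(n)}$, because of the factor $\tfrac12$ for edges between two active moats. Your straddling invariant also survives the non-monotonicity. The paper's trace is strongly polynomial and needs no bit-length analysis; bisection is simpler but only weakly polynomial.

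There is a genuine gap in your last step. Suppose the final interval $[\lambda,\lambda']$ contains one breakpoint $\lambda^*$ in its interior. You then know the count equals $k_{\lambda,t}$ on $[\lambda,\lambda^*)$ and $k_{\lambda',t}$ on $(\lambda^*,\lambda']$. The lemma also requires this at $\lambda^*$, and your claim that the tree at $\lambda^*$ coincides with one side under a fixed tie-breaking rule is not justified. It holds for a single two-event tie, but several ties can occur simultaneously at $\lambda^*$, and the subroutine resolves them arbitrarily.

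For example, attach leaves $u_1,u_2$ to $s$ by edges of cost $c$, and put $t$ elsewhere. For $\lambda<c$ both set events precede the edge events, so neither leaf ends up in the tree. For $\lambda>c$ both leaves are kept. At $\lambda=c$, a fixed rule that handles the set event of $u_1$ first but the edge event of $u_2$ first keeps exactly one leaf. This gives a third count, strictly between $k_{\lambda,t}$ and $k_{\lambda',t}$, which could even equal $k$.

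The paper never argues about the tree at a breakpoint. It evaluates the subroutine there and always returns an interval with one endpoint at such an evaluated point and all remaining points in a single open piece. You can repair your proof the same way:
\begin{enumerate}
\item After isolating $\lambda^*$, recover it exactly as the unique bounded-denominator rational in the final interval, e.g.\ by continued fractions.
\item Run the subroutine at $\lambda^*$.
\item Output $\lambda^*$ if $k_{\lambda^*,t}=k$; output the pair $\lambda,\lambda^*$ if $k_{\lambda^*,t}>k$; output the pair $\lambda^*,\lambda'$ if $k_{\lambda^*,t}<k$.
\end{enumerate}
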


\begin{proof}
	Let $k\in\setn$ and $t\in V\setminus \{ s \}$ be fixed.
	Throughout the proof, we analyze the behavior of the primal--dual subroutine and its returned trees for the fixed terminal vertex~$t$.
	Thus, for ease of notation, we write $T_{\lambda}\coloneqq T_{{\lambda,t}}$ and $k_\lambda\coloneqq k_{\lambda,t}$ for the remainder of this proof.
	
	The basic idea of the proof is as follows.
	We start with two values $\lambda_l$ and $\lambda_r$ such that $k_{\lambda_l} < k< k_{\lambda_r}$.
	Recall that in the primal--dual subroutine the dual variables~$y_S$ of active components are uniformly increased until either an edge or a set event occurs.
	That is, either an edge enters the set~$F$ or an active component becomes inactive, so its dual variable stops growing.
	We assume that the first $i$ events are the same for all $\lambda \in (\lambda_l,\lambda_r)$.
	Then, we subdivide the interval $(\lambda_l,\lambda_r)$ into smaller subintervals such that within each subinterval the first $i+1$ events are the same.
	We continue until we have either found a parameter $\lambda$ with $k=k_\lambda$ or two parameters $\lambda$ and $\lambda'$ such that $k_\lambda<k<k_{\lambda'}$ and~$k_{\lambda''}\in \{k_\lambda,k_{\lambda'}\}$ for all~$\lambda''\in [\lambda,\lambda']$.
	
	For the proof, we need to consider values of $\lambda$ for which two events occur at the same time during the execution of the primal--dual subroutine.
	These values will subdivide the original interval into a set of open subintervals.
	The proof of the lemma then works via induction.
	To this end, assume we are given two values $\lambda_l$ and $\lambda_r$ such that for all $\lambda\in(\lambda_l,\lambda_r)$ the first $i$ events are the same and $k_{\lambda_l}<k < k_{\lambda_r}$.
	For the start of the induction, i.e., $i=0$, we need to ensure the existence of~$\lambda_l$ and~$\lambda_r$ such that $k_{\lambda_l}<k < k_{\lambda_r}$.
	We argue that this can be achieved by setting $\lambda_l=0$ and $\lambda_r=n\cdot\max\{c_e:{e\in E}\}$.
	In particular, for $\lambda=0$ all components except for $t$ become inactive immediately and only the component of $t$ keeps growing until it merges with vertex~$s$.
	The tree returned after the delete phase is a shortest $s$--$t$-path.
	To see this, note that if $\lambda=0$ the growth phase of the primal--dual subroutine behaves like an algorithm computing a shortest-path-tree for vertex~$t$ as it recursively adds edges to~$F$ that lie on a shortest path starting in~$t$.
	However, the algorithm is interrupted as soon as the tree contains vertex~$s$ which is when the delete phase starts and all leafs of the tree are deleted until only the $s$--$t$-path remains.
	Since by assumption this shortest $s$--$t$-path~$T_0$ contains at least $k$ vertices, we have $k_0=k_{\lambda_l}\leq k$.
	For $\lambda=n\cdot\max\{c_e:{e\in E}\}$, on the other hand, all components remain active until they merge with the component containing $s$ and hence, the returned tree $T_{k_{\lambda_r}}$ contains all vertices. 
	This ensures that $k \leq k_{\lambda_r}$.
	Note that we are already done if either $k=k_{\lambda_l}$ or $k=k_{\lambda_r}$ holds.
	Thus, we may assume that $k_{\lambda_l}<k < k_{\lambda_r}$ and the first $i=0$ events are the same for all $\lambda\in(\lambda_l,\lambda_r)$.
	
	We make one further assumption for the induction that we also need to prove for the induction start.
	In particular, we assume that for all subsets $S\subseteq V$ we can find values $\alpha_S,\beta_S\in \R$ such that $y_S=\alpha_S \lambda +\beta_S$ for all $\lambda\in(\lambda_l,\lambda_r)$.
	For the induction start, this assumption is trivially satisfied as initially, all variables $y_S$ are equal to $0$, and thus, we may set $\alpha_S=0$ and $\beta_S=0$ for all $S\subseteq V$.
	
	We are now ready to analyze the induction step.
	To this end, assume we are given an interval~$(\lambda_l,\lambda_r)$ such that the first~$i$ events are the same for all~$\lambda\in(\lambda_l,\lambda_r)$ and $k_{\lambda_l}<k<k_{\lambda_r}$.
	Further, we are given values $\alpha_S,\beta_S\in \R$ for all~$S\subseteq V$ such that~$y_S=\alpha_S \lambda+\beta_S$ for all $\lambda\in(\lambda_l,\lambda_r)$.
	To find a subinterval within $(\lambda_l,\lambda_r)$ where the first $i+1$ many events are the same, we need to find the point in time after the $i$th event when the next event occurs.
	To this end, let $\mathcal{A}$ denote the set of active components after the first $i$ events.
	Further, let $\mathcal{E}_1$ denote the set of edges with one endpoint in an active set and one endpoint in an inactive set, and let $\mathcal{E}_2$ denote the set of edges with both endpoints in two distinct active sets after the first $i$ events.
	Note that the setS~$\mathcal{A},\mathcal{E}_1$, and $\mathcal{E}_2$ are the same for all $\lambda\in(\lambda_l,\lambda_r)$.
	There are three possible candidates for the $i+1$st event.
	First, an active set $S\in \mathcal{A}$ becomes inactive.
	This happens at time
	\begin{align*}
		t_S(\lambda)
		&\coloneqq \lambda |S| -\sum_{T:T\subset S}y_T
		=\lambda |S| - \sum_{T:T\subset S} \alpha_T\lambda + \beta_T.
		\intertext{Second, an edge $e\in \mathcal{E}_1$ becomes tight.
		This happens at time}
		t_e(\lambda)
		&\coloneqq c_e-\sum_{T:e\in \delta(T)}y_T
		=c_e-\sum_{T:e\in \delta(T)}\alpha_T \lambda +\beta_T.
		\intertext{Third, an edge $e\in \mathcal{E}_2$ becomes tight.
		This happens at time}
		t_e(\lambda)
		&\coloneqq \frac{1}{2}\bigg(c_e-\sum_{T:e\in \delta(T)}y_T\bigg)
		=\frac{1}{2}\bigg(c_e-\sum_{T:e\in \delta(T)}\alpha_T \lambda +\beta_T\bigg).
	\end{align*}
	The functions $t_S$ and $t_e$ are affine in $\lambda$ for all $S\in \mathcal{A}$ and all $e\in \mathcal{E}_1\cup \mathcal{E}_2$.
	Furthermore, the functions $t_S$ have a positive slope for all $S\in \mathcal{A}$ while the functions $t_e$ have a negative slope for all $e\in \mathcal{E}_1\cup \mathcal{E}_2$.
	An illustration is given in \Cref{fig:lambda-subinterval}.
	The minimum of these affine functions is a piecewise affine concave function and the affine function $t_S$ or $t_e$ for which the minimum is attained for some $\lambda$ corresponds to the $i+1$st event for this precise value of $\lambda$.
	We are thus particularly interested in the values of $\lambda$ for which the minimum has a breakpoint, that is two affine functions intersect.
	These values subdivide our interval into smaller open intervals such that within each of these, the first $i+1$ events are the same.	
	
	At this point, it also becomes clear, why we divide $(\lambda_l,\lambda_r)$ into further open intervals instead of closed ones:
	If two events occur simultaneously, we need to distinguish, which event is carried out first.
	If both events are set events, it does not matter in which order the events are carried out.
	In particular, letting one set become inactive does not affect the other set.
	Hence, the second set will still become inactive right after.
	If both events are edge events and the corresponding edges do not connect the same two components, we also carry out both events, no matter which one we carry out first.
	However, if both edges $e$ and $e'$ connect the same two components, only one event will be carried out and the choice of this event may result in a different tree returned after the delete phase.
	If the two corresponding affine functions $t_e$ and $t_{e'}$ have a different slope, considering the open interval to the left and to the right of~$\lambda$ determines a strict order in which the events need to be carried out.
	In the special case, where the slopes of $t_e$ and $t_{e'}$ are the same, we do a deterministic tie-breaking.
	Next, if one event is an edge event for edge $e$ and the other a set event for set $S$, the choice of which event is carried out first may have a strong effect on the further course of the algorithm.
	But, since the slope of the corresponding function $t_S$ is positive and the slope of the function $t_e$ is negative, considering the open interval to the left and to the right of~$\lambda$ again determines a strict order in which the events need to be carried out.
	This approach also takes care of all scenarios, in which more than two events coincide.
	
\begin{figure}[t]
	\begin{center}
	\begin{tikzpicture}[xscale=5,yscale = 4, shorten >= 0pt,
	    shorten <= 0pt,]
	\begin{scope}[xshift=15cm,yshift=-1.5cm]
	
	\draw[thick,Dandelion] (0,1/10) -- (2/3,9/10);
	\draw[thick,Dandelion] (0,4/10) -- (9/10,9/10);
	\draw[thick,MidnightBlue] (0,4/8) -- (1,3/8);
	\draw[thick,MidnightBlue] (0,6/8) -- (1,2/8);
	
	\draw[very thick,Red] (0,1/10) -- (0.302,0.462);
	\draw[very thick,Red] (0.302,0.462) -- (2/3,5/12);
	\draw[very thick,Red] (2/3,5/12) -- (1,2/8);
	
	\node[label=above:{$t_S(\lambda)$}] (s) at (2/3,9/10) {};
	\node[label=above:{$t_{S'}(\lambda)$}] (s) at (9/10,9/10) {};
	\node[label=right:{$t_e(\lambda)$}] (s) at (1,3/8) {};
	\node[label=right:{$t_{e'}(\lambda)$}] (s) at (1,2/8) {};
	
	\draw[thin,->] (-0.1,0) to (-0.1,1) node[above] {}; 
	\draw[thin,->] (-0.1,0) to (1.15,0) node[right] {$\lambda$};
					
	\draw[thin] (1,0.05) -- (1,-0.05) node[below] {$\lambda_r$};
	\draw[thin] (0,0.05) -- (0,-0.05) node[below] {$\lambda_l$};
	\draw[thin] (0.302,0.05) -- (0.302,-0.05) node[below] {$\lambda_1^{\phantom{+}}$}; 
	\draw[thin] (2/3,0.05) -- (2/3,-0.05) node[below] {$\lambda_2^{\phantom{+}}$};
	
	\end{scope}
	\end{tikzpicture}
	\end{center}
	\caption{The affine functions $t(\lambda)$ depicted for iteration $i$.
	The \textcolor{Dandelion}{orange} functions with positive slope determine the time when a set event occurs.
	The \textcolor{MidnightBlue}{blue} functions with negative slope determine the time when an edge event occurs.
	The minimum determines the next event and is colored in \textcolor{Red}{red}.
	The breakpoints of the minimum divide the interval $(\lambda_l,\lambda_r)$ into open subintervals such that within each subinterval the first $i+1$ events are the same.} 
	\label{fig:lambda-subinterval}
	\end{figure}

	Let $\lambda_1,\dots, \lambda_j$ be the breakpoints of the minimum of our set of affine functions.
	Then, $\lambda_1,\dots, \lambda_j$ divide our previous interval $(\lambda_l,\lambda_r)$ into subintervals $(\lambda_l,\lambda_1),(\lambda_1,\lambda_2),\dots,(\lambda_j,\lambda_r)$ such that the first $i+1$ events are the same within each subinterval.
	Since the set $\mathcal{A}$ contains at most $n$ active sets and the sets $\mathcal{E}_1\cup \mathcal{E}_2$ contain at most $m$ edges, there are at most $n+m$ affine functions.
	Hence, the minimum of these functions has at most $n+m-1$ many breakpoints, resulting in at most $n+m$ subintervals.
	
	Next, we run the primal--dual subroutine for all parameters $\lambda_1,\dots,\lambda_j$.
	If there exists a parameter $\lambda_*\in \{\lambda_1,\dots,\lambda_j\}$ such that $k=k_{\lambda_*}$ we are done.
	Otherwise, it follows from $k_{\lambda_l}<k< k_{\lambda_r}$, that there exists at least one subinterval $(\lambda_h,\lambda_{h+1})$ for some $h\in \{0,1,\dots,j\}$ where $\lambda_0=\lambda_l$ and $\lambda_{j+1}=\lambda_r$ such that $k_{\lambda_h}<k< k_{\lambda_{h+1}}$.
	This is the interval for the next step of the induction.
	To continue, we need to determine values for $\alpha_S$ and $\beta_S$ for all subsets $S\subseteq V$ such that $y_S=\alpha_S \lambda + \beta_S$ for all $\lambda \in ({\lambda_h},{\lambda_{h+1}})$ after the $i+1$st event.
	To this end, recall that the interval $({\lambda_h},{\lambda_{h+1}})$ corresponds to an affine segment of the minimum of the affine functions $t_S$ and $t_e$.
	Thus, there is a unique affine function $t$ determining the minimum for all $\lambda \in ({\lambda_h},{\lambda_{h+1}})$.
	In particular, for $\lambda \in ({\lambda_h},{\lambda_{h+1}})$ the dual variables $y_S$ of an active component $S\in \mathcal{A}$ have been increased by exactly $t(\lambda)$ since this is the time span between the $i$th and the $i+1$st event.
	Thus, we have $y_S=\alpha_S \lambda +\beta_S + t(\lambda)$ for all $S\in \mathcal{A}$ and $\lambda \in ({\lambda_h},{\lambda_{h+1}})$.
	Since $t(\lambda)$ is an affine function in $\lambda$ we can update $\alpha_S$ and $\beta_S$ appropriately.
	
	We claim that the induction terminates after at most polynomially many iterations.
	To this end, note that the set of tight edges~$F$ is always a forest.
	With $|V|=n+1$, at most $n$ edge events can occur before the algorithm terminates.
	Furthermore, the number of active components cannot increase from one iteration to the next.
	At the beginning of the primal--dual subroutine, there are at most $|V \setminus \{ s \}|=n$ active components.
	Thus, at most $n$ set events can occur before the algorithm terminates.
	In total, we have an upper bound of at most $2n$ different events for a fixed $\lambda$, so there are at most $2n$ steps of the induction.
	Hence, if throughout all induction steps we did not compute a parameter~$\lambda$ such that $k=k_\lambda$, we end up with a final interval~$(\lambda_l,\lambda_r)$ such that $k_{\lambda_l}<k<k_{\lambda_r}$ and all events are the same for all $\lambda\in (\lambda_l,\lambda_r)$.
	More precisely, the primal--dual subroutine returns the same tree for all $\lambda\in (\lambda_l,\lambda_r)$.
	Thus, we may pick an arbitrary parameter $\lambda\in (\lambda_l,\lambda_r)$.
	If $k=k_\lambda$ we are done.
	Otherwise, we have either $k_{\lambda_l}<k<k_{\lambda}$ or $k_{\lambda}<k<k_{\lambda_r}$.
	In the former case, we have found two parameters $\lambda_l$ and $\lambda$ such that $k_{\lambda_l}<k<k_{\lambda}$ and $k_{\lambda'}\in\{k_{\lambda_l},k_{\lambda}\}$ for all $\lambda'\in[\lambda_l,\lambda]$ as desired.
	The latter case is analogous.
	
	It remains to analyze the running time of the entire procedure.
	Note that the most time-consuming part of a single iteration is the execution of the primal--dual subroutine for all parameters $\lambda_1,\dots,\lambda_j$.
	However, since $j$ is at most $n+m$ and a single execution of the subroutine requires $\mathcal{O}(n^2)$ time, the running time of a single iteration lies in $\mathcal{O}(n^4)$, where we used that $m\leq n^2$.
	Altogether, the entire procedure of computing either a value $\lambda$ such that $k=k_\lambda$ or two parameters $\lambda, \lambda'$ such that $k_\lambda<k<k_{\lambda'}$ and $k_{\lambda''}\in\{k_\lambda,k_{\lambda'}\}$ for all $\lambda''\in [\lambda,\lambda']$ requires running time bounded by $\mathcal{O}(n^5)$.
	This finishes the proof.	
\end{proof}

Let $\mathcal{T}^*$ be the set of trees computed by the primal--dual algorithm. By \Cref{lem:u-algo-gives-good-trees}, these are good $k$-trees for some values of~$k$.
We proceed to argue that~$\mathcal{T}^*$ contains a good $0$-tree and a good $n$-tree.
To see this, note that for $\lambda=n\cdot \max\{c_e:{e\in E}\}$ the tree~$T_{\lambda,t}$ returned by the primal--dual subroutine contains all vertices no matter the choice of~$t\in V \setminus \{s\}$.
Thus, also $T_\lambda$ contains all vertices and is a good $n$-tree by~\Cref{lem:u-algo-gives-good-trees}.
The set~$\mathcal{T}^*$, however, may not contain a good $k$-tree for all values of~$k\in\setn_0$.
We, thus, introduce the notion of phantom trees.

\subsection{Phantom Trees}
\label{ssec:u-phantom-trees}

The goal of this section is to design an algorithm that computes good $k$-trees for all values of $k\in\setn_0$.
To do so, we introduce the definition of \emph{phantom trees}.
Such a tree is an artificial object that is attained by a linear interpolation of two real trees.
It, thus, does not really exist as a tree of a graph but functions as a placeholder.
The algorithm is stated in \Cref{alg:u-phantom-tree} and the basic idea is as follows.
For each pair of a value~$k\in\setn$ and a terminal vertex~$t\in V \setminus \{s\}$, the algorithm applies \Cref{lem:u-running-time-lambda}, which returns either one or two parameters~$\lambda,\lambda'\in\R_{\geq 0}$.
In the former case, we run the primal--dual subroutine and obtain a tree~$T_{\lambda,t}$ that contains exactly $k$~vertices.
In the latter case, we run the primal--dual subroutine for each of the two parameters and obtain two trees~$T_{\lambda,t}$ and~$T_{\lambda',t}$.
Afterward, we construct a phantom tree as a linear interpolation of~$T_{\lambda,t}$ and~$T_{\lambda',t}$ and say it contains $k$~vertices.
In total, this gives a set of $n$~real and phantom $k$-trees for each $k\in\setn$ out of which we pick the cheapest and denote it by~$T_k$.
Finally, the set~$\mathcal{T}$ containing all those $k$-trees is returned.
We argue that every tree~$T_k\in\mathcal{T}$ is a good $k$-tree.

\begin{algorithm}[t]
\small
\caption{Construct Real and Phantom Trees\label{alg:u-phantom-tree}}
\DontPrintSemicolon
$T_{0} \gets (\cdot)$; $\mathcal{T} \gets \{ T_{0} \}$\;
\For{$k\in\setn$}{
	$\mathcal{T}_k \gets \emptyset$\;
	\For{$t\in V \setminus \{s\}$}{
		\If{$k_{0,t}\leq k$}{
			\If{\Cref{lem:u-running-time-lambda} finds $\lambda\in \R_{\geq 0}$ such that $k=k_{\lambda,t}$}{
			$\mathcal{T}_k\gets\mathcal{T}_k\cup T_{\lambda,t}$\;
			}
			\Else{
			$\mu \gets \frac{k-k_{\lambda,t}}{k_{\lambda',t}-k_{\lambda,t}}$ where $\lambda,\lambda'\in \R_{\geq 0}$ are as in \Cref{lem:u-running-time-lambda}\;
			define $T_{k,t}$ as phantom $k$-tree with
			$c(T_{k,t}) \gets (1 - \mu)c(T_{\lambda,t}) + \mu \,c(T_{\lambda',t})$\;
			$\mathcal{T}_k\gets \mathcal{T}_k\cup T_{k,t}$\;
			}
		}
	}
	$\mathcal{T}\gets \mathcal{T}\cup T_k$ where $T_k\gets \arg\min\{c(T): T\in \mathcal{T}_k \}$\;
}
\Return{$\mathcal{T}$}\;
\end{algorithm}

\begin{restatable}{lemma}{ugoodphantomtrees2}
	\label{lem:u-good-phantom-trees2}
	Let $\mathcal{T}$ be the set of real and phantom trees returned by~\Cref{alg:u-phantom-tree} and let $T_k\in\mathcal{T}$ for some~$k\in \setn_0$.
	Then $T_k$ is a good $k$-tree.
\end{restatable}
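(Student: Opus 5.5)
For $k=0$ the tree $T_0$ is the empty tree containing only $s$, which costs $0$, so it is trivially good. For $k\in\setn$ we have $T_k=\arg\min\{c(T):T\in\mathcal{T}_k\}$. It therefore suffices to find a single element of $\mathcal{T}_k$ with cost at most $\frac{2}{1+\alpha}c_\alpha(\seq^*)$, where $\seq^*$ is an optimal $k$-path. The minimum over $\mathcal{T}_k$ then inherits this bound. We also need $\mathcal{T}_k\neq\emptyset$, which holds because the terminal chosen below passes the test $k_{0,t}\le k$.

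\textbf{Choice of terminal.} We may assume $\seq^*$ ends at a vertex $t$ that is newly visited, since otherwise we can truncate $\seq^*$. Let $P_{\alpha,k,t}:=\seq^*$. The shortest $s$--$t$ path $T_{0,t}$ has cost at most $c_\alpha(\seq^*)$. Because shortcutting to the shortest path can only help, I will argue that the relevant $t$ satisfies $k_{0,t}\le k$, i.e., the shortest $s$--$t$ path has at most $k$ vertices after tie-breaking. If this fails for every terminal, I would instead pick $t$ as the first vertex at which $\seq^*$ has visited $k$ vertices. This is the step I expect to need care.

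\textbf{Real-tree case.} \Cref{lem:u-running-time-lambda} is applied for this $t$. If it returns $\lambda$ with $k_{\lambda,t}=k$, then \eqref{eq:u-chaudhuri-total} from the proof of \Cref{lem:u-algo-gives-good-trees} gives directly
\[
c(T_{\lambda,t})\le\frac{2}{1+\alpha}c_\alpha(P_{\alpha,k,t}).
\]

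\textbf{Phantom case.} Otherwise we get $\lambda<\lambda'$ with $k_1:=k_{\lambda,t}<k<k_2:=k_{\lambda',t}$, and the phantom cost is $(1-\mu)c(T_{\lambda,t})+\mu c(T_{\lambda',t})$. Bounding each tree by its own $k_i$-path is not enough. Following \citet{chaudhuri2003paths}, I would use the dual instead. At the breakpoint $\lambda''$ separating the two behaviours (the limit from both sides), the growth phase produces one common dual solution $(y,p=\lambda'')$. This solution is feasible for $\mathrm{D}_{1,k',t}$ for every $k'$. The Goemans--Williamson argument gives $c(T_{\lambda,t})\le D(k_1)$ and $c(T_{\lambda',t})\le D(k_2)$, where $D(k')=(k'-1)\lambda''-\sum_v p_v+\sum y_{t,U}$ is affine in $k'$ for this fixed dual. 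Convex combination with weight $\mu$ yields
\[
(1-\mu)c(T_{\lambda,t})+\mu c(T_{\lambda',t})\le D(k)\le \mathrm{D}^*_{1,k,t}\le\mathrm{P}^*_{1,k,t}.
\]
By \Cref{lem:u-P1-Palpha} and \eqref{eq:u-primal-opt-path}, the right-hand side is at most $\frac{2}{1+\alpha}c_\alpha(P_{\alpha,k,t})$.

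\textbf{Main obstacle.} The hardest part is justifying that both endpoint trees are charged against the same dual. The trees come from the limits $\lambda\uparrow\lambda''$ and $\lambda\downarrow\lambda''$. By the construction in \Cref{lem:u-running-time-lambda}, each $y_S$ is affine and hence continuous in $\lambda$, so the two limits coincide. The delete-phase bounds from \citet{chaudhuri2003paths} hold for both tie-breakings, so each tree is charged against this common dual, which is what the convex-combination step needs.
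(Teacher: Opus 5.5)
Your route is the paper's. $T_0$ is trivial, $T_k$ is a minimum over the per-terminal trees $T_{k,t}$, and real trees are handled by \eqref{eq:u-chaudhuri-total}. For phantom trees the paper only cites \citet{chaudhuri2003paths} for $c(T_{k,t})\le \mathrm{D}^*_{1,k,t}$. Your sketch, with one dual feasible for every $k'$, an objective affine in $k'$, and a convex combination with weight $\mu$, is the right content of that citation.

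The genuine gap is the terminal step, which you leave as a promise. Only terminals with $k_{0,t}\le k$ contribute to $\mathcal{T}_k$. So you need the endpoint $t$ of an optimal $k$-path to pass this test, or some other way to handle it. Your fallback does nothing: after truncation, the endpoint of $\seq^*$ already is the first vertex at which $k$ vertices have been visited.

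You already have the ingredient $c(T_{0,t})=d(s,t)\le c_\alpha(\seq^*)$, since the distinct edges of $\seq^*$ contain an $s$--$t$ path.
\begin{itemize}
\item If all edge costs are positive, your intended shortcut argument works. If $k_{0,t}>k$, the prefix of $T_{0,t}$ up to its $k$-th vertex is a $k$-path costing strictly less than $d(s,t)\le c_\alpha(\seq^*)$, which contradicts optimality.
\item The paper, however, only requires $c_e>0$ on edges at $s$. With zero-cost edges the inequality is no longer strict. A tie between a direct edge and a route through zero-cost edges can make the $\lambda=0$ path contain more than $k$ vertices, so your claim $k_{0,t}\le k$ is not available in general.
\end{itemize}
The paper closes this case by the redundancy observation. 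If $k_{0,t}>k$, then $T_{0,t}$ itself visits at least $k+1$ vertices and costs at most $c_\alpha(\seq^*)\le\frac{2}{1+\alpha}c_\alpha(\seq^*)$. It is therefore already a good $k$-tree, and one may restrict attention to terminals with $k_{0,t}\le k$.

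There is also a smaller imprecision in your phantom case. The $y_S$ are affine only on each open subinterval produced by \Cref{lem:u-running-time-lambda}, so ``affine hence continuous'' does not cover the breakpoint. Moreover, the lemma returns a breakpoint and an interior point, not two one-sided limits. What you need is the following fact: processing simultaneous events in either order yields the same dual values and changes only which components count as once inactive in the delete phase. Both trees can then be charged to one dual, which is precisely what the citation to \citet{chaudhuri2003paths} supplies.
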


\begin{proof}
	For $k=0$ we have $T_0=(\cdot )$ and the statement is trivial.
	Thus, let $k\in \setn$ and $t\in V\setminus \{ s \}$ be arbitrary but fixed and denote by $T_{k,t}$ the tree constructed during the execution of~\Cref{alg:u-phantom-tree}.
	Note that~$T_{k,t}$ does not exist if $k_{0,t}>k$.
	However, this case only arises, if a shortest $s$--$t$-path in $G$ already contains more than $k$ vertices and hence choosing $T_{k_0,t}$ instead of~$T_{k,t}$ decreases the cost while increasing the number of visited vertices.
	Thus, such a tree~$T_{k,t}$ would be redundant.
	Hence, we may assume for now that $k_{0,t}\leq k$.
	If $T_{k,t}$ is a real tree, \eqref{eq:u-chaudhuri-total} yields
	\begin{align*}
		c(T_{k,t})
		\leq \mathrm{D}_{1,k,t}^*
	    \leq \mathrm{P}_{1,k,t}^*
	    \leq \frac{2}{1+\alpha}c_{\alpha}(P_{\alpha,k,t}),
	\end{align*}
	where $P_{\alpha,k,t}$ is an $\alpha$-optimal $k$-path that ends in~$t$.
	If $T_{k,t}$ is a phantom tree, it is constructed by a linear interpolation of two real trees.
	\citet{chaudhuri2003paths} show that also in this setting, we have $c(T_{k,t})\leq \mathrm{D}_{1,k,t}^*$.
	So, \eqref{eq:u-chaudhuri-total} again yields
	$c(T_{k,t})\leq \frac{2}{1+\alpha}c_{\vec{\alpha}}(P_{{\alpha},k,t})$.
	Since tree~$T_k$ is set to the cheapest $k$-tree among all such real and phantom $k$-trees~$T_{k,t}$,
	we obtain $c(T_{k})\leq \frac{2}{1+\alpha}c_{\alpha}(P_{\alpha,k,t})$ for all choices of a terminal vertex~$t\in V\setminus \{ s \}$.
	In particular, we obtain $c(T_{k})\leq \frac{2}{1+\alpha}c_{\alpha}(P_{{\vec{\alpha}},k})$, where $P_{\alpha,k}$ is an $\alpha$-optimal $k$-path with no restriction on which vertex to end on.
	Hence, $T_k$ is a good $k$-tree.
\end{proof}

\subsection{Non-Monotonicity of trees in the Budget Parameter}
\label{sec:app:non-monotonicity}
A technical point that will be important in our use of the primal--dual framework is that the trees returned by the procedure are not monotone in the budget parameter.
More precisely, suppose that the primal--dual algorithm is run with a parameter that can be interpreted as a budget, or penalty, per vertex.
One might expect that increasing this parameter can only lead to larger trees, in the sense that the set of vertices spanned by the resulting tree is monotone.
This monotonicity property is often implicit in related uses of prize-collecting or Lagrangian-relaxation based algorithms for variants of $k$-MST; see, e.g., the algorithms~\cite{arora2006approximation}.
However, this monotonicity does not hold for the primal--dual procedure we use here.
In fact, the failure already occurs before the final pruning step: a larger value of the parameter may change the order in which moats merge and are deactivated, and the resulting intermediate tree need not contain the tree obtained for a smaller parameter.
We give an explicit example below.

\begin{figure}[t]
\begin{subfigure}[t]{\textwidth}
\scriptsize
\begin{tikzpicture}[xscale=0.5,shorten > = 0pt]
	\node[state,label=above:{$s$}] (s) at (0,0) {};
	\node[state,label=above:{$a_1$}] (v1) at (1.5,0) {};
	\node[state,label=above:{$a_2$}] (v2) at (4.5,0) {};
	\node[state,label=above:{$a_3$}] (v3) at (7.5,0) {};
	\node[state,label=above:{$a_4$}] (v4) at (10.5,0) {};
	\node[state,label=above:{$b$}] (v5) at (13,0) {};
	\node[state,label=above:{$c$}] (v6) at (23,0) {};
	\node[label=above:{\makebox[\widthof{$t=00000$}][l]{}}] (t) at (-1,0) {};
	
	\draw[thick] (s) to node[below]{$16$} (v1); 
	\draw[thick] (v1) to node[below]{$30$} (v2); 
	\draw[thick] (v2) to node[below]{$30$} (v3); 
	\draw[thick] (v3) to node[below]{$30$} (v4); 
	\draw[thick] (v4) to node[below]{$25$} (v5); 
	\draw[thick] (v5) to node[below]{$99$} (v6); 
\end{tikzpicture}
\caption{}
\label{fig:exmp-cgrt-not-monotone}
\end{subfigure}

\vspace{0.7cm}
\begin{subfigure}[t]{\textwidth}
\scriptsize
\begin{tikzpicture}[xscale=0.5,shorten > = 0pt]
	\node[state,label=above:{$s$}] (s) at (0,0) {};
	\node[state,label=above:{\textcolor{\myblue}{$10$}}] (v1) at (1.5,0) {};
	\node[state,label=above:{\textcolor{\myblue}{$10$}}] (v2) at (4.5,0) {};
	\node[state,label=above:{\textcolor{\myblue}{$10$}}] (v3) at (7.5,0) {};
	\node[state,label=above:{\textcolor{\myblue}{$10$}}] (v4) at (10.5,0) {};
	\node[state,label=above:{\textcolor{\myblue}{$\infty$}}] (v5) at (13,0) {};
	\node[state,label=above:{\textcolor{\myblue}{$50$}}] (v6) at (23,0) {};
	\node[label=above:{\makebox[\widthof{$t=00000$}][l]{$t=0$}}] (t) at (-1,0) {};
	
\end{tikzpicture}

\vspace{0.3cm}
\begin{tikzpicture}[xscale=0.5,shorten > = 0pt]
	\node[state,label=above:{$s$}] (s) at (0,0) {};
	\node[state,label=above:{\textcolor{\myred}{$0$}}] (v1) at (1.5,0) {};
	\node[state,label=above:{\textcolor{\myred}{$0$}}] (v2) at (4.5,0) {};
	\node[state,label=above:{\textcolor{\myred}{$0$}}] (v3) at (7.5,0) {};
	\node[state,label=above:{\textcolor{\myred}{$0$}}] (v4) at (10.5,0) {};
	\node[state,label=above:{\textcolor{\myblue}{$\infty$}}] (v5) at (13,0) {};
	\node[state,label=above:{\textcolor{\myblue}{$40$}}] (v6) at (23,0) {};
	\node[label=above:{\makebox[\widthof{$t=00000$}][l]{$t=10$}}] (t) at (-1,0) {};
	
	\draw[-,gray] (0.5,0) to node[below]{$10$} (1.5,0);
	\draw[-,gray] (1.5,0) to node[below]{$10$} (2.5,0);
	\draw[-,gray] (3.5,0) to node[below]{$10$} (4.5,0);
	\draw[-,gray] (4.5,0) to node[below]{$10$} (5.5,0);
	\draw[-,gray] (6.5,0) to node[below]{$10$} (7.5,0);
	\draw[-,gray] (7.5,0) to node[below]{$10$} (8.5,0);
	\draw[-,gray] (9.5,0) to node[below]{$10$} (10.5,0);
	\draw[-,gray] (10.5,0) to node[below]{$10$} (11.5,0);
	\draw[-,gray] (12,0) to node[below]{$10$} (13,0);
	\draw[-,gray] (13,0) to node[below]{$10$} (14,0);
	\draw[-,gray] (22,0) to node[below]{$10$} (23,0);
\end{tikzpicture}

\vspace{0.3cm}
\begin{tikzpicture}[xscale=0.5,shorten > = 0pt]
	\draw[-,gray] (0.5,0) to node[below]{$10$} (1.5,0);
	\draw[-,gray] (1.5,0) to node[below]{$10$} (2.5,0);
	\draw[-,gray] (3.5,0) to node[below]{$10$} (4.5,0);
	\draw[-,gray] (4.5,0) to node[below]{$10$} (5.5,0);
	\draw[-,gray] (6.5,0) to node[below]{$10$} (7.5,0);
	\draw[-,gray] (7.5,0) to node[below]{$10$} (8.5,0);
	\draw[-,gray] (9.5,0) to node[below]{$10$} (10.5,0);
	\draw[-,\myred,thick] (10.5,0) to node[below]{$25$} (13,0);
	\draw[-,gray] (13,0) to node[below]{$15$} (14.5,0);
	\draw[-,gray] (21.5,0) to node[below]{$15$} (23,0);
	
	\node[state,label=above:{$s$}] (s) at (0,0) {};
	\node[state,label=above:{\textcolor{\myblue}{$0$}}] (v1) at (1.5,0) {};
	\node[state,label=above:{\textcolor{\myblue}{$0$}}] (v2) at (4.5,0) {};
	\node[state,label=above:{\textcolor{\myblue}{$0$}}] (v3) at (7.5,0) {};
	\node[state,label=above:{\textcolor{\myblue}{$0$}}] (v4) at (10.5,0) {};
	\node[state,label=above:{\textcolor{\myblue}{$\infty$}}] (v5) at (13,0) {};
	\node[state,label=above:{\textcolor{\myblue}{$35$}}] (v6) at (23,0) {};
	\node[label=above:{\makebox[\widthof{$t=00000$}][l]{$t=15$}}] (t) at (-1,0) {};
\end{tikzpicture}

\vspace{0.3cm}
\begin{tikzpicture}[xscale=0.5,shorten > = 0pt]
	\draw[-,gray] (0.5,0) to node[below]{$10$} (1.5,0);
	\draw[-,gray] (1.5,0) to node[below]{$10$} (2.5,0);
	\draw[-,gray] (3.5,0) to node[below]{$10$} (4.5,0);
	\draw[-,gray] (4.5,0) to node[below]{$10$} (5.5,0);
	\draw[-,gray] (6.5,0) to node[below]{$10$} (7.5,0);
	\draw[-,\myred,thick] (7.5,0) to node[below]{$30$} (10.5,0);
	\draw[thick] (10.5,0) to (13,0);
	\draw[-,gray] (13,0) to node[below]{$25$} (15.5,0);
	\draw[-,gray] (20.5,0) to node[below]{$25$} (23,0);
	
	\node[state,label=above:{$s$}] (s) at (0,0) {};
	\node[state,label=above:{\textcolor{\myblue}{$0$}}] (v1) at (1.5,0) {};
	\node[state,label=above:{\textcolor{\myblue}{$0$}}] (v2) at (4.5,0) {};
	\node[state,label=above:{\textcolor{\myblue}{$0$}}] (v3) at (7.5,0) {};
	\node[state] (v4) at (10.5,0) {};
	\node[state,label=above:{\textcolor{\myblue}{$\infty$}}] (v5) at (13,0) {};
	\node[state,label=above:{\textcolor{\myblue}{$25$}}] (v6) at (23,0) {};
	\node[label=above:{\makebox[\widthof{$t=00000$}][l]{$t=25$}}] (t) at (-1,0) {};
\end{tikzpicture}

\vspace{0.3cm}
\begin{tikzpicture}[xscale=0.5,shorten > = 0pt]
	\draw[-,gray] (0.5,0) to node[below]{$10$} (1.5,0);
	\draw[-,gray] (1.5,0) to node[below]{$10$} (2.5,0);
	\draw[-,gray] (3.5,0) to node[below]{$10$} (4.5,0);
	\draw[-,\myred,thick] (4.5,0) to node[below]{$30$} (7.5,0);
	\draw[thick] (7.5,0) to (10.5,0);
	\draw[thick] (10.5,0) to (13,0);
	\draw[-,gray] (13,0) to node[below]{$35$} (16.5,0);
	\draw[-,gray] (19.5,0) to node[below]{$35$} (23,0);
	
	\node[state,label=above:{$s$}] (s) at (0,0) {};
	\node[state,label=above:{\textcolor{\myblue}{$0$}}] (v1) at (1.5,0) {};
	\node[state,label=above:{\textcolor{\myblue}{$0$}}] (v2) at (4.5,0) {};
	\node[state] (v3) at (7.5,0) {};
	\node[state] (v4) at (10.5,0) {};
	\node[state,label=above:{\textcolor{\myblue}{$\infty$}}] (v5) at (13,0) {};
	\node[state,label=above:{\textcolor{\myblue}{$15$}}] (v6) at (23,0) {};
	\node[label=above:{\makebox[\widthof{$t=00000$}][l]{$t=35$}}] (t) at (-1,0) {};
\end{tikzpicture}

\vspace{0.3cm}
\begin{tikzpicture}[xscale=0.5,shorten > = 0pt]
	\draw[-,gray] (0.5,0) to node[below]{$10$} (1.5,0);
	\draw[-,\myred,thick] (1.5,0) to node[below]{$30$} (4.5,0);
	\draw[thick] (4.5,0) to  (7.5,0);
	\draw[thick] (7.5,0) to (10.5,0);
	\draw[thick] (10.5,0) to (13,0);
	\draw[-,gray] (13,0) to node[below]{$45$} (17.5,0);
	\draw[-,gray] (18.5,0) to node[below]{$45$} (23,0);
	
	\node[state,label=above:{$s$}] (s) at (0,0) {};
	\node[state,label=above:{\textcolor{\myblue}{$0$}}] (v1) at (1.5,0) {};
	\node[state] (v2) at (4.5,0) {};
	\node[state] (v3) at (7.5,0) {};
	\node[state] (v4) at (10.5,0) {};
	\node[state,label=above:{\textcolor{\myblue}{$\infty$}}] (v5) at (13,0) {};
	\node[state,label=above:{\textcolor{\myblue}{$5$}}] (v6) at (23,0) {};
	\node[label=above:{\makebox[\widthof{$t=00000$}][l]{$t=45$}}] (t) at (-1,0) {};
\end{tikzpicture}

\vspace{0.3cm}
\begin{tikzpicture}[xscale=0.5,shorten > = 0pt]
	\draw[-,gray] (0.3,0) to node[below]{$14.5$} (1.5,0);
	\draw[thick] (1.5,0) to (4.5,0);
	\draw[thick] (4.5,0) to (7.5,0);
	\draw[thick] (7.5,0) to (10.5,0);
	\draw[thick] (10.5,0) to (13,0);
	\draw[-,\myred,thick] (13,0) to node[below]{$99$} (23,0);
	
	\node[state,label=above:{$s$}] (s) at (0,0) {};
	\node[state] (v1) at (1.5,0) {};
	\node[state] (v2) at (4.5,0) {};
	\node[state] (v3) at (7.5,0) {};
	\node[state] (v4) at (10.5,0) {};
	\node[state,label=above:{\textcolor{\myblue}{$\infty$}}] (v5) at (13,0) {};
	\node[state,label=above:{\textcolor{\myblue}{$0.5$}}] (v6) at (23,0) {};
	\node[label=above:{\makebox[\widthof{$t=00000$}][l]{$t=49.5$}}] (t) at (-1,0) {};
\end{tikzpicture}

\vspace{0.3cm}
\begin{tikzpicture}[xscale=0.5,shorten > = 0pt]
	\draw[-,\myred,thick] (0,0) to node[below]{$16$} (1.5,0);
	\draw[thick] (1.5,0) to (4.5,0);
	\draw[thick] (4.5,0) to (7.5,0);
	\draw[thick] (7.5,0) to (10.5,0);
	\draw[thick] (10.5,0) to (13,0);
	\draw[thick] (13,0) to (23,0);
	
	\node[state,label=above:{$s$}] (s) at (0,0) {};
	\node[state] (v1) at (1.5,0) {};
	\node[state] (v2) at (4.5,0) {};
	\node[state] (v3) at (7.5,0) {};
	\node[state] (v4) at (10.5,0) {};
	\node[state,label=above:{\textcolor{\myblue}{$\infty$}}] (v5) at (13,0) {};
	\node[state] (v6) at (23,0) {};
	\node[label=above:{\makebox[\widthof{$t=00000$}][l]{$t=51$}}] (t) at (-1,0) {};
\end{tikzpicture}

\vspace{0.3cm}
\begin{tikzpicture}[xscale=0.5,shorten > = 0pt]
	\node[state,label=above:{$s$}] (s) at (0,0) {};
	\node[state] (v1) at (1.5,0) {};
	\node[state] (v2) at (4.5,0) {};
	\node[state] (v3) at (7.5,0) {};
	\node[state] (v4) at (10.5,0) {};
	\node[state] (v5) at (13,0) {};
	\node[state] (v6) at (23,0) {};
	\node[label=above:{\makebox[\widthof{$t=00000$}][l]{$T_{1,b}$}}] (t) at (-1,0) {};
	
	\draw[thick] (s) to (v6);
\end{tikzpicture}
\caption{}
\label{fig:exmp-cgrt-not-monotone-1}
\end{subfigure}
\caption{Illustrations of \Cref{exmp:cgrt-not-monotone}.
(a)~The input graph~$G$.
(b)~The primal--dual subroutine with parameter~$\lambda=1$ and terminal vertex~$b$.}
\end{figure}

\begin{example}
\label{exmp:cgrt-not-monotone}
	Consider the graph~$G=(V,E)$ illustrated in \Cref{fig:exmp-cgrt-not-monotone}.
	Each vertex~$a_1,a_2,a_3$, and~$a_4$ is part of a clique of~$9$ further vertices not drawn in the figure.
	Similarly, vertex~$c$ is part of a clique of~$49$ further vertices.
	The edges within each clique have \edgec~$0$, so as soon as the primal--dual subroutine starts for any input~$\lambda$ and~$t$, all clique edges become tight immediately, and it suffices to consider the thus obtained connected components.
	We select vertex~$b$ as the designated terminal vertex and run the primal--dual subroutine for parameters~$\lambda=1,\lambda=1.6$, and~$\lambda=2$.
	\Cref{fig:exmp-cgrt-not-monotone-1}, \Cref{fig:exmp-cgrt-not-monotone-1-6}, and \Cref{fig:exmp-cgrt-not-monotone-2} show multiple snapshots of the subroutine after different time steps~$t$.
	The remaining budgets of connected components are shown in blue above the components.
	Edge and set events are emphasized in red.
	A grey half-edge incident to a vertex~$v$ indicates the sum of the dual variables~$y_S$ for which~$v\in S$.
	For~$\lambda=1$, the algorithm terminates after~$51$ time steps.
	Since the clique at vertex~$c$ has always belonged to an active component, no edge is deleted in the delete phase, and the returned tree is~$T(1,b)=G$.
	For~$\lambda=1.6$, the algorithm terminates after~$80$ time steps.
	Again, no edge is deleted in the delete phase, but in contrast to~$\lambda=1$, the returned tree does not contain the clique at vertex~$c$.
	Finally, for~$\lambda=2$, the algorithm terminates after~$84$ time steps.
	Again, the clique at vertex~$c$ has always belonged to an active component, and thus the returned tree is~$T(2,b)=G$.
	In total, we obtain~$k(1.6,b)=41<91=k(1,b)=k(2,b)$ demonstrating that~$k(\lambda,t)$ is not monotone in~$\lambda$.
	\hfill~$\meindreieck$
\end{example}

\begin{figure}[t]
\scriptsize
\begin{subfigure}[t]{\textwidth}
\begin{tikzpicture}[xscale=0.5,shorten > = 0pt]
	\node[state,label=above:{$s$}] (s) at (0,0) {};
	\node[state,label=above:{\textcolor{\myblue}{$16$}}] (v1) at (1.5,0) {};
	\node[state,label=above:{\textcolor{\myblue}{$16$}}] (v2) at (4.5,0) {};
	\node[state,label=above:{\textcolor{\myblue}{$16$}}] (v3) at (7.5,0) {};
	\node[state,label=above:{\textcolor{\myblue}{$16$}}] (v4) at (10.5,0) {};
	\node[state,label=above:{\textcolor{\myblue}{$\infty$}}] (v5) at (13,0) {};
	\node[state,label=above:{\textcolor{\myblue}{$80$}}] (v6) at (23,0) {};
	\node[label=above:{\makebox[\widthof{$t=00000$}][l]{$t=0$}}] (t) at (-1,0) {};
\end{tikzpicture}

\vspace{0.3cm}
\begin{tikzpicture}[xscale=0.5,shorten > = 0pt]
	\draw[-,gray] (0.25,0) to node[below]{$12.5$} (1.5,0);
	\draw[-,gray] (1.5,0) to node[below]{$12.5$} (2.75,0);
	\draw[-,gray] (3.25,0) to node[below]{$12.5$} (4.5,0);
	\draw[-,gray] (4.5,0) to node[below]{$12.5$} (5.75,0);
	\draw[-,gray] (6.25,0) to node[below]{$12.5$} (7.5,0);
	\draw[-,gray] (7.5,0) to node[below]{$12.5$} (8.75,0);
	\draw[-,gray] (9.25,0) to node[below]{$12.5$} (10.5,0);
	\draw[-,\myred,thick] (10.5,0) to node[below]{$25$} (13,0);
	\draw[-,gray] (13,0) to node[below]{$12.5$} (14.75,0);
	\draw[-,gray] (21.25,0) to node[below]{$12.5$} (23,0);
	
	\node[state,label=above:{$s$}] (s) at (0,0) {};
	\node[state,label=above:{\textcolor{\myblue}{$3.5$}}] (v1) at (1.5,0) {};
	\node[state,label=above:{\textcolor{\myblue}{$3.5$}}] (v2) at (4.5,0) {};
	\node[state,label=above:{\textcolor{\myblue}{$3.5$}}] (v3) at (7.5,0) {};
	\node[state,label=above:{\textcolor{\myblue}{$3.5$}}] (v4) at (10.5,0) {};
	\node[state,label=above:{\textcolor{\myblue}{$\infty$}}] (v5) at (13,0) {};
	\node[state,label=above:{\textcolor{\myblue}{$67.5$}}] (v6) at (23,0) {};
	\node[label=above:{\makebox[\widthof{$t=00000$}][l]{$t=12.5$}}] (t) at (-1,0) {};
\end{tikzpicture}

\vspace{0.3cm}
\begin{tikzpicture}[xscale=0.5,shorten > = 0pt]
	\draw[-,gray] (0.3,0) to node[below]{$15$} (v1);
	\draw[-,\myred,thick] (v1) to node[below]{$30$} (v2);
	\draw[-,\myred,thick] (v2) to node[below]{$30$} (v3);
	\draw[-,\myred,thick] (7.5,0) to node[below]{$30$} (10.5,0);
	\draw[thick] (10.5,0) to (13,0);
	\draw[-,gray] (13,0) to node[below]{$15$} (14.5,0);
	\draw[-,gray] (21.5,0) to node[below]{$15$} (23,0);
	
	\node[state,label=above:{$s$}] (s) at (0,0) {};
	\node[state,label=above:{\textcolor{\myblue}{$1$}}] (v1) at (1.5,0) {};
	\node[state,label=above:{\textcolor{\myblue}{$1$}}] (v2) at (4.5,0) {};
	\node[state,label=above:{\textcolor{\myblue}{$1$}}] (v3) at (7.5,0) {};
	\node[state,label=above:{\textcolor{\myblue}{$1$}}] (v4) at (10.5,0) {};
	\node[state,label=above:{\textcolor{\myblue}{$\infty$}}] (v5) at (13,0) {};
	\node[state,label=above:{\textcolor{\myblue}{$65$}}] (v6) at (23,0) {};
	\node[label=above:{\makebox[\widthof{$t=00000$}][l]{$t=15$}}] (t) at (-1,0) {};
\end{tikzpicture}

\vspace{0.3cm}
\begin{tikzpicture}[xscale=0.5,shorten > = 0pt]
	\draw[-,thick,\myred] (0,0) to node[below]{$16$} (v1);
	\draw[thick] (v1) to (v2);
	\draw[thick] (v2) to (v3);
	\draw[thick] (7.5,0) to(10.5,0);
	\draw[thick] (10.5,0) to (13,0);
	\draw[-,gray] (13,0) to node[below]{$16$} (14.5,0);
	\draw[-,gray] (21.5,0) to node[below]{$16$} (23,0);
	
	\node[state,label=above:{$s$}] (s) at (0,0) {};
	\node[state] (v1) at (1.5,0) {};
	\node[state] (v2) at (4.5,0) {};
	\node[state] (v3) at (7.5,0) {};
	\node[state] (v4) at (10.5,0) {};
	\node[state,label=above:{\textcolor{\myblue}{$\infty$}}] (v5) at (13,0) {};
	\node[state,label=above:{\textcolor{\myblue}{$64$}}] (v6) at (23,0) {};
	\node[label=above:{\makebox[\widthof{$t=00000$}][l]{$t=16$}}] (t) at (-1,0) {};
\end{tikzpicture}

\vspace{0.3cm}
\begin{tikzpicture}[xscale=0.5,shorten > = 0pt]
	\draw[thick] (0,0) to (v1);
	\draw[thick] (v1) to (v2);
	\draw[thick] (v2) to (v3);
	\draw[thick] (7.5,0) to(10.5,0);
	\draw[thick] (10.5,0) to (13,0);
	\draw[-,gray] (13,0) to node[below]{$16$} (14.5,0);
	\draw[-,gray] (15,0) to node[below]{$80$} (23,0);
	
	\node[state,label=above:{$s$}] (s) at (0,0) {};
	\node[state] (v1) at (1.5,0) {};
	\node[state] (v2) at (4.5,0) {};
	\node[state] (v3) at (7.5,0) {};
	\node[state] (v4) at (10.5,0) {};
	\node[state,label=above:{\textcolor{\myblue}{$\infty$}}] (v5) at (13,0) {};
	\node[state,label=above:{\textcolor{\myred}{$0$}}] (v6) at (23,0) {};
	\node[label=above:{\makebox[\widthof{$t=00000$}][l]{$t=80$}}] (t) at (-1,0) {};
\end{tikzpicture}

\vspace{0.3cm}
\begin{tikzpicture}[xscale=0.5,shorten > = 0pt]
	\node[state,label=above:{$s$}] (s) at (0,0) {};
	\node[state] (v1) at (1.5,0) {};
	\node[state] (v2) at (4.5,0) {};
	\node[state] (v3) at (7.5,0) {};
	\node[state] (v4) at (10.5,0) {};
	\node[state] (v5) at (13,0) {};
	\node[state] (v6) at (23,0) {};
	\node[label=above:{\makebox[\widthof{$t=00000$}][l]{$T_{1.6,b}$}}] (t) at (-1,0) {};
	
	\draw[thick] (s) to (v5);
\end{tikzpicture}
\end{subfigure}
\caption{Illustrations of \Cref{exmp:cgrt-not-monotone}. The primal--dual subroutine with parameter~$\lambda=1.6$ and terminal vertex~$b$.
\label{fig:exmp-cgrt-not-monotone-1-6}}
\end{figure}

\begin{figure}[t]
\scriptsize
\begin{subfigure}[t]{\textwidth}
\scriptsize
\begin{tikzpicture}[xscale=0.5,shorten > = 0pt]
	\node[state,label=above:{$s$}] (s) at (0,0) {};
	\node[state,label=above:{\textcolor{\myblue}{$20$}}] (v1) at (1.5,0) {};
	\node[state,label=above:{\textcolor{\myblue}{$20$}}] (v2) at (4.5,0) {};
	\node[state,label=above:{\textcolor{\myblue}{$20$}}] (v3) at (7.5,0) {};
	\node[state,label=above:{\textcolor{\myblue}{$20$}}] (v4) at (10.5,0) {};
	\node[state,label=above:{\textcolor{\myblue}{$\infty$}}] (v5) at (13,0) {};
	\node[state,label=above:{\textcolor{\myblue}{$100$}}] (v6) at (23,0) {};
	\node[label=above:{\makebox[\widthof{$t=00000$}][l]{$t=0$}}] (t) at (-1,0) {};
\end{tikzpicture}

\vspace{0.3cm}
\begin{tikzpicture}[xscale=0.5,shorten > = 0pt]
	\draw[-,gray] (0.25,0) to node[below]{$12.5$} (1.5,0);
	\draw[-,gray] (1.5,0) to node[below]{$12.5$} (2.75,0);
	\draw[-,gray] (3.25,0) to node[below]{$12.5$} (4.5,0);
	\draw[-,gray] (4.5,0) to node[below]{$12.5$} (5.75,0);
	\draw[-,gray] (6.25,0) to node[below]{$12.5$} (7.5,0);
	\draw[-,gray] (7.5,0) to node[below]{$12.5$} (8.75,0);
	\draw[-,gray] (9.25,0) to node[below]{$12.5$} (10.5,0);
	\draw[-,\myred,thick] (10.5,0) to node[below]{$25$} (13,0);
	\draw[-,gray] (13,0) to node[below]{$12.5$} (14.75,0);
	\draw[-,gray] (21.25,0) to node[below]{$12.5$} (23,0);
	
	\node[state,label=above:{$s$}] (s) at (0,0) {};
	\node[state,label=above:{\textcolor{\myblue}{$7.5$}}] (v1) at (1.5,0) {};
	\node[state,label=above:{\textcolor{\myblue}{$7.5$}}] (v2) at (4.5,0) {};
	\node[state,label=above:{\textcolor{\myblue}{$7.5$}}] (v3) at (7.5,0) {};
	\node[state,label=above:{\textcolor{\myblue}{$7.5$}}] (v4) at (10.5,0) {};
	\node[state,label=above:{\textcolor{\myblue}{$\infty$}}] (v5) at (13,0) {};
	\node[state,label=above:{\textcolor{\myblue}{$87.5$}}] (v6) at (23,0) {};
	\node[label=above:{\makebox[\widthof{$t=00000$}][l]{$t=12.5$}}] (t) at (-1,0) {};
\end{tikzpicture}

\vspace{0.3cm}
\begin{tikzpicture}[xscale=0.5,shorten > = 0pt]
	\draw[-,gray] (0.3,0) to node[below]{$15$} (v1);
	\draw[-,\myred,thick] (v1) to node[below]{$30$} (v2);
	\draw[-,\myred,thick] (v2) to node[below]{$30$} (v3);
	\draw[-,\myred,thick] (7.5,0) to node[below]{$30$} (10.5,0);
	\draw[thick] (10.5,0) to (13,0);
	\draw[-,gray] (13,0) to node[below]{$15$} (14.5,0);
	\draw[-,gray] (21.5,0) to node[below]{$15$} (23,0);
	
	\node[state,label=above:{$s$}] (s) at (0,0) {};
	\node[state,label=above:{\textcolor{\myblue}{$5$}}] (v1) at (1.5,0) {};
	\node[state,label=above:{\textcolor{\myblue}{$5$}}] (v2) at (4.5,0) {};
	\node[state,label=above:{\textcolor{\myblue}{$5$}}] (v3) at (7.5,0) {};
	\node[state,label=above:{\textcolor{\myblue}{$5$}}] (v4) at (10.5,0) {};
	\node[state,label=above:{\textcolor{\myblue}{$\infty$}}] (v5) at (13,0) {};
	\node[state,label=above:{\textcolor{\myblue}{$85$}}] (v6) at (23,0) {};
	\node[label=above:{\makebox[\widthof{$t=00000$}][l]{$t=15$}}] (t) at (-1,0) {};
\end{tikzpicture}

\vspace{0.3cm}
\begin{tikzpicture}[xscale=0.5,shorten > = 0pt]
	\draw[-,thick,\myred] (0,0) to node[below]{$16$} (v1);
	\draw[thick] (v1) to (v2);
	\draw[thick] (v2) to (v3);
	\draw[thick] (7.5,0) to(10.5,0);
	\draw[thick] (10.5,0) to (13,0);
	\draw[-,gray] (13,0) to node[below]{$16$} (14.5,0);
	\draw[-,gray] (21.5,0) to node[below]{$16$} (23,0);
	
	\node[state,label=above:{$s$}] (s) at (0,0) {};
	\node[state] (v1) at (1.5,0) {};
	\node[state] (v2) at (4.5,0) {};
	\node[state] (v3) at (7.5,0) {};
	\node[state] (v4) at (10.5,0) {};
	\node[state,label=above:{\textcolor{\myblue}{$\infty$}}] (v5) at (13,0) {};
	\node[state,label=above:{\textcolor{\myblue}{$84$}}] (v6) at (23,0) {};
	\node[label=above:{\makebox[\widthof{$t=00000$}][l]{$t=16$}}] (t) at (-1,0) {};
\end{tikzpicture}

\vspace{0.3cm}
\begin{tikzpicture}[xscale=0.5,shorten > = 0pt]
	\draw[thick] (0,0) to (v1);
	\draw[thick] (v1) to (v2);
	\draw[thick] (v2) to (v3);
	\draw[thick] (7.5,0) to(10.5,0);
	\draw[thick] (10.5,0) to (13,0);
	\draw[-,thick,\myred] (13,0) to node[below]{$99$} (23,0);
	
	\node[state,label=above:{$s$}] (s) at (0,0) {};
	\node[state] (v1) at (1.5,0) {};
	\node[state] (v2) at (4.5,0) {};
	\node[state] (v3) at (7.5,0) {};
	\node[state] (v4) at (10.5,0) {};
	\node[state,label=above:{\textcolor{\myblue}{$\infty$}}] (v5) at (13,0) {};
	\node[state,label=above:{\textcolor{\myblue}{$16$}}] (v6) at (23,0) {};
	\node[label=above:{\makebox[\widthof{$t=00000$}][l]{$t=84$}}] (t) at (-1,0) {};
\end{tikzpicture}

\vspace{0.3cm}
\begin{tikzpicture}[xscale=0.5,shorten > = 0pt]
	\node[state,label=above:{$s$}] (s) at (0,0) {};
	\node[state] (v1) at (1.5,0) {};
	\node[state] (v2) at (4.5,0) {};
	\node[state] (v3) at (7.5,0) {};
	\node[state] (v4) at (10.5,0) {};
	\node[state] (v5) at (13,0) {};
	\node[state] (v6) at (23,0) {};
	\node[label=above:{\makebox[\widthof{$t=00000$}][l]{$T_{2,b}$}}] (t) at (-1,0) {};
	
	\draw[thick] (s) to (v6);
\end{tikzpicture}
\end{subfigure}
\caption{Illustrations of \Cref{exmp:cgrt-not-monotone}. The primal--dual subroutine with parameter~$\lambda=2$ and terminal vertex~$b$.\label{fig:exmp-cgrt-not-monotone-2}}

\end{figure}

\section{Deferred Proofs of \Cref{sec:unweighted}}

\subsection{Proof of \Cref{lem:u-expected-latency}}
\label{app:u-expected-latency}

\uexpectedlatency*

\begin{proof}
	Let $P=(n_0,n_1,\dots,n_l)$ be a $0$--$n$-path $P$ in $H$.
	First, note that for $k=0$, we have $\E[\lat_{alpha,s}(\seq_P)]=0=\psi_P(0)$.
	Thus, consider the case $k\in \setn$.
	Let $j\in [l]$ be such that $n_{j-1}< k \leq n_{j}$.
	Let~$v$ be the $k$-th distinct vertex in~$V\setminus \{ s \}$ visited by~$\seq_P$.
	The $\alpha$-latency of~$v$ is maximal if~$v$ is first visited during the traversal of~$T_{n_j}$.
	This happens in particular if the trees $T_{n_0},T_{n_1},\dots,T_{n_{j}}$ are nested.
	Thus, the~$\alpha$-latency of~$v$ in~$\seq_P$ can be bounded from above by the cost of all tours~$\seq(T_{n_i})$ with $i\in[j-1]_0$ and the additional expected latency of vertex~$v$ in the tour~$\seq(T_{n_j})$.
	Since the direction of the traversal of $\seq(T_{n_{j}})$ is picked at random, we have
	\begin{align*}
		\E[\lat_{\alpha,v}(\seq(T_{n_j}))]
		=\frac{1}{2}\lat_{{\alpha},v}(\seq_f(T_{n_j}))+\frac{1}{2}\lat_{\alpha,v}(\seq_b(T_{n_j}))
		\leq c(T_{n_j}),
	\end{align*}
	where $\seq_f(T_{n_j})$ and $\seq_b(T_{n_j})$ are the forward and backward traversals of tour $\seq(T_{n_j})$.
	In total, we conclude that
	\begin{align*}
		\E[\lat_{{\alpha},v}(\seq_P)]
		\leq c(T_{n_{j}})+(1+\alpha)\sum_{i=0}^{j-1}c(T_{n_i})
		= \psi_P(k).
	\end{align*}
 This completes the proof.
\end{proof}

\subsection{Proof of \Cref{lem:u-path-pi}}
\label{app:u-path-pi}

\upathpi*

\begin{proof}
	For some $k\in \setn$, let $j(k)\in[l]$ be such that $n_{j(k)-1}< k \leq n_{j(k)}$.
	Then, we obtain
	\begin{align*}
		\sum_{k=0}^n \psi_{P}(k)
		& = \sum_{j=1}^l \sum_{v\in V_{j}} \left( c(T_{n_j}) + (1+\alpha) \sum_{i=1}^{j-1}c(T_{n_i}) \right)\\
		& = \sum_{j=1}^l (n_j-n_{j-1}) \left(c(T_{n_j})+(1+\alpha)\sum_{i=1}^{j-1}c(T_{n_i})\right)\\
		& = \sum_{j=1}^l (n_j-n_{j-1})c(T_{n_j})+(1+\alpha)(n-n_j)c(T_{n_j})\\
		& = \sum_{j=1}^l \bigl((1+\alpha)n-\alpha n_j-n_{j-1}\bigr)c(T_{n_j})\\
		& = \ell(P).
	\end{align*}
This completes the proof.
\end{proof}

\subsection{Proof of \Cref{lem:u-opt-bound}}
\label{app:u-opt-bound}

\uoptbound*

\begin{proof}
	We first show how to construct a randomized $0$--$n$-path $P$ in $H$ which fulfills the claimed upper bound on its length in expectation.
	To obtain a deterministic algorithm we will then pick a shortest $0$--$n$-path $P^*$ in $H$ which has length at most $\E[\ell(P)]$.
	In this regard, we write $\seq^*(k)$ for an optimal $k$-path for some $k\in \setn$.
	Note that the $\alpha$-\edgec of an optimal $k$-path gives a lower bound on the latency of $v_k$ in the optimal tour~$\seq^*\in\Seq$, i.e., $c_{\alpha}(\seq^*(k))\leq \lat_{{\alpha},v_k}(\seq^*)$ and thus
	\begin{align}
		\label{eq:u-opt-path-opt-tour}
		\sum_{k=0}^n c_{\alpha}(\seq^*(k)) \leq \lat_{\alpha}(\seq^*).
	\end{align}
	Since the trees~$T_k$ are good $k$-trees for all $k\in \setn_0$, we obtain $c(T_k)\leq \frac{2}{1+{\alpha}} c_{\alpha}(\seq^*(k))$.
	Let  $\gamma>1$ be fixed.
	For the construction of the desired path $P$, we set $b=\gamma^U$, where $U$ is a random variable distributed uniformly in $[0,1)$.
	In particular, this yields $b\in [1,\gamma)$.
	We denote by ${\check{\jmath}}\in \Z$ the smallest number such that $c(T_n)\leq \frac{1}{2}b\gamma^{\check{\jmath}}$, and by $\hat{\jmath}\in \Z$ the largest number such that $\min\{c_e :{e\in \delta(s)}\}>\frac{2}{1+\alpha}b\gamma^{\hat{\jmath}}$.
	Then, we define for all $j\in\{{\hat{\jmath}},\dots,{\check{\jmath}}\}$
	\begin{align*}
		n_j \coloneqq \max\left\{k\in\setn_0  : c(T_k)\leq \frac{2}{1+\alpha}b\gamma^j \right\},
	\end{align*}
	i.e., $n_j$ is the largest number of vertices that can be visited by one of the good $k$-trees $T_0, T_1,\dots, T_n$ such that the \edgec of that tree is bounded by $\frac{2}{1+\alpha}b\gamma^j$.
	By the choice of ${\hat{\jmath}},{\check{\jmath}}\in\Z$, we have $n_{\hat{\jmath}}=0$ and $n_{\check{\jmath}}=n$.
	Together with $c(T_0)=0$, the values $n_j$ are well-defined for all $j\in\{{\hat{\jmath}},\dots,{\check{\jmath}}\}$.
	Consider the sequence $n_{\hat{\jmath}},n_{{\hat{\jmath}}+1},\dots,n_{\check{\jmath}}$.
	This sequence is non-decreasing.
	Further, we may assume without loss of generality, that it is strictly increasing, as otherwise we simply choose its inclusion-wise maximal, strictly increasing subsequence.
	Thus, the sequence corresponds to a randomized $0$--$n$-path $P=(n_{\hat{\jmath}},n_{{\hat{\jmath}}+1},\dots,n_{\check{\jmath}})$ in $H$. 
	We continue by showing that the expected length of this path is bounded by $\E[\ell(P)]\leq 2\frac{\gamma+\alpha}{(1+\alpha)\ln\gamma} \lat_{\alpha}(\seq^*)$.
	For $k\in \setn$, we set $j\in\{{\hat{\jmath}},\dots,{\check{\jmath}}\}$ and $d\in[1,\gamma)$ such that $c_{{\vec{\alpha}}}(\seq^*(k))=d\gamma^j$.
	We briefly argue that two such values always exist.
	First, note that with~$k>0$ we have 
	\begin{align*}
		c_{\alpha}\big(\seq^*(k)\big)\geq \min\{c_e:{e\in \delta(s)}\}>\frac{2}{1+\alpha}b\gamma^{\hat{\jmath}}\geq \gamma^{\hat{\jmath}}.
	\end{align*}
	Additionally, we can give an upper bound on~$c_{\alpha}\big(\seq^*(k)\big)$ by
	\begin{align*}
		c_{\alpha}\big(\seq^*(k)\big)\leq 2 c(T_n)\leq b\gamma^{\check{\jmath}},
	\end{align*}
	where~$b\in [1,\gamma)$.
	Thus, assuming $c_{\alpha}\big(\seq^*(k)\big)=d\gamma^j$ for some $d\in[1,\gamma)$ is possible by our choices of ${\hat{\jmath}}$ and ${\check{\jmath}}$.
    This yields the upper bound on $c(T_{k})$
    \begin{align*}
        c(T_{k})
        &\leq \frac{2}{1+\alpha}c_{\alpha}(\seq^*(k))
        \leq \frac{2}{1+\alpha}d \gamma^j.
    \end{align*}
    We distinguish two cases based on the relationship of the values~$d$ and~$b$.
    
    \newparagraph{Case 1: $d\leq b$}
    With $c_{\alpha}(\seq^*(k))=d\gamma^j$ and $d\leq b$ we obtain
    \begin{align*}
        c(T_{k})
        \leq & \frac{2}{1+\alpha}d\gamma^j
        \leq  \frac{2}{1+\alpha}b\gamma^j.
    \end{align*}
    In particular, this shows that $n_j\geq k$
     and we can give an upper bound on $\psi_P(k)$ by
    \begin{align*}
        \psi_P(k) 
        &\leq c(T_{n_j})+(1+\alpha)\sum_{i=1}^{j-1}c(T_{n_i})\\
        &\leq c(T_{n_j})+(1+\alpha)\sum_{i=-\infty}^{j-1}c(T_{n_i})\\
        &\leq \frac{2b\gamma^j}{1+\alpha}+(1+\alpha)\sum_{i=-\infty}^{j-1} \frac{2b\gamma^i}{1+\alpha}\\
        &= \frac{2b\gamma^j}{1+\alpha} + \frac{1+\alpha}{1+\alpha}\cdot\frac{2b\gamma^j}{\gamma-1}\\
        &= \frac{2b\gamma^j(\gamma+\alpha)}{(1+\alpha)(\gamma-1)}.
    \end{align*}
    
    \newparagraph{Case 2: $d>b$}
    With $d<\gamma$ and $b\geq 1$ we have $d<\gamma\leq b\gamma$.
    Together with $c_{\alpha}(\seq^*(k))=d\gamma^j$ we obtain
    \begin{align*}
        c(T_{k})
        \leq & \frac{2}{1+\alpha}d\gamma^j
        \leq  \frac{2}{1+\alpha}b\gamma^{j+1}.
    \end{align*}
    In particular, this shows that $n_{j+1}\geq k$
     and we can give an upper bound on $\psi_P(k)$ by
    \begin{align*}
        \psi_P(k) 
        &\leq c(T_{n_{j+1}})+(1+\alpha)\sum_{i=1}^{j}c(T_{n_i})\\
        &\leq c(T_{n_{j+1}})+(1+\alpha)\sum_{i=-\infty}^{j}c(T_{n_i})\\
        &\leq \frac{2b\gamma^{j+1}}{1+\alpha}+(1+\alpha)\sum_{i=-\infty}^{j} \frac{2b\gamma^i}{1+\alpha}\\
        &= \frac{2b\gamma^{j+1}}{1+\alpha} + \frac{1+\alpha}{1+\alpha}\cdot\frac{2b\gamma^{j+1}}{\gamma-1}\\
        &= \frac{2b\gamma^{j+1}(\gamma+\alpha)}{(1+\alpha)(\gamma-1)}.
    \end{align*}
    
    Case 1
	arises whenever $U\in[\log_\gamma d,1]$ while 
	Case 2
	arises whenever $U\in[0,\log_\gamma d)$.
    Taking the expectation over the random variable $U$ yields
    \begin{align*}
        \E_U[\psi_P(k)]
        \leq & \int_{\log_\gamma d}^1 \frac{2b\gamma^j(\gamma+\alpha)}{(1+\alpha)(\gamma-1)} \,\mathrm{d}U \\
        & + \int_0^{\log_\gamma d} \frac{2b\gamma^{j+1}(\gamma+\alpha)}{(1+\alpha)(\gamma-1)} \,\mathrm{d}U \\
        =& \frac{2b\gamma^j(\gamma+\alpha)}{(1+\alpha)(\gamma-1)}
        \Biggl(\int_{\log_\gamma d}^1 \gamma^U \,\mathrm{d}U + \gamma \int_0^{\log_\gamma d} \gamma^U \,\mathrm{d}U\Biggr) \\
        =& \frac{2b\gamma^j(\gamma+\alpha)}{(1+\alpha)(\gamma-1)}
        \Biggl(\frac{\gamma-d}{\ln \gamma}+\gamma\frac{d-1}{\ln \gamma} \Biggr)\\
        =& 2d\gamma^j\frac{\gamma+\alpha}{(1+\alpha)\ln\gamma}\\
        =& 2c_{{\vec{\alpha}}}(\seq^*(k))\frac{\gamma+\alpha}{(1+\alpha)\ln\gamma}.
    \end{align*}
    With \Cref{lem:u-path-pi} and \eqref{eq:u-opt-path-opt-tour},
    we obtain
    \begin{align*}
        \frac{\E[\ell(P)]}{\lat_{\alpha}(\seq^*)}
        = \frac{\E\bigl[\sum_{k=0}^n\psi_P(k)\bigr]}{\lat_{\alpha}(\seq^*)}
        \leq \frac{2\frac{\gamma+\alpha}{(1+\alpha)\ln\gamma}\sum_{k=1}^n c_{\alpha}(\seq^*(k))}{\lat_{\vec{\alpha}}(\seq^*)}
        = 2\frac{\gamma+\alpha}{(1+\alpha)\ln\gamma}.
    \end{align*}
    Thus, the randomized $0$--$n$-path $P$ has expected length of at most $\E[\ell(P)]\leq 2\frac{\gamma+\alpha}{(1+\alpha)\ln\gamma} \lat_{\alpha}(\seq^*)$.
    In particular, a shortest $0$--$n$-path $P^*$ has length at most $\ell(P^*)\leq \E[\ell(P)]\leq 2\frac{\gamma+\alpha}{(1+\alpha)\ln\gamma} \lat_{\alpha}(\seq^*)$.
\end{proof}

\subsection{Proof of \Cref{lem:min-gamma}}
\label{app:min-gamma}

\mingamma*

\begin{proof}
    We first note that
    $\lim_{\gamma \to 1} h(\gamma) = \infty$ and $\lim_{\gamma \to \infty} h(\gamma) = \infty$.
    Thus, the minimum is attained for $\gamma^* \in (1,\infty)$.
    The first derivative of $h$ is given by
    \begin{align*}
        h'(\gamma) &=
        2\frac{(1+\alpha)\ln(\gamma) - (\gamma+ \alpha)\frac{1 + \alpha}{\gamma}}{(1+\alpha)^2\ln^2(\gamma)}.
    \end{align*}
    We then have $h'(\gamma^*) = 0$ which yields the equation
    \begin{align*}
        \gamma^*\ln(\gamma^*)(1+\alpha) &= (\gamma^*+ \alpha)(1 + \alpha)
        \intertext{which is equivalent to}
        \gamma^*\ln(\gamma^*)&= (\gamma^*+ \alpha).
    \end{align*}
    We thus obtain $\gamma^*(\ln(\gamma^*)-1) = \alpha$.
    Substituting $\gamma^* = \mathrm{e}^z$, we obtain the equation $\mathrm{e}^z(z-1) = \alpha$ which yields $\mathrm{e}^{z-1}(z-1) = \frac{\alpha}{\mathrm{e}}$.
    The solution to the latter equation is given by $z-1 = W(\frac{\alpha}{\mathrm{e}})$. Substituting back, we obtain
    \begin{align*}
        \gamma^* = \mathrm{e}^z = \mathrm{e}^{1 + W(\alpha/\mathrm{e})},
    \end{align*}
    so that the minimum of $h$ is attained at the claimed value. We obtain
    \begin{align*}
        h(\gamma^*) = 2 \frac{\alpha + \mathrm{e}^{1+W(\alpha/\mathrm{e})}}{(1+\alpha)(1+W({\alpha}/{\mathrm{e}}))}.
    \end{align*}
    For $\alpha= 0$, we obtain $h(\gamma^*) = \frac{2\mathrm{e}}{1+\alpha}$.
    For $\alpha > 0$ we can simplify this expression using the identity $W(\frac{\alpha}{\mathrm{e}}) \mathrm{e}^{W(\alpha/\mathrm{e})} = \frac{\alpha}{\mathrm{e}}$
    \begin{align*}
        h(\gamma^*) &= 2 \frac{\alpha + \mathrm{e}\cdot \mathrm{e}^{W({\alpha}/{\mathrm{e}})}}{(1+\alpha)(1+W({\alpha}/{\mathrm{e}}))}\\
        &=2 \frac{\alpha + \mathrm{e}\frac{\alpha/\mathrm{e}}{W({\alpha}/{\mathrm{e}})}}{(1+\alpha)(1+W({\alpha}/{\mathrm{e}}))}\\
        &=2 \frac{\alpha(1+ \frac{1}{W(\alpha/\mathrm{e})})}{(1+\alpha)(1+W({\alpha}/{\mathrm{e}}))} \\
        &= 2 \frac{\alpha}{(1+\alpha)W({\alpha}/{\mathrm{e}})}.
    \end{align*}
    Overall, we obtain 
    \begin{align*}
        h(\gamma^*)=
        \begin{cases}
            2\mathrm{e} & \text{ if } \alpha = \alpha = 0,\\
            2 \frac{\alpha}{(1+\alpha)W({\alpha}/{\mathrm{e}})} & \text{ otherwise.}
        \end{cases}    
    \end{align*}
\end{proof}

\subsection{Proof of \Cref{lem:u-factor-with-phantom}}
\label{app:lem:u-factor-with-phantom}

\lemufactorwithphantom*

\begin{proof}
	Together, \Cref{lem:u-expected-latency,lem:u-path-pi,lem:u-opt-bound} combined with \eqref{eq:u-expected-latency} and \Cref{lem:min-gamma} imply
	\begin{align*}
		\lat_{\alpha}(\seq_\alg)
		\leq \sum_{k=0}^n \psi_{P^*}(k)
		=\ell(P^*)
		\leq \factor \lat_{\alpha} (\seq^*),
	\end{align*}
	proving the approximation ratio claimed in \Cref{lem:u-factor-with-phantom}.
\end{proof}

\subsection{Proof of \Cref{lem:u-no-phantom-trees}}
\label{app:u-no-phantom-trees}

\unophantomtrees*

\begin{proof}
	Recall that a phantom tree's \edgec is defined as a linear interpolation of two real trees in $\mathcal{T}$.
	Further, we may assume without loss of generality that $c(T_k)\leq c(T_{k+1})$ for all $k\in [n-1]_0$, as we can set $T_k$ to $T_{k+1}$ if $c(T_k)> c(T_{k+1})$.
	Let $P^*$ be the shortest $0$--$n$-path in the auxiliary graph~$H$ computed by the concatenating algorithm on input~$\mathcal{T}'$ and let $a<b<c$ be three consecutive vertices on that path with corresponding trees $T_a, T_b$, and $T_c$.
	Assume that $T_b$ is a phantom tree that is obtained by a linear interpolation of the real trees $T_{b_0}$ and $T_{b_1}$, i.e., $b=(1-\mu)b_0+\mu b_1$ for some~$\mu\in(0,1)$.
	
	In what follows, we show that by setting $b$ to either $\max\{a, b_0\}$ or $\min\{c,b_1\}$, we can construct a $0$--$n$-path $P$ in $H$ with $\ell(P)= \ell(P^*)$.
	Thus, we obtain a $0$--$n$-path with the same length but fewer phantom trees.
	Iterating this procedure then yields the desired $0$--$n$-path in~$H$.
	
	In that sense, recall that by the definition of $H$, the subpath~$((a,b),(b,c))$ of~$P^*$ has length
	\begin{align*}
		\ell_{a,b}+\ell_{b,c}
		&=\big[(1+\alpha)n-\alpha b-a\big]c(T_b)
		+\big[(1+\alpha)n-\alpha c-b\big]c(T_c) \\
		&=\big[(1+\alpha)n-\alpha ((1-\mu)b_0+\mu b_1)-a\big]\big[(1-\mu)c(T_{b_0})+\mu c(T_{b_1})\big]\\
		&\phantom{ = [} +\big[(1+\alpha)n-\alpha c-((1-\mu)b_0+\mu b_1)\big]c(T_c).
	\end{align*}
	This function is quadratic in $\mu$, where $\mu^2$ has the coefficient 
	\begin{align*}
		-\alpha(b_1-b_0)(c(T_{b_1})-c(T_{b_0}))\leq 0.
	\end{align*}
	Hence, a local minimum is attained when replacing $b$ by either $\max\{a,b_0\}<b$ or $\min\{c,b_1\}>b$.
	This gives a new path $P$ where we delete any self-loop at $a$ or $c$ if $b$ was set to either $a$ or $c$.
	Then, the new $0$--$n$-path $P$ has length at most $\ell(P^*)$ and contains fewer vertices that correspond to phantom trees.
	We can repeat this process until the new $0$--$n$-path $P$ contains no phantom trees anymore.
\end{proof}

\subsection{Proof of \Cref{thm:u-main}}
\label{app:u-main}

\umain*

\begin{proof}[Proof of \Cref{thm:u-main}]
	Consider the following algorithm:
\begin{enumerate}
	\item Run \Cref{alg:u-phantom-tree} and obtain the set~$\mathcal{T}$.
	\item Denote the set of real trees in~$\mathcal{T}$ by~$\mathcal{R}$.
	\item Run the concatenating algorithm on input~$\mathcal{R}$ and obtain~$\seq_\alg$.
\end{enumerate}
We claim that this algorithm runs in polynomial time and the returned solution~$\seq_\alg$ guarantees an approximation ratio of~$\factor$ proving \Cref{thm:u-main}.

The claimed approximation guarantee of~$\seq_\alg$ follows immediately by \Cref{lem:u-factor-with-phantom,lem:u-no-phantom-trees}.
It remains to prove that the algorithm runs in polynomial time.
To this end, we first analyze the running time of \Cref{alg:u-phantom-tree}.
For every terminal vertex~$t\in V \setminus \{ s \}$, the primal--dual subroutine is executed once with parameter~$\lambda=0$ and terminal vertex~$t$.
By \citeauthor{chaudhuri2003paths} one call of the subroutine runs in time~$\mathcal{O}(n^2)$, so we need time~$\mathcal{O}(n^3)$ to compute all values~$k_{0,t}$.
Afterward, we apply \Cref{lem:u-running-time-lambda} to each pair of a value~$k\in\setn_0$ and a terminal vertex~$t\in V\setminus \{ s \}$.
By the proof of \Cref{lem:u-running-time-lambda}, each call takes time at most~$\mathcal{O}(n^5)$.
Since the remaining steps of \Cref{alg:u-phantom-tree} can be executed in constant time, we obtain an upper bound on the running time by~$\mathcal{O}(n^7)$.
Next, consider the execution of the concatenating algorithm on input~$\mathcal{R}$.
Dijkstra's algorithm gives a trivial upper bound of $\mathcal{O}(n^2)$ for the computation of the shortest $0$--$n$-path.
Afterward, the chosen trees are turned into tours and then concatenated.
Each tour can be computed in linear time by finding an Euler tour as mentioned in \Cref{prop:tree-to-tour}.
In total, the concatenating algorithm on input~$\mathcal{R}$ runs in time~$\mathcal{O}(n^2)$.

Thus, the complete algorithm as stated above runs in polynomial time and computes a solution~$\seq_\alg$ for the \psp with approximation guarantee~$\factor$.
This completes the proof of \Cref{thm:u-main}. 
\end{proof}

\section{Hardness of Approximation}
\label{sec:alpha-hardness}

In this section, we show that the \psp is hard to approximate for $p=1$ and any $\alpha \in [0, 1]$.
In particular, we show the following theorem.

\thmhardness*

The proof of this theorem is split into two parts.
First, we focus on the special case of~$\alpha=0$.
Afterward, we consider all remaining values~$\alpha\in (0,1]$.

\subsection{Hardness of the Expanding Search Problem ($\alpha = 0)$}

For~$\alpha=0$, the \psp with discount factor~$\alpha$ corresponds to the expanding search problem.
\citet{GriesbachHKS26} proved that the weighted version of the expanding search problem, in which the goal is to minimize the weighted sum of latencies and $p=1$, is $\mathsf{NP}$-hard.
In particular, they show this for the $\{0, 1\}$-weighted version, in which each vertex has a weight either $0$ or $1$. They show the following.

\begin{theorem}[Theorem~6.1 in~\cite{GriesbachHKS26}]
\label{thm:hardness-esp}
    If~$\alpha= 0$, there exists a constant~$\eps>0$ such that there is no polynomial-time~$(1+\eps)$-approximation algorithm for the $\{0, 1\}$-weighted \psp with discount factor~$\alpha$, unless~$\mathsf{P}=\mathsf{NP}$.
\end{theorem}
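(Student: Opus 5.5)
This result is Theorem~6.1 of \citet{GriesbachHKS26}, so in the paper we would simply invoke it. If we had to reprove it, the plan would be an approximation-preserving (L-)reduction from \emph{min sum vertex cover}. That problem is APX-hard by Feige, Lov\'asz and Tetali, and its cost already has the ``sum of first-cover times'' form that expanding search ($\alpha=0$, $p=1$) measures.

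\textbf{Construction.} Given a graph $G'=(V',E')$, we build an instance with root $s$ and two kinds of gadget vertices.
\begin{itemize}
\item For every $v\in V'$ there is a \emph{selector} vertex $x_v$ of weight $0$, joined to $s$ by an edge of cost $1$.
\item For every $e=uv\in E'$ there is an \emph{edge} vertex $y_e$ of weight $1$, adjacent to $x_u$ and $x_v$.
\end{itemize}
The intended correspondence is this: paying to add the edge $s x_v$ means putting $v$ into the cover, and a weight-$1$ vertex $y_e$ is discovered as soon as one endpoint of $e$ has been selected. Its latency then equals the number of cover vertices chosen up to that point, which is exactly the min-sum-vertex-cover cost of $e$. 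Hence an ordering $\sigma$ of $V'$ with cost $\mathrm{MSVC}(\sigma)$ maps to an expanding search with weighted latency equal to $\mathrm{MSVC}(\sigma)$. Conversely, from any expanding search we read off the order in which selector gadgets are first entered and obtain a cover order that is no more expensive. Because the reduction preserves objective values exactly, a $(1+\eps)$-approximation for the $\{0,1\}$-weighted problem would give one for min sum vertex cover.

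\textbf{Main obstacle.} The edges $x_u y_e$ must be cheap enough that $y_e$ is effectively free once $x_u$ is opened. At the same time, the path $x_u\!-\!y_e\!-\!x_v$ must not let the search reach $x_v$ more cheaply than through $s$.
\begin{itemize}
\item Giving these edges cost $0$ fails, since $x_v$ then becomes reachable for free.
\item The first fix I would try is to give $x_u y_e$ cost $1$. Then reaching $x_v$ through $y_e$ costs exactly as much as the edge $s x_v$, so any search can be normalised to one that enters every selector directly from $s$.
\item The price is that $y_e$ now costs $1$ to discover, which adds an additive ordering term of $|E'|$ unit steps. I would handle this by replacing each $y_e$ with a cluster of $M$ weight-$1$ vertices joined by cost-$0$ edges. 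This ensures the selector steps dominate, and an exchange argument shows that once a selector is opened, its pending $y$-clusters are opened immediately.
\end{itemize}
Verifying that this exchange argument loses only a $1+o(1)$ factor, so that APX-hardness transfers with a constant $\eps>0$, is the step I expect to require the most care.
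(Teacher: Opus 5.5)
Citing \citet{GriesbachHKS26} is exactly what the paper does. The paper gives no proof of its own; it only remarks that its $\alpha\in(0,1]$ argument is modelled on the cited proof. That argument hangs copies of a base instance off $s$ by expensive edges and takes the number $k$ of copies large to absorb the $\frac{k+1}{k-1}$ loss. So for this paper your first sentence suffices. Your fallback reproof, however, has a genuine gap at exactly the step you flag, and the cluster trick does not close it.

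\textbf{Why the cluster trick fails.} With unit cost on the edges $sx_v$ and on the edges into the edge gadgets, every edge a search pays for costs exactly $1$. There is exactly one paid edge per selector and one per gadget, since gadget-internal edges are free and a gadget can only be entered from $x_u$ or $x_v$. Let $\sigma$ be the order in which selectors are first reached. Every vertex of gadget $e$ then has latency at least $a_e+b_e$, with equality for the greedy normalised search, where:
\begin{itemize}
\item $a_e\ge\min\{\sigma(u),\sigma(v)\}$ is the number of selectors opened so far;
\item $b_e$ is the rank of $e$ among the gadgets.
\end{itemize}
Since $\sum_e b_e=\binom{m+1}{2}$ for every search, where $m=|E'|$, you get
\[
\mathrm{OPT}_{\mathrm{ES}} = M\Bigl(\mathrm{OPT}_{\mathrm{MSVC}}(G')+\binom{m+1}{2}\Bigr).
\]
The weight $M$ multiplies the additive term exactly as much as the MSVC term, so the selector steps never come to dominate. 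The reduction is exact only up to this additive term. The normalisation ``open pending gadgets immediately'' already holds for $M=1$, because selectors have weight $0$.

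The additive term is not $o(\mathrm{OPT})$ either. On a star $K_{1,m}$ one has $\mathrm{OPT}_{\mathrm{MSVC}}=m$ while $\binom{m+1}{2}$ is quadratic in $m$. A $(1+\eps)$-approximate search therefore says nothing about the cover order, and the ``$1+o(1)$'' verification you plan cannot succeed in general.

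\textbf{The missing ingredient is a degree bound.} Suppose $G'$ has maximum degree $\Delta$. The first $t$ selectors cover at most $t\Delta$ edges, so the $j$-th covered edge has cover time at least $\lceil j/\Delta\rceil$. This gives $\binom{m+1}{2}\le\Delta\cdot\mathrm{OPT}_{\mathrm{MSVC}}$. Reading $\sigma$ off a $(1+\eps)$-approximate search then yields $\mathrm{MSVC}(\sigma)\le\bigl(1+\eps(1+\Delta)\bigr)\mathrm{OPT}_{\mathrm{MSVC}}$. This is an L-reduction with constant loss, not a $1+o(1)$ one.

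Hence you must start from APX-hardness of min sum vertex cover on graphs of constant maximum degree, and take $\eps_{\mathrm{ES}}=\eps_{\mathrm{MSVC}}/(1+\Delta)$. To my knowledge the Feige--Lov\'asz--Tetali hardness is obtained on regular graphs of constant degree, so this ingredient is available, but you should check it. With that change, and $M=1$, your route is a valid single-instance reduction. It is genuinely different from the copies-based amplification the paper alludes to.
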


We note that this result also implies the same result for the unweighted version of \psp by replacing each vertex with weight $1$ with a clique of polynomial size, where each edge of the clique has cost $0$. 
This directly gives the desired result for $\alpha=0$ and $p=1$ in our setting.

\subsection{Hardness of the Discounted Graph Search Problem ($\alpha \in (0, 1]$)}

To complete the proof of \Cref{thm:hardness-psp}, it remains to consider values~$\alpha\in(0,1]$ for the discount factor~$\alpha$.
Thus, this subsection is dedicated to proving the following theorem.
We note that in the proof of \Cref{thm:hardness-psp-not-0} we consider the weighted version but give every vertex the same non-negative weight $w > 0$. 
Up to scaling, this is equivalent so the unweighted setting.

\begin{theorem}
	\label{thm:hardness-psp-not-0}
	For every constant~$\alpha\in(0,1]$, there exists a constant~$\eps>0$ such that there is no polynomial-time~$(1+\eps)$-approximation algorithm for the \psp with discount factor~$\alpha$, unless~$\mathsf{P}=\mathsf{NP}$.
\end{theorem}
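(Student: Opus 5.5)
We plan a gap-preserving reduction from an APX-hard problem. The natural candidate is metric TSP (path version) with distances in $\{1,2\}$, which Papadimitriou and Yannakakis showed is APX-hard. The goal is to make the total $\alpha$-latency essentially determined by the length of a Hamiltonian-type path through a large cluster structure.

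\textbf{Construction.} Start from a $\{1,2\}$-TSP instance on $m$ cities, viewed as an edge-weighted graph $G_0$ with weights $1$ and $2$, and a start city. Attach to the root $s$ a long edge of cost $L$ leading to $G_0$. Replace every city by a clique of $N$ vertices joined by cost-$0$ edges, where $N$ is polynomially large. This makes every city carry the same large weight $w=N$, and up to scaling this is the weighted setting mentioned before Theorem~\ref{thm:hardness-psp-not-0}.

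\textbf{Relating latency to tour length.} Any traversal first pays $L$ and then visits the clusters in some order. The latency of the $i$-th cluster is $L$ plus the $\alpha$-cost of the walk up to it. A walk inside the $\{1,2\}$ metric never needs to repeat edges, because the metric closure is realized by direct edges. Repeated traversals are discounted by $\alpha$, not made free, so for $\alpha>0$ backtracking along an edge of cost $c$ costs at least $\alpha c>0$. I would first show, by a shortcutting argument, that some optimal traversal visits the clusters along a simple path in $G_0$, losing at most a factor depending only on $\alpha$.

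\textbf{Choosing the gap.} Take $L=0$ and express the objective as $N\sum_i(\text{prefix cost}_i)$. If the TSP path has $j$ cost-$2$ edges, the sum of prefix costs depends on where those expensive edges sit, which is the usual obstacle when reducing TSP to latency objectives. To neutralize this, I would follow the standard Sahni–Gonzalez/Blum et al.\ route of reducing instead from the hardness of minimum latency itself on $\{1,2\}$ metrics, which is known to be APX-hard. The only genuinely new point is the discount $\alpha$.

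\textbf{Main obstacle.} I expect the main difficulty to be proving that in these instances an optimal $\alpha$-traversal never benefits from reusing edges. Concretely, any revisit of an already-traversed edge of cost $c\ge1$ can be replaced by a direct edge of cost at most $2$, and this replacement must not worsen any latency by more than a factor $2/\alpha$. That factor yields only a constant distortion, which would destroy an approximation gap. The fix I would try first is to make $G_0$ a complete graph with weights $1$ and $2$, so that every shortcut is available at cost at most $2\le 2c$. I would then argue via an exchange argument that each revisit costs at least $\alpha\cdot1$ while the direct shortcut costs at most the path it replaces. This would show the optimum equals the classical minimum latency value for $\alpha$ near $1$; for general $\alpha$ it would only give a constant-factor sandwich. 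Because of that, I would finish by reducing from a gap instance whose yes/no separation exceeds this sandwich after amplification, choosing $\eps$ depending on $\alpha$.
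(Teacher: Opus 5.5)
There is a genuine gap. The step that should tie the $\alpha$-objective to an APX-hard quantity without multiplicative loss is never achieved, and the claims supporting it are false for $\alpha<1$. A walk in a $\{1,2\}$-metric \emph{does} benefit from repeating edges. If $x$ is joined to $u$ and $w$ by cost-$1$ edges and $c_{uw}=2$, the walk $x,u,x,w$ has $\alpha$-cost $2+\alpha$, which is less than the cost $3$ of $x,u,w$ for every $\alpha<1$. This is exactly where your exchange argument breaks: the revisit costs only $\alpha$ while the shortcut costs $2$. It is also why the problem drifts towards MST/expanding-search behaviour as $\alpha\to0$. So the $\alpha$-optimum differs from the classical minimum-latency value even for $\alpha$ close to $1$, and you are left with a multiplicative sandwich. (At $\alpha=1$ your plan just amounts to invoking hardness of minimum latency; all the difficulty lies in $\alpha<1$.)

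Your repair, amplifying a hard gap until it beats the sandwich, is unavailable. APX-hardness supplies only some fixed gap $1+\delta$, and no amplification for minimum latency is known. Since minimum latency is $3.59$-approximable, no NP-hard gap above $3.59$ exists unless $\mathsf{P}=\mathsf{NP}$, and this is already below your own distortion estimate $2/\alpha$ once $\alpha<2/3.59$. A $(1+\delta)$ gap survives an additive shift of the order of the optimum, but not a constant multiplicative distortion.

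The missing idea is a gadget that makes edge reuse provably unprofitable, so that $\alpha$-cost equals ordinary cost plus a known offset. The paper does this first for the tour problem. It takes two copies $G_a,G_b$ of a TSP(1,2) instance on $n$ vertices ($n$ even) and joins each $v_a$ to $v_b$ by a path of $k+1$ unit edges, with $k=\lceil 3/\alpha\rceil$. If two connector paths are both split, or both traversed twice, dropping one traversal of each saves at least $2k\alpha\ge 6$. That saving pays for reconnecting via $(v_a,v_a')$ and $(v_b,v_b')$ (cost at most $2$ each) and the missing unit edges. Together with a shortcut argument for repeated vertices, every tour can be turned, without increasing its $\alpha$-cost, into one that alternates between the copies and uses each edge once. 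Its $\alpha$-cost is then the TSP cost plus $(k+1)n$. Since the TSP optimum is at least $n$, this additive shift preserves the gap, so $\alpha$-TSP(1,2) is APX-hard.

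The ``prefix cost'' obstacle you identified is then handled without appealing to minimum-latency hardness. The paper hangs a large constant number $K$ of copies of the $\alpha$-TSP(1,2) instance off $s$ by edges of cost $4(n-1)$. A ratio-based exchange argument shows that any solution can be made to visit each copy contiguously. The latency of every vertex in copy $i$ then contains the full $\alpha$-tour costs of copies $1,\dots,i-1$. The order inside a copy therefore only costs a factor $\frac{K+1}{K-1}$, which tends to $1$ as $K$ grows.
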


The proof idea is as follows.
First, we introduce a variant of the traveling salesperson problem with edge costs either~$1$ or~$2$.
In the original version, one is given an undirected complete graph~$G$ on~$n$ vertices where every edge has a cost of either~$1$ or~$2$.
The goal is to find a tour in~$G$ that visits all vertices and has minimal cost.
We call this problem \textsc{TSP}(1,2).
In the variant we consider, we are additionally given a discount factor~$\alpha\in(0,1]$, and we want to find a tour in~$G$ that visits all vertices and has minimal~$\alpha$-cost.
We call this problem~$\alpha$-\textsc{TSP}(1,2).
We show that for~$\alpha$-\textsc{TSP}(1,2), there exists a constant~$\eps>0$ such that there is no polynomial-time~$(1+\eps)$-approximation algorithm, unless~$\mathsf{P}=\mathsf{NP}$.
This hardness of approximation is then used to prove \Cref{thm:hardness-psp-not-0}.

\citet{papadimitriou1993traveling} proved the following result for \textsc{TSP}(1,2).

\begin{theorem}[Theorem 3 in \cite{papadimitriou1993traveling}] \label{thm:tsp12}
    \textsc{TSP(1,2)} is~$\mathsf{MaxSNP}$-hard.
\end{theorem}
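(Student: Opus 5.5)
Since this is a cited result (Theorem~3 of \citet{papadimitriou1993traveling}), I would reprove it by an $L$-reduction from a problem already known to be $\mathsf{MaxSNP}$-hard. The natural source is \textsc{Max-3Sat} restricted to instances where every variable occurs a bounded number of times, say at most three. For this restricted problem both $\mathsf{MaxSNP}$-hardness and a linear lower bound $\opt \geq m/2$ on the number of satisfiable clauses are standard. An $L$-reduction needs two properties: the optimum values of the two instances are linearly related, and any tour can be mapped back to an assignment whose loss is linearly bounded by the tour's excess over optimum. Together these transfer $\mathsf{MaxSNP}$-hardness.

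\textbf{Construction.} Given a formula $\varphi$ with variables $x_1,\dots,x_N$ and clauses $C_1,\dots,C_m$, I would build a graph $G_\varphi$ whose ``1-edges'' form a bounded-degree graph and set all remaining edge costs to~$2$. Each clause gets a constant-size clause gadget, in the spirit of the Papadimitriou--Yannakakis ``$XOR$''-type gadgets. Each variable gets a chain that strings together the gadget ports of the clauses in which it occurs, with two possible traversal modes encoding $x_i=$ true or false. The gadgets must satisfy the following key property: a clause gadget can be covered by a Hamiltonian path of 1-edges entering and exiting through prescribed ports \emph{if and only if} at least one of its literal-ports is traversed consistently with a satisfying literal. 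Otherwise, covering it costs at least one extra $2$-edge. Because occurrences are bounded, the total number of vertices is $|V| = \Theta(N+m) = \Theta(m)$.

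\textbf{Completeness direction.} From an assignment leaving $u$ clauses unsatisfied, I would build a tour of cost at most $|V| + c\cdot u$ for a constant $c$. Since $|V|=\Theta(m)=O(\opt_{\textsc{Sat}})$, this gives $\opt_{\textsc{TSP}} \leq |V| + c\cdot(m-\opt_{\textsc{Sat}}) = O(\opt_{\textsc{Sat}})$, which is the first $L$-reduction condition.

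\textbf{Soundness direction, the main obstacle.} Given an arbitrary tour $\tau$ of cost $|V|+k$ (that is, using exactly $k$ edges of cost~$2$), I must extract in polynomial time an assignment with at most $c'\cdot k$ unsatisfied clauses. The difficulty is that an arbitrary tour may enter and leave gadgets in ``illegal'' ways, or traverse a variable chain inconsistently. I would first normalize $\tau$ gadget by gadget: whenever a gadget is traversed irregularly, a $2$-edge is incident to it, and I would reroute locally into a legal traversal without increasing the number of $2$-edges by more than a constant per incident $2$-edge. Because degrees are bounded, each $2$-edge affects only $O(1)$ gadgets. After normalization each variable chain has a well-defined mode, which I read off as the assignment. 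Every unsatisfied clause then forces a $2$-edge at its gadget, so the number of unsatisfied clauses is $O(k)$. This yields the second condition, $|\,\mathrm{val}-\opt\,|_{\textsc{Sat}} \leq c''\,|\,\mathrm{cost}(\tau)-\opt\,|_{\textsc{TSP}}$. I expect the local rerouting case analysis inside the gadgets to be the tedious and delicate part.
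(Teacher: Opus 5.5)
The paper gives no proof of this statement; it imports it from \citet{papadimitriou1993traveling}, whose argument is a gadget-based $L$-reduction from a bounded-occurrence satisfiability problem. Your overall architecture is therefore the right one.

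The genuine gap is in the soundness step. ``At most $c'k$ unsatisfied clauses'' does not imply the second $L$-reduction condition you assert at the end. Put $u^*\coloneqq m-\opt_{\textsc{sat}}$ and $k^*\coloneqq\opt_{\textsc{tsp}}-|V|$. Let $u(\tau)$ be the number of clauses violated by the assignment you extract from a tour $\tau$ with $k$ two-edges. The condition requires $u(\tau)-u^*\le\beta\,(k-k^*)$ for \emph{every} tour, so in particular an optimal tour must be mapped to an optimal assignment. Your two directions only give $k^*\le c\,u^*$ and $u(\tau)\le c'k$. Hence
$u(\tau)-u^*\le c'(k-k^*)+(c'k^*-u^*)$.
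The error term $c'k^*-u^*$ is nonnegative and is only bounded by $(cc'-1)u^*$. That bound is $\Theta(m)$ in general, even when $k=k^*$, so nothing controls it unless $cc'\le 1$. Your lossy normalization, which allows ``a constant more per incident two-edge'', makes $c'$ strictly larger than the per-clause price $1/c$, so the claimed inequality is a non sequitur.

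An $L$-reduction needs exact accounting on both sides. First, the normalization must never increase the number of two-edges: a tour of cost $|V|+k$ must become a legal tour of cost at most $|V|+k$. Second, a legal tour whose read-off assignment violates $u$ clauses must use at least $c\,u$ two-edges, with the same $c$ as in your completeness bound. Then $k^*=c\,u^*$ and $u(\tau)-u^*\le (k-k^*)/c$, which is the condition with $\beta=1/c$. Usually one arranges $c=1$: every tour of cost $|V|+k$ yields, in polynomial time, an assignment violating at most $k$ clauses.

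The first requirement is exactly what the gadget must be designed for. Every irregular traversal must be repairable locally without extra two-edges and without splitting the tour into subtours. Your proposal postulates both the gadget and this repair lemma rather than constructing them, so the technical core is still missing.

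Granting your gadget properties, what you do obtain is a gap-preserving reduction. Together with the PCP-based gap for bounded-occurrence \textsc{Max-3Sat}, it shows that \textsc{TSP}(1,2) has no $(1+\varrho)$-approximation for some constant $\varrho>0$ unless $\mathsf{P}=\mathsf{NP}$. That is all the paper actually uses this theorem for, but it is not $\mathsf{MaxSNP}$-hardness as stated.
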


Recall from the previous subsection that \citet{arora1998proof} showed that there exists no polynomial-time approximation scheme for any~$\mathsf{MaxSNP}$-hard problem, unless~$\mathsf{P}=\mathsf{NP}$.
Hence, there exists a constant~$\varrho>0$ such that there is no polynomial-time~$(1+\varrho)$-approximation algorithm for \textsc{TSP}(1,2), unless~$\mathsf{P}=\mathsf{NP}$.
We use the hardness result for TSP(1,2) to show that for every constant~$\alpha\in(0,1]$, there exists a constant~$\eps>0$ such that there is no polynomial-time~$(1+\eps)$-approximation algorithm for~$\alpha$-\textsc{TSP(1,2)}, unless~$\mathsf{P}=\mathsf{NP}$.
In particular, we assume that there was a polynomial-time~$(1+\eps)$-approximation algorithm for~$\alpha$-\textsc{TSP(1,2)} for every~$\eps>0$ and show that this either contradicts \Cref{thm:tsp12} or implies~$\mathsf{P}=\mathsf{NP}$.
Note that for~$\alpha=0$, an optimal solution of~$\alpha$-\textsc{TSP(1,2)} is given by a minimum-cost spanning tree of~$G$.
Hence, an optimal solution for~$0$-\textsc{TSP(1,2)} can be computed in polynomial time.

\begin{lemma}
	\label{lem:hardness-alpha-tsp}
	For every constant~$\alpha\in(0,1]$, there exists a constant~$\eps>0$ such that there is no polynomial-time~$(1+\eps)$-approximation algorithm for~$\alpha$-\textsc{TSP(1,2)}, unless~$\mathsf{P}=\mathsf{NP}$.
\end{lemma}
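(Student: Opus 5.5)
The plan is a gap-preserving reduction from \textsc{TSP}(1,2) to $\alpha$-\textsc{TSP}(1,2) on the \emph{same} complete graph $G$ with $n$ vertices. The reduction itself is the identity map. What has to be shown is a structural comparison between the optimal $\alpha$-cost $\opt_\alpha$ and the optimal TSP cost $\opt_{\tsp}$.

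Two bounds are immediate. First, $\opt_\alpha\le \opt_{\tsp}$, because a Hamiltonian cycle uses no edge twice, so its $\alpha$-cost equals its cost. Second, $\opt_\alpha\ge n$, because every closed walk visiting all $n$ vertices must enter each vertex other than the start for the first time. Each such entry uses an edge leading to a new vertex, which is necessarily a first traversal, and the walk also needs a returning traversal.

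The key lemma is a shortcutting bound. Let $W$ be any closed walk visiting all vertices, and let $H$ be the Hamiltonian cycle obtained by visiting the vertices in order of first appearance in $W$. Then
\[
c(H)-n\;\le\;\tfrac{1}{\alpha}\bigl(c_\alpha(W)-n\bigr).
\]
To prove it, split $W$ into $n$ segments, each running between two consecutive first visits, with the closing segment running back to the start. Each segment ends with a traversal of cost at least $1$. Consider first a segment consisting of a single traversal. If that traversal enters a new vertex, it is a first traversal of that edge, so the corresponding edge of $H$ is exactly that edge and costs the same. Now let $m$ be the number of segments with at least two traversals. Each of these has $\alpha$-cost at least $1+\alpha$, while the corresponding edge of $H$ costs at most $2$. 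For the closing segment, if it is a single reused traversal, charge it similarly with cost at least $\alpha\ge\alpha\cdot 1$, losing at most $2-\alpha$. Summing over segments gives
\[
c(H)\le c_\alpha(W)+(1-\alpha)m
\qquad\text{and}\qquad
c_\alpha(W)\ge n+\alpha m,
\]
and the two together yield the claimed inequality. Care with the closing segment and with $2$-edges is the routine part of this step.

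The main obstacle is that the lemma loses a factor $1/\alpha$ on the \emph{excess over $n$}. Plain $\mathsf{MaxSNP}$-hardness in the form of \Cref{thm:tsp12} is therefore not obviously enough, since the excess $\opt_{\tsp}-n$ may itself be a constant fraction of $n$. I would instead use the gap form with perfect completeness. In the reduction of \citet{papadimitriou1993traveling} from bounded-occurrence satisfiability, combined with the PCP theorem of \citet{arora1998proof}, satisfiable formulas yield graphs with a Hamiltonian cycle of $1$-edges. Hence it is $\mathsf{NP}$-hard to distinguish $\opt_{\tsp}=n$ from $\opt_{\tsp}\ge(1+\delta)n$ for some constant $\delta>0$. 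This perfect-completeness property is the point I would need to verify in their construction.

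Given that gap, the two cases separate cleanly. In YES instances, $\opt_\alpha\le\opt_{\tsp}=n$. In NO instances, applying the lemma to an optimal $\alpha$-walk $W$ and using $c(H)\ge\opt_{\tsp}\ge(1+\delta)n$ gives
\[
\opt_\alpha\ge n+\alpha\delta n.
\]
Set $\eps\coloneqq\alpha\delta/2$. A polynomial-time $(1+\eps)$-approximation for $\alpha$-\textsc{TSP}(1,2) returns a walk of $\alpha$-cost at most $(1+\alpha\delta/2)n$ on YES instances and at least $(1+\alpha\delta)n$ on NO instances. It would therefore decide the gap problem, implying $\mathsf{P}=\mathsf{NP}$.
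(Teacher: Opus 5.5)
\textbf{Comparison with the paper's proof.} Your argument takes a genuinely different route from the paper's, and after one local repair (below) it goes through.

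The paper does not keep the instance. It takes two copies $G_a,G_b$ of $G$ and joins every pair $v_a,v_b$ by a path of $k+1$ unit edges, with $k=\lceil 3/\alpha\rceil$. By exchange arguments it then normalizes any $\alpha$-tour in polynomial time, without increasing its $\alpha$-cost, into one that never reuses an edge and alternates between the copies along a Hamiltonian cycle of $G$. The discount then never applies, and the algorithm's cost is exactly $(k+1)n$ plus the cost of a \textsc{TSP}(1,2) tour. So a $(1+\varepsilon)$-approximation becomes a $(1+\varepsilon(k+2))$-approximation for \textsc{TSP}(1,2). That reduction is approximation preserving, so it needs only the plain constant-factor inapproximability of \textsc{TSP}(1,2) (\Cref{thm:tsp12} plus the PCP theorem).

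Your identity reduction with the shortcutting lemma avoids all gadgets and normalization. However, the lemma inflates the excess over $n$ by $1/\alpha$, so it only transfers a gap whose YES side has excess $o(n)$. As you correctly diagnose, you therefore need the perfect-completeness gap: $\mathrm{OPT}_{\textsc{tsp}}=n$ versus $\mathrm{OPT}_{\textsc{tsp}}\ge(1+\delta)n$. This is available: it follows from the Papadimitriou--Yannakakis reduction from bounded-occurrence satisfiability combined with the PCP theorem with perfect completeness, since satisfying assignments yield Hamiltonian cycles of $1$-edges. But it is a stronger input than the paper uses, and you must cite it explicitly.

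In exchange, your reduction stays inside complete $\{1,2\}$-instances, whereas the paper's $G'$ is not a complete graph. Completeness is part of how \textsc{TSP}(1,2) is defined, and it is used later when subwalks are replaced by direct edges of cost at most $2$. Both routes end with $\varepsilon=\Theta(\alpha)$ times the respective \textsc{TSP}(1,2) hardness constant.

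\textbf{The repair: the closing segment.} Let $m$ count the non-closing segments with at least two traversals. As written, the summed inequalities $c(H)\le c_\alpha(W)+(1-\alpha)m$ and $c_\alpha(W)\ge n+\alpha m$ are both false. The closing segment may cost as little as $\alpha$, while the closing edge of $H$ may cost $2$. For the same reason, your justification of $\mathrm{OPT}_\alpha\ge n$ only yields $n-1+\alpha$.

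Your key lemma as stated actually fails. Join $s$ to all other vertices by $1$-edges, give all remaining edges cost $2$, and let $W=(s,v_2,s,v_3,\dots,s,v_n,s)$. Then $c_\alpha(W)=(n-1)(1+\alpha)$ and $c(H)=2n-2$. So $c(H)-n=n-2$ exceeds $(c_\alpha(W)-n)/\alpha=n-1-1/\alpha$ whenever $\alpha<1$.

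Accounting for the closing segment honestly gives $c(H)\le c_\alpha(W)+(1-\alpha)m+2-\alpha$ and $c_\alpha(W)\ge n-1+\alpha+\alpha m$. Together these give $c(H)-n\le (c_\alpha(W)-n+1)/\alpha$. In NO instances this yields $\mathrm{OPT}_\alpha\ge n+\alpha\delta n-1$. With $\varepsilon=\alpha\delta/2$ the two cases still separate once $n>2/(\alpha\delta)$, and smaller instances can be decided by brute force. So your conclusion is unaffected.
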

\begin{proof}
	Let~$\alpha\in(0,1]$ be fixed.
	The basic idea of the proof is as follows.
	Let~$\varrho>0$ be a constant such that there is no polynomial-time~$(1+\varrho)$-approximation algorithm for \textsc{TSP}(1,2), unless~$\mathsf{P}=\mathsf{NP}$.
	This constant exists due to \cite{arora1998proof} and \cite{papadimitriou1993traveling}.
	We assume that for every~$\eps>0$, there exists a polynomial-time~$(1+\eps)$-approximation algorithm for~$\alpha$-\textsc{TSP}(1,2) and conclude that this yields a~polynomial-time~$\gamma$-approximation algorithm for \textsc{TSP}(1,2) with~$\gamma<(1+\varrho)$.
	Given a \textsc{TSP}(1,2) instance, we construct an instance for~$\alpha$-\textsc{TSP}(1,2) as follows.
	
	\paragraph{Construction of the~$\alpha$-\textsc{TSP}(1,2) Instance.}
	Let~$I_{\textsc{tsp}}=(G=(V,E),(c_e)_{e\in E})$ be an instance of \textsc{TSP}(1,2).
	Without loss of generality, we assume that the number of vertices~$|V|=n$ is even.
	For fixed~$\alpha\in(0,1]$, we construct the~$\alpha$-\textsc{TSP}(1,2) instance~$I_{\alphatsp}=(G'=(V',E'),(c_e)_{e\in E'})$ as follows.
	The graph~$G'=(V',E')$ is obtained by two copies of~$G$, referred to as~$G_a$ and~$G_b$.
	Let~$v_a\in V'$ be a vertex of~$G'$ in copy~$G_a$, and let~$v_b\in V'$ be the corresponding copy in~$G_b$.
	We connect vertices~$v_a$ and~$v_b$ by a path consisting of~$k+1\in \N$ edges of cost~$1$ and~$k$ intermediate vertices where we set~$k\coloneqq \left\lceil \frac{3}{\alpha} \right\rceil$.
	Every edge~$e\in E'$ is equipped with the same discount factor~$\alpha$.
	An illustration of the obtained~$\alpha$-\textsc{TSP}(1,2) instance is shown in \Cref{fig:construction-tsp}.
	
	\begin{figure}[t]
	\scriptsize
	\centering
	\begin{subfigure}[b]{.3\textwidth}
	\begin{center}
	\begin{tikzpicture}[xscale=0.5,yscale=0.5,shorten > = 0pt]
		\foreach \i in {1,...,6} {
	    \node[state] (n\i) at ({(360/6) * (\i - 1) +15}:3) {};
		}
		\foreach \i in {1,...,6} {
		    \foreach \j in {\i,...,6} {
		        \ifnum\i<\j
		            \draw[thick, \myyellow] (n\i) -- (n\j);
		        \fi
		    }
		}

		\draw[thick, \myblue] (n1) -- (n3);
		\draw[thick, \myblue] (n2) -- (n3);
		\draw[thick, \myblue] (n4) -- (n3);
		\draw[thick, \myblue] (n5) -- (n2);
		\draw[thick, \myblue] (n5) -- (n1);
		\draw[thick, \myblue] (n6) -- (n2);
		\draw[thick, \myblue] (n6) -- (n5);
	\end{tikzpicture}
	\end{center}
	\caption{}
	\label{fig:construction-tsp-1}
	\end{subfigure}
	\begin{subfigure}[b]{.65\textwidth}
	\begin{center}
	\begin{tikzpicture}[xscale=0.5,yscale=0.5,shorten > = 0pt]
		\begin{scope}[xshift=10cm]
			\foreach \i in {1,...,6} {
		    \node[state] (m\i) at ({(360/6) * (\i - 1) +20}:3) {};
			}
			\foreach \i in {1,...,6} {
			    \foreach \j in {\i,...,6} {
			        \ifnum\i<\j
			            \draw[thick, \myyellow] (m\i) -- (m\j);
			        \fi
			    }
			}
			\draw[thick, \myblue] (m1) -- (m3);
			\draw[thick, \myblue] (m2) -- (m3);
			\draw[thick, \myblue] (m4) -- (m3);
			\draw[thick, \myblue] (m5) -- (m2);
			\draw[thick, \myblue] (m5) -- (m1);
			\draw[thick, \myblue] (m6) -- (m2);
			\draw[thick, \myblue] (m6) -- (m5);
			\node[label=right:{$G_b$}] (G1) at(0,5) {};
		\end{scope}
		\foreach \i in {1,...,6} {
	    \node[state] (n\i) at ({(360/6) * (\i - 1) +20}:3) {};
		}
		\foreach \i in {1,...,6} {
		    \foreach \j in {\i,...,6} {
		        \ifnum\i<\j
		            \draw[thick, \myyellow] (n\i) -- (n\j);
		        \fi
		    }
		}
		\draw[thick, \myblue] (n1) -- (n3);
		\draw[thick, \myblue] (n2) -- (n3);
		\draw[thick, \myblue] (n4) -- (n3);
		\draw[thick, \myblue] (n5) -- (n2);
		\draw[thick, \myblue] (n5) -- (n1);
		\draw[thick, \myblue] (n6) -- (n2);
		\draw[thick, \myblue] (n6) -- (n5);
		\node[label=right:{$G_a$}] (G1) at (0,5) {};
		
		\foreach \i in {1,...,6} {
		\draw[thick, \mygreen] (n\i) -- (m\i);
		
		\foreach \j in {1,...,6} {
		\foreach \i in {1,...,6} {
	    \node[circle,draw=gray,thick,fill=gray,minimum size=1mm,inner sep=0pt] (a\i) at ($ (n\j)!{\i/7}!(m\j)$) {};
	  	}
	  	}
		}
	\end{tikzpicture}
	\end{center}
	\caption{}
	\label{fig:construction-tsp-2}
	\end{subfigure}
	\caption{(a)~Instance~$I_{\textsc{tsp}}$ for \textsc{TSP}(1,2).
	\textcolor{\myblue}{Blue} edges have cost~$1$ and \textcolor{\myyellow}{orange} edges have cost~$2$.
	(b)~Instance~$I_{\alphatsp}$ for~$\alpha$-TSP(1,2) for~$\smash{\alpha=\frac{1}{2}}$ constructed from~$I_{\textsc{tsp}}$.
	On each \textcolor{\mygreen}{green} path from~$G_a$ to~$G_b$, there are~$k=6$ intermediate vertices shown in gray.
	\textcolor{\myblue}{Blue} and \textcolor{\mygreen}{green} edges have cost~$1$, and \textcolor{\myyellow}{orange} edges have cost~$2$.}
	\label{fig:construction-tsp}
	\end{figure}
	
	Assume the polynomial-time~$(1+\eps)$-approximation algorithm~$\algalpha$ for~$\alpha$-\textsc{TSP(1,2)} returns solution~$\seq_{\alphatsp}$ when applied on instance~$I_{\alphatsp}$.
	We want to use~$\seq_{\alphatsp}$ to construct a solution~$\seq_{\textsc{tsp}}$ for \textsc{TSP}(1,2) on instance~$I_{\textsc{tsp}}$.
	In particular, we want to transform~$\seq_{\alphatsp}$ such that the edges in~$G_a$ and~$G_b$ correspond to a simple TSP-tour on~$G$ and every edge in one of the copies~$G_a$ or~$G_b$ is followed by a~$v_a$--$v_b$-path (or~$v_b$--$v_a$-path) to the other component.
	In the following, we show that this transformation of~$\seq_{\alphatsp}$ can be done in polynomial time and does not increase the~$\alpha$-cost of the tour.
	
	To this end, note that it is without loss of generality to assume that all edges in~$\seq_{\alphatsp}$ appear at most twice in~$\seq_{\alphatsp}$ as otherwise the tour can be shortened.
	Next, we show that all edges of a~$v_a$--$v_b$-path connecting the two copies~$G_a$ and~$G_b$ can be assumed to be consecutive in~$\seq_{\alphatsp}$.	

	\begin{claim}\label{cl:full-paths}
	    Without loss of generality, we can assume that all edges of a~$v_a$--$v_b$-path are consecutive in~$\seq_{\alphatsp}$.
	\end{claim}
	\begin{proof}[Proof of \Cref{cl:full-paths}.]
		Assume there is a~$v_a$--$v_b$-path~$P$ in~$G'$ whose edges are not consecutive in~$\seq_{\alphatsp}$.
	    Let~$v_a,v_1,v_2,\dots,v_k,v_b$ be the vertices on that~$v_a$--$v_b$-path.
	    Since every edge of~$G'$ is contained at most twice in~$\seq_{\alphatsp}$ and all intermediate vertices~$v_1,v_2,\dots,v_k$ are visited by~$\seq_{\alphatsp}$, one of the following two cases holds:
	    (i)~the sequence~$\seq_{\alphatsp}$ contains the two subpaths~$(v_a,v_1,\dots,v_l,\dots,v_1,v_a)$ and~$(v_b,v_k,\dots,v_{l},\dots,v_k,v_b)$ for some~$l\in [k]_0$ (where we have~$v_0=a$ and~$v_{k+1}=v_b$),
	    or~(ii)~the sequence~$\seq_{\alphatsp}$ contains the two subpaths~$(v_a,v_1,\dots,v_l,\dots,v_1,v_a)$ and~$(v_b,v_k,\dots,v_{l+1},\dots,v_k,v_b)$ for some~$l\in [k]_0$.
	    It is easy to see that if case~(i) applies, we can delete both appearances of the edge~$(v_l,v_{l+1})$ in~$\seq_{\alphatsp}$ without increasing the~$\alpha$-cost of the entire sequence and thus case~(ii) applies.
	    For case~(ii), it is without loss of generality to assume that~$l=k$, i.e., the missing edge is~$e=(v_k,v_b)$.
	    To see this, note that all edges on the~$v_a$--$v_b$-path~$P$ have cost~$1$ and hence we can exchange the subpaths~$(v_a,v_1,\dots,v_l,\dots,v_1,v_a)$ and~$(v_b,v_k,\dots,v_{l},\dots,v_k,v_b)$ by the two subpaths~$(v_a,v_1,\dots,v_k,\dots,v_1,v_a)$ and~$(v_b)$ without changing the~$\alpha$-cost of the entire sequence.
	    However, since the number of vertices~$n$ in~$G_a$ and~$G_b$ is even, there must exist a second pair of vertices~$v_a'$ and~$v_b'$ such that the edges of the~$v_a'$--$v_b'$-path~$P'$ are also not consecutive in~$\seq_{\alphatsp}$, i.e., edge~$e'=(v_k',v_b')$ is also not contained in~$\seq_{\alphatsp}$.
	    We construct a new tour~$\seq$ as follows.
	    Let~$M_{\alpha}$ be the multiset of edges that appear in~$\seq_{\alphatsp}$.
	    We define a new multiset of edges~$M$ by deleting one copy of each edge on paths~$P$ and~$P'$ and adding the two missing edges~$(v_k,v_b)$ and~$(v_k',v_b')$.
	    We also add the two edges~$(v_a,v_a')$ and~$(v_b,v_b')$ in~$G_a$ and~$G_b$, respectively, and obtain
	    \begin{align*}
	    	M\coloneqq M_{\alpha} &\setminus \{(v_a,v_1),(v_1,v_2),\dots,(v_{k-1},v_k),(v_a',v_1'),(v_1',v_2'),\dots,(v_{k-1}',v_k')\} \\
	    	&\cup \{(v_a,v_a'),(v_b,v_b'),(v_k,v_b),(v_k',v_b')\}.
	    \end{align*}
	    All vertices in the induced subgraph~$G'[M]$ of~$G'$ have an even degree.
	    Thus, there exists an Euler tour~$\seq$ of~$G'[M]$ that visits all vertices of~$G'$ and has~$\alpha$-cost at most
	    \begin{align*}
	    	c_{\alpha}(\seq)
	    	\leq c_{\alpha}(\seq_{\alphatsp})-2k\alpha + 2 + c_e + c_{e'}
	    	\leq c_{\alpha}(\seq_{\alphatsp})-2\frac{3}{\alpha}\alpha + 6 = c_{\alpha}(\seq_{\alphatsp}).
	    \end{align*}
	    Thus, the tour~$\seq$ visits all vertices in~$G'$, and its~$\alpha$-cost is no larger than the~$\alpha$-cost of~$\seq_{\alphatsp}$.
	    We repeat this procedure at most~$\frac{n}{2}$ times to obtain the result.
	    \renewcommand{\qedsymbol}{$\meindreieck$}
	\end{proof}
	
	\begin{claim}\label{cl:paths-once}
	    Without loss of generality, we can assume that no~$v_a$--$v_b$-path is traversed twice by~$\seq_{\alphatsp}$.
	\end{claim}
	\begin{proof}[Proof of \Cref{cl:paths-once}.]
		By \Cref{cl:full-paths}, all edges of a~$v_a$--$v_b$-path~$P$ are traversed consecutively.
		Assume there is a~$v_a$--$v_b$-path~$P$ in~$G'$ that is traversed twice by~$\seq_{\alphatsp}$.
		Let~$v_a,v_1,v_2,\dots,v_k,v_b$ be the vertices on that~$v_a$--$v_b$-path.
		Since the number of vertices~$n$ in~$G_a$ and~$G_b$ is even, there exists a second pair of vertices~$v_a'$ and~$v_b'$ such that the~$v_a'$--$v_b'$-path~$P'$ is also traversed twice by~$\seq_{\alphatsp}$.
		Let~$M_{\alpha}$ be the multiset of edges that appear in~$\seq_{\alphatsp}$.
	    We define a new multiset of edges~$M$ by deleting one copy of each edge on paths~$P$ and~$P'$ and adding the two edges~$(v_a,v_a')$ and~$(v_b,v_b')$ in~$G_a$ and~$G_b$, respectively, i.e.,
	    \begin{align*}
	    	M\coloneqq M_{\alpha} &\setminus \{(v_a,v_1),(v_1,v_2),\dots,(v_{k-1},v_k),(v_k,v_b),(v_a',v_1'),(v_1',v_2'),\dots,(v_{k-1}',v_k'),(v_k',v_b')\} \\
	    	&\cup \{(v_a,v_a'),(v_b,v_b')\}.
	    \end{align*}
		All vertices in the induced subgraph~$G'[M]$ of~$G'$ have an even degree.
	    Thus, there exists an Euler tour~$\seq$ of~$G'[M]$ that visits all vertices of~$G'$ and has~$\alpha$-cost at most
	    \begin{align*}
	    	c_{\alpha}(\seq)
	    	\leq c_{\alpha}(\seq_{\alphatsp})-2(k+1)\alpha + c_{(v_a,v_a')} + c_{(v_b,v_b')}
	    	\leq c_{\alpha}(\seq_{\alphatsp})-2\frac{3+\alpha}{\alpha}\alpha + 4
	    	\leq c_{\alpha}(\seq_{\alphatsp}).
	    \end{align*}
	    Thus, the tour~$\seq$ visits all vertices in~$G'$, and its~$\alpha$-cost is no larger than the~$\alpha$-cost of~$\seq_{\alphatsp}$.
	     We repeat this procedure at most~$\frac{n}{2}$ times to obtain the result.
	    \renewcommand{\qedsymbol}{$\meindreieck$}
	\end{proof}
	
	\begin{claim}\label{cl:double-vertex}
	    Without loss of generality, we can assume that no vertex~$v\in V'$ is visited twice by~$\seq_{\alphatsp}$.
	\end{claim}
	\begin{proof}[Proof of \Cref{cl:double-vertex}.]
		By \Cref{cl:paths-once}, no intermediate vertex of a~$v_a$--$v_b$-path~$P$ is visited twice.
		Let~$v$ be a vertex visited twice by~$\seq_{\alphatsp}$ and assume, without loss of generality, that~$v\in G_a$.
		Let~$M_{\alpha}$ be the multiset of edges that appear in~$\seq_{\alphatsp}$.
		Then vertex~$v$ has an even degree of at least~$4$ in the induced subgraph~$G'[M_{\alpha}]$ of~$G'$.
		By Claims~\ref{cl:full-paths} and~\ref{cl:paths-once}, the edge~$(v,v_1)$ is contained once in~$M_v$.
		Hence, vertex~$v$ has degree at least~$3$ in the induced subgraph~$G_a[M_\alpha]$.
        Since all edges are traversed at most twice in~$\seq_{\alphatsp}$, the subgraph~$G_a[M_\alpha]$ contains at least two different edges adjacent to~$v$.
        Let $P=(e_1,\dots,e_r)$ be a simple inclusionwise-maximal path in~$G_a[M_\alpha]$ that contains two such edges and for every~$1<i<r$ the edge $e_i$ is contained twice in~$G_a[M_\alpha]$. 
        Note that any inner vertex of this path has degree at least~$3$ in~$G_a[M_\alpha]$.
        Thus, if~$P$ is a cycle, the graph~$G'[M_{\alpha}\setminus P]$ still contains an Euler tour with strictly fewer edges than~$\seq_{\alphatsp}$ and cost at most~$c_{\alpha}(\seq_{\alphatsp})$.
        Otherwise,~$P$ is an undirected path starting in some vertex~$u$ and ending in vertex~$w$.
        We shortcut~$P$ by deleting all its edges and instead adding the edge~$(u,w)$.
        Since all intermediate vertices had degree at least~$4$ in~$G'[M_\alpha]$, the remaining edge set~$M_\alpha \setminus P$ still contains an Euler tour~$\seq$.        
        Note that, edges~$e_1$ and~$e_r$ appeared only once in~$G_a[M_\alpha]$. Thus, their removal saves at least a cost of~$2$. 
        On the other hand, adding edge~$(u,w)$ increases the tour's cost by at most~$2$.
	    
	    We repeat this procedure at most~$n$ times to obtain the result.
		\renewcommand{\qedsymbol}{$\meindreieck$}
	\end{proof}
	
	Since the procedures in the proofs of Claims~\ref{cl:full-paths}--\ref{cl:double-vertex} can be executed in polynomial time in the size of~$G$, we may assume, without loss of generality, that the tour~$\seq_{\alphatsp}$ is of the form
	\begin{align}\label{eq:form-alpha-tsp}
		\seq_{\alphatsp} =((v_a,w_a),P_w,(w_b,u_b),P_u,\dots,P_x,(x_b,v_b),P_v),
	\end{align}
	where~$P_y$ is the unique~$y_a$--$y_b$-path in~$G'$.
	In particular, all vertices are visited exactly once by~$\seq_{\alphatsp}$ and, hence, there is no pair of edges~$(v_a,w_a)$ and~$(v_b,w_b)$ that both appear in~$\seq_{\alphatsp}$ (under the trivial assumption of~$n>2$).
	See \Cref{fig:construction-alpha-tsp-1} for an example of such a solution~$\seq_{\alphatsp}$.
	Let~$\bar{E}$ be the edges of~$\seq_{\alphatsp}$ that correspond to edges in~$G$.
	Then,~$\bar{E}$ is a simple set and all vertices have degree~$2$ in~$G[\bar{E}]$.
	Hence, there exists a unique tour~$\seq_{\textsc{tsp}}$ in~$G$ that uses exactly the edge set~$\bar{E}$.
	See \Cref{fig:construction-alpha-tsp-2} for the corresponding solution~$\seq_{\textsc{tsp}}$.
	We summarize the previous results in the following algorithm~$\algtsp$.
	
	\begin{figure}[t]
	\scriptsize
	\centering
	\begin{subfigure}[b]{.65\textwidth}
	\begin{center}
	\begin{tikzpicture}[xscale=0.5,yscale=0.5,shorten > = 0pt]
		\begin{scope}[xshift=10cm]
			\foreach \i in {1,...,6} {
		    \node[state] (m\i) at ({(360/6) * (\i - 1) +20}:3) {};
			}
			\foreach \i in {1,...,6} {
			    \foreach \j in {\i,...,6} {
			        \ifnum\i<\j
			            \draw[] (m\i) -- (m\j);
			        \fi
			    }
			}
			\node[label=right:{$G_b$}] (G1) at(0,5) {};
		\end{scope}
		\foreach \i in {1,...,6} {
	    \node[state] (n\i) at ({(360/6) * (\i - 1) +20}:3) {};
		}
		\foreach \i in {1,...,6} {
		    \foreach \j in {\i,...,6} {
		        \ifnum\i<\j
		            \draw[] (n\i) -- (n\j);
		        \fi
		    }
		}
		\node[label=right:{$G_a$}] (G1) at (0,5) {};

		\foreach \j in {1,...,6} {
		\draw[\myred,thick] (n\j)--(m\j);
		}
		\draw[thick, \myred] (n2)--(n3);
		\draw[thick, \myred] (m3)--(m4);
		\draw[thick, \myred] (n4)--(n6);
		\draw[thick, \myred] (m6)--(m1);
		\draw[thick, \myred] (n1)--(n5);
		\draw[thick, \myred] (m5)--(m2);

		\foreach \j in {1,...,6} {
		\foreach \i in {1,...,6} {
	    \node[circle,draw=gray,thick,fill=gray,minimum size=1mm,inner sep=0pt] (a\i) at ($ (n\j)!{\i/7}!(m\j)$) {};
	  	}
	  	}

	\end{tikzpicture}
	\end{center}
	\caption{}
	\label{fig:construction-alpha-tsp-1}
	\end{subfigure}
	\begin{subfigure}[b]{.3\textwidth}
	\begin{center}
	\begin{tikzpicture}[xscale=0.5,yscale=0.5,shorten > = 0pt]
		\foreach \i in {1,...,6} {
	    \node[state] (n\i) at ({(360/6) * (\i - 1) +20}:3) {};
		}
		\foreach \i in {1,...,6} {
		    \foreach \j in {\i,...,6} {
		        \ifnum\i<\j
		            \draw[] (n\i) -- (n\j);
		        \fi
		    }
		}
		\draw[thick, \myred] (n2)--(n3);
		\draw[thick, \myred] (n3)--(n4);
		\draw[thick, \myred] (n4)--(n6);
		\draw[thick, \myred] (n6)--(n1);
		\draw[thick, \myred] (n1)--(n5);
		\draw[thick, \myred] (n5)--(n2);
	\end{tikzpicture}
	\end{center}
	\caption{}
	\label{fig:construction-alpha-tsp-2}
	\end{subfigure}
	\caption{(a)~The solution~$\seq_{\alphatsp}$ on instance~$I_{\alphatsp}$ for~$\alpha$-TSP(1,2) in \textcolor{\myred}{red}.
	(b)~The corresponding solution~$\seq_{\textsc{tsp}}$ on instance~$I_{\textsc{tsp}}$ for \textsc{TSP}(1,2) in \textcolor{\myred}{red}.
	}
	\label{fig:construction-alpha-tsp}
	\end{figure}

	\paragraph{Construction of the Algorithm for \textsc{TSP}(1,2).}
	For a given \textsc{TSP}(1,2) instance~$I_{\textsc{tsp}}$ on graph~$G$ as input, the algorithm~$\algtsp$ is defined as follows.
	\begin{enumerate}
		\item Construct the corresponding~$\alpha$-\textsc{TSP}(1,2) instance~$I_{\alphatsp}$ on graph~$G'$.
		\item Run the~$(1+\eps)$-approximation algorithm~$\algalpha$ on instance~$I_{\alphatsp}$ and obtain the sequence~$\seq_{\alphatsp}$.
		\item Transform~$\seq_{\alphatsp}$ to be of the form in~\eqref{eq:form-alpha-tsp}.
		\item Construct and return the simple tour~$\seq_{\textsc{tsp}}$ of~$G$.
	\end{enumerate}
	
	By assumption, the~$(1+\eps)$-approximation algorithm~$\algalpha$ has a polynomial running time in the size of~$G'$.
	Since the value~$\alpha\in(0,1]$ is constant, the number of vertices~$|V'|$ of~$G'$ can be bounded by~$|V'|=2n+nk\leq n\big(3+\frac{3}{\alpha}\big)$ which is polynomially bounded by the size of~$G$.
	The transformation of~$\seq_{\alphatsp}$ and the construction of~$I_{\alphatsp}$ and~$\seq_{\textsc{tsp}}$ can also be done in polynomial time.
	Hence, the algorithm~$\algtsp$ has a polynomial running time in the size of~$G$.
	
	It remains to analyze the approximation guarantee of~$\algalpha$.
	In particular, we will show that if~$\eps>0$ can be arbitrarily small, the algorithm~$\algalpha$ is a polynomial-time~$\gamma$-approximation algorithm for \textsc{TSP}(1,2) with~$\gamma<1+\varrho$.
	Since such an algorithm can only exist if~$\mathsf{P}=\mathsf{NP}$, this will finish the proof of \Cref{lem:hardness-alpha-tsp}.
	To this end, we use the following notation:~$\algalpha(I_{\alphatsp})\coloneqq c_{\alpha}(\seq_{\alphatsp})$ and~$\algtsp(I_{\textsc{tsp}})\coloneqq c_{\alpha}(\seq_{\textsc{tsp}})$.
	Furthermore, let~$\opt_{\alpha\textsc{-tsp}}(I_{\alphatsp})$ and~$\opt_\tsp(I_{\textsc{tsp}})$ denote the cost of the optimal solutions for~$\alpha$-\textsc{TSP}(1,2) on instance~$I_{\alphatsp}$ and for \textsc{TSP}(1,2) on instance~$I_{\textsc{tsp}}$, respectively.
	
	Using the previous observations, a feasible solution for~$\alpha$-\textsc{TSP}(1,2) on instance~$I_{\alphatsp}$ can be constructed by the traversal of the optimal \textsc{TSP}(1,2) solution for~$G$ applied on~$G_a$ and~$G_b$ where one switches to the other copy whenever a new vertex is visited.
	Since no edge is used more than once, this construction gives an upper bound on the~$\alpha$-cost of~$\opt_{\alpha\textsc{-tsp}}(I_{\alphatsp})$ by
	\begin{align*}
		\opt_{\alpha\textsc{-tsp}}(I_{\alphatsp})
		\leq (k+1)n + \opt_\tsp(I_{\textsc{tsp}}).
	\end{align*}
	Similarly, the tours~$\seq_{\textsc{tsp}}$ and~$\seq_{\alphatsp}$ do not use any edge more than once, i.e., the discount factor~$\alpha$ never applies.
	Thus, the~$\alpha$-cost of the solution~$\algtsp(I_{\textsc{tsp}})$ is the same as the~$\alpha$-cost of the solution~$\algalpha(I_{\textsc{tsp}})$ without the cost for the~$v_a$--$v_b$-paths, i.e.,
	\begin{align*}
		\algtsp(I_{\textsc{tsp}})
		= \algalpha(I_{\alphatsp}) -(k+1)n.
	\end{align*}
	Note that an optimal solution for the instance~$I_{\textsc{tsp}}$ contains at least~$n$ edges of cost at least~$1$, i.e.,~$n\leq \opt_{\textsc{tsp}}(I_{\textsc{tsp}})$.
	Since the algorithm~$\algalpha$ is a~$(1+\eps)$-approximation algorithm for~$\alpha$-\textsc{TSP}(1,2), we conclude that
	\begin{align*}
		\algtsp(I_{\textsc{tsp}})
		&= \algalpha(I_{\alphatsp}) -(k+1)n \\
		&\leq (1+\eps)\opt_{\alpha\textsc{-tsp}}(I_{\alphatsp}) -(k+1)n \\
		&\leq (1+\eps)(k+1)n + (1+\eps)\opt_\tsp(I_{\textsc{tsp}}) -(k+1)n\\
		&\leq (1+\eps(k+2)) \opt_\tsp(I_{\textsc{tsp}})).
	\end{align*}
	Hence, the algorithm~$\algtsp$ is a polynomial-time~$\gamma$-approximation algorithm for \textsc{TSP}(1,2) with $\gamma\leq 1+\eps(k+2)$.
	Setting~$\eps<\frac{\varrho}{k+2}$, we obtain a polynomial-time~$\gamma$-approximation algorithm for \textsc{TSP}(1,2) with~$\gamma<1+\varrho$.
	This would imply~$\mathsf{P}=\mathsf{NP}$ and thus, for every constant~$\alpha\in(0,1]$, there exists a constant~$\eps>0$ such that there is no polynomial-time~$(1+\eps)$-approximation algorithm for~$\alpha$-\textsc{TSP(1,2)}, unless~$\mathsf{P}=\mathsf{NP}$.
	This finishes this proof.	
\end{proof}

We want to use the inapproximability result for~$\alpha$-TSP(1,2) to prove \Cref{thm:hardness-psp-not-0} as follows.
For a given~$\alpha$-TSP(1,2) instance~$I_{\alphatsp}$ with~$\alpha\in(0,1]$, we construct an instance~$I_{\alphapsp}$ for the \psp with discount factor~$\alpha$ ($\alpha$-PSP).
We then assume, for contradiction, that there exists a polynomial-time~$(1+\eps)$-approximation algorithm~$\alg_{\alphapsp}$ for~$\alpha$-PSP for arbitrary small~$\eps>0$.
We apply this algorithm to~$I_{\alphapsp}$.
Based on this solution, we construct a solution for the~$\alpha$-TSP(1,2) instance~$I_{\alphatsp}$.
Finally, we show that this yields a polynomial-time approximation algorithm for~$\alpha$-TSP(1,2) with arbitrarily small approximation guarantee, contradicting \Cref{lem:hardness-alpha-tsp}, unless~$\mathsf{P}=\mathsf{NP}$.
The construction of the~$\alpha$-PSP instance and the proof of the hardness of approximation is similar to the proof of \Cref{thm:hardness-esp}.

\paragraph{Construction of the~$\alpha$-PSP Instance.}
Let~$I_{\alphatsp}=(G,(c_e)_{e\in E},\alpha)$ be an instance of~$\alpha$-\textsc{TSP}(1,2) on the undirected complete graph~$G=(V, E)$ with edge costs~$c_e\in \{1,2\}$ for all~$e\in E$. 
We construct the instance~$I_{\alphapsp}=(G',(w_v)_{v\in V'},(c'_e)_{e\in E'},\alpha)$ for~$\alpha$-PSP as follows.
First, The graph~$G'=(V',E')$ consists of~$k\in\N$ copies~$G_1,\dots,G_k$ of~$G$ and an additional vertex~$s$, the start vertex.
The constant number~$k$ of copies will be determined later.
Let~$v\in V$ be an arbitrary but fixed vertex of~$G$.
All copies of~$v$ in~$G_1,\dots,G_k$ are connected to~$s$ by an edge of cost~$a$ with
\begin{align*}
	a\coloneqq 4(n-1),
\end{align*} 
where~$n\coloneqq |V|$.
All edges within some copy~$G_i$ are assigned the same cost as in the original graph~$G$. 
Each vertex~$v\in V'\setminus \{s\}$ has weight~$w_v=\frac{1}{n}$ and the start vertex~$s$ has weight~$0$. 
Hence, each copy~$G_i$ has a total weight of 1.
Finally, the discount factor is set to~$\alpha_e\coloneqq \alpha$ for all edges~$e\in E'$.
This finishes the construction of the PSP instance~$I_{\alphapsp}$.
We refer to \Cref{fig:construction-psp} for an illustration of the construction.

\begin{figure}[t]
\scriptsize
\centering
\begin{subfigure}[b]{.48\textwidth}
\begin{center}
\begin{tikzpicture}[xscale=0.5,yscale=0.5,shorten > = 0pt]
	\foreach \i in {1,...,5} {
    \node[state] (n\i) at ({360/5 * (\i - 1)}:3) {};
	}
	\foreach \i in {1,...,5} {
	    \foreach \j in {\i,...,5} {
	        \ifnum\i<\j
	            \draw[thick, \myyellow] (n\i) -- (n\j);
	        \fi
	    }
	}
	\draw[thick, \myblue] (n1) -- (n3);
	\draw[thick, \myblue] (n2) -- (n3);
	\draw[thick, \myblue] (n4) -- (n3);
	\draw[thick, \myblue] (n5) -- (n2);
\end{tikzpicture}
\end{center}
\caption{}
\label{fig:construction-psp-1}
\end{subfigure}
\begin{subfigure}[b]{.48\textwidth}
\begin{center}
\begin{tikzpicture}[xscale=0.2,yscale=0.2,shorten > = 0pt]
	\begin{scope}[yshift=15cm]
			\foreach \i in {1,...,5} {
	    \node[state] (m\i) at ({360/5 * (\i - 1)}:3) {};
		}
		\foreach \i in {1,...,5} {
		    \foreach \j in {\i,...,5} {
		        \ifnum\i<\j
		            \draw[thick, \myyellow] (m\i) -- (m\j);
		        \fi
		    }
		}
		\draw[thick, \myblue] (m1) -- (m3);
		\draw[thick, \myblue] (m2) -- (m3);
		\draw[thick, \myblue] (m4) -- (m3);
		\draw[thick, \myblue] (m5) -- (m2);
		\node[label=right:{$G_1$}] (G1) at (5,0) {};
	\end{scope}
	\foreach \i in {1,...,5} {
    \node[state] (n\i) at ({360/5 * (\i - 1)}:3) {};
	}
	\foreach \i in {1,...,5} {
	    \foreach \j in {\i,...,5} {
	        \ifnum\i<\j
	            \draw[thick, \myyellow] (n\i) -- (n\j);
	        \fi
	    }
	}
	\draw[thick, \myblue] (n1) -- (n3);
	\draw[thick, \myblue] (n2) -- (n3);
	\draw[thick, \myblue] (n4) -- (n3);
	\draw[thick, \myblue] (n5) -- (n2);
	\node[label=right:{$G_k$}] (G1) at (5,0) {};
	\node[label=right:{\huge~$\vdots$}] (G1) at (-2,7.5) {};
	\node[state,label=left:{$s$}] (s) at (-10,7) {};
	\draw[thick, \mygreen] (s) -- (m4);
	\draw[thick, \mygreen] (s) -- (n4);

\end{tikzpicture}
\end{center}
\caption{}
\label{fig:construction-psp-2}
\end{subfigure}
\caption{(a)~Instance~$I_{\alphatsp}$ for~$\alpha$-\textsc{TSP}(1,2).
\textcolor{\myblue}{Blue} edges have cost~$1$ and \textcolor{\myyellow}{orange} edges have cost~$2$.
(b)~Instance~$I_{\alphapsp}$ for~$\alpha$-PSP constructed from~$I_{\alphatsp}$.
All vertices have weight~$\smash{\frac{1}{5}}$.
\textcolor{\myblue}{Blue} edges have cost~$1$, \textcolor{\myyellow}{orange} edges have cost~$2$, and \textcolor{\mygreen}{green} edges have cost~$a=4(n-1)$.}
\label{fig:construction-psp}
\end{figure}

To prove \Cref{thm:hardness-psp-not-0} we make some assumptions on the sequence~$\seq_\alg$ obtained from the polynomial-time~$(1+\eps)$-approximation algorithm~$\alg_{\alphapsp}$ on instance~$I_{\alphapsp}$.
In this direction, we call~$\seq_\alg$ \emph{structured} if all edges of a copy~$G_i$ and the edge connecting~$G_i$ to~$s$ are consecutive in~$\seq_\alg$.
In other words, the sequence~$\seq_\alg$ visits each copy only once and collects all its weight before returning to the start vertex to visit the next copy.
We show that we can transform the solution~$\seq_\alg$ to be structured without increasing its total~$\alpha$-latency.

\begin{lemma}\label{lem:ah-consecutive}
    Given a solution~$\seq_\alg$ to the~$\alpha$-PSP instance~$I_{\alphapsp}$, we can construct a structured solution~$\seq'_\alg$ in polynomial time such that~$\lat_{\alpha}(\seq'_\alg) \leq \lat_{\alpha}(\seq_\alg)$.
\end{lemma}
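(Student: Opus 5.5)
We argue by an exchange argument on the order in which copies are \emph{entered}. Given $\seq_\alg$, we first record for each copy $G_i$ the time $\tau_i$ at which $\seq_\alg$ first enters $G_i$ (through its unique connecting edge of cost $a$). We relabel the copies so that $\tau_1<\tau_2<\dots<\tau_k$. The structured candidate $\seq'_\alg$ visits the copies in this order. Within copy $G_i$ it follows a tour $\seq_i$ of $G_i$ starting and ending at the attachment vertex, and then returns to $s$. We choose $\seq_i$ as the restriction of $\seq_\alg$ to $G_i$, i.e.\ the concatenation of all excursions of $\seq_\alg$ inside $G_i$, glued at the attachment vertex. Since every excursion enters and leaves through the attachment vertex, this concatenation is a closed walk. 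It visits the vertices of $G_i$ in the same relative order as $\seq_\alg$ does, and its $\alpha$-cost is at most the $\alpha$-cost $\seq_\alg$ spends inside $G_i$, because the first-traversal pattern is only improved by gluing.

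The key step is to compare latencies vertex by vertex. Let $v$ lie in copy $G_i$. In $\seq'_\alg$, the latency of $v$ equals
\[
\sum_{j<i}\bigl(2\cdot\text{cost of connecting edge} + c_\alpha(\seq_j)\bigr) + a + c_\alpha(\text{prefix of }\seq_i\text{ up to }v).
\]
Here the connecting edge is charged as $a(1+\alpha)$ per copy in the sum over $j<i$. The term $a$ accounts for entering $G_i$.

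In $\seq_\alg$, by time $\lat_{\alpha,v}(\seq_\alg)$ the walk has entered $G_i$ and visited $v$, and it has entered each $G_j$ with $j<i$ and then left it again, since it is now in $G_i$. So it has paid at least $a(1+\alpha)$ for each $j<i$, plus $a$ for $G_i$, plus the $\alpha$-cost inside $G_i$ up to $v$. The only remaining deficit is that $\seq'_\alg$ pays the \emph{full} tours $c_\alpha(\seq_j)$ for the earlier copies, whereas $\seq_\alg$ may have visited only part of $G_j$ before time $\lat_{\alpha,v}(\seq_\alg)$.

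This deficit is the main obstacle, and we plan to handle it with the choice $a=4(n-1)$. Suppose $\seq_\alg$ leaves some copy $G_j$ with vertices still unvisited. Then it must later re-enter $G_j$, paying an extra $2\alpha a$ over the connecting edge. This extra cost delays every vertex visited afterwards, in particular all vertices of later copies, whose total weight is at least that of a copy. We therefore transform step by step rather than all at once. Take the first time $\seq_\alg$ leaves a copy $G_j$ incompletely. Move the remaining excursions into $G_j$ forward to that moment; the cost of these moved excursions is at most roughly $2(n-1)(1+\alpha)$. This increases the latencies of vertices visited in between by at most that amount. In exchange, it removes one round trip $(1+\alpha)a$, or at least $2\alpha a$ in the discounted accounting, which was delaying all subsequent vertices, in particular those intermediate vertices.

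We would have to check carefully that, with weights $1/n$ and $a=4(n-1)$, the gain $2\alpha a$ (resp.\ $(1+\alpha)a$) times the affected weight dominates the loss. If the factor $\alpha$ makes this fail for small $\alpha$, we would instead charge the moved excursion cost against the saved full traversal of the connecting edge. Each such step strictly decreases the number of re-entries into copies, so after at most $n k$ iterations the tour is structured. Every step is a polynomial-time local rearrangement, and the total $\alpha$-latency never increases.
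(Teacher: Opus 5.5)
Your step-by-step repair has a genuine gap, located exactly where you write that you ``would have to check carefully''. (The direct reordering already fails, as you note.)

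First, the source of the gain is misidentified. Move the remaining excursions into $G_j$ forward to the first incomplete exit. Every vertex visited in between (after that exit, before the next re-entry into $G_j$) is then delayed by the cost of the moved block and gains nothing. The discounted traversals $2\alpha a$ that you remove (a re-entry and a second exit) occur \emph{after} these vertices in $\seq_\alg$, so they never delayed them.

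Second, no full traversal $(1+\alpha)a$ is ever saved. The connecting edge of $G_j$ was paid in full at the first entry and still is. So your fallback for small $\alpha$ has nothing to charge against, and the only saving, $2\alpha a$ per removed re-entry, vanishes as $\alpha\to 0$.

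Third, the bound ``roughly $2(n-1)(1+\alpha)$'' on the moved excursions fails for an arbitrary input, which may wander over many fresh edges of $G_j$ before reaching a new vertex. In fact the move as you state it can increase the objective. Take $\alpha$ small. Let the intermediate segment explore one fresh copy completely: $n$ new vertices at cost about $(1+\alpha)a+n-1\approx 5(n-1)$. Let the later excursion spend cost $L\geq 6$ on fresh edges among already visited vertices of $G_j$ before reaching the last unvisited vertex. Then $n$ vertices are delayed by $L$, one vertex is advanced by only about $5(n-1)$, and the savings $2\alpha a$ for later vertices are negligible.

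The missing idea is a Smith's-rule (ratio) exchange inequality, which is the core of the paper's proof. Write $c'$ for a block's contribution to the $\alpha$-cost and $t$ for its number of newly visited vertices. Moving a block $B$ in front of an adjacent block $A$ does not increase the total latency if $c'(A)\,t(B)\geq c'(B)\,t(A)$, crediting $B$ with the saved $2\alpha a$. In words: $A$ must cost at least as much per newly visited vertex as $B$.

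The paper establishes this as follows. It splits $\seq_\alg$ into maximal $G_i$-blocks and the rest. It shows that the first, incomplete $G_i$-block has ratio at least $4(1+\alpha)$, because it pays the full connecting edge. It shortcuts the last $G_i$-block until its reduced ratio is at most $4(1+\alpha)$. It then swaps an adjacent pair that violates the ratio order.

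Your forward direction could plausibly be completed the same way, at least for $n$ not too small:
\begin{itemize}
\item Every new vertex in the intermediate segment lies in a copy not entered before, so every prefix of that segment costs at least about $((1+\alpha)a+n-1)/n$ per new vertex.
\item The moved block, once replaced by a direct path through the remaining vertices using edges of cost at most $2$, costs at most about $4$ per vertex.
\end{itemize}
But this shortcutting and the ratio comparison are exactly what your proposal leaves out.
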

\begin{proof}
	If~$\seq_\alg$ is structured, there is nothing left to show.
	Thus, we assume that~$\seq_\alg$ is not structured, i.e., there exists some copy~$G_i$ such that not all edges belonging to~$G_i$ or connecting~$G_i$ to~$s$ are consecutive in~$\seq_\alg$.
	We write~$\seq_\alg$ as a concatenation of (consecutive) subsequences
	\begin{align*}
	\seq_\alg = \seq_1 + \seq_2 + \seq_3 + \dots + \seq_{2p} + \seq_{2p+1}, 
	\end{align*}
	for some~$p > 1$
	such that the subsequences with even index~$\seq_2,\seq_4,\dots,\seq_{2p}$ are the inclusion-wise maximal subsequences of~$\seq_\alg$ consisting only of edges belonging to~$G_i$ or connecting~$G_i$ to~$s$.
	In particular, each subsequence~$\seq_i$ is a tour, i.e., it starts and ends in vertex~$s$.
	The tours~$\seq_1,\seq_3,\dots,\seq_{2p+1}$ with odd index are the inclusion-wise maximal subsequences of the remaining edges in~$\seq_\alg$.
	Note that~$\seq_1$ and~$\seq_{2p+1}$ may be empty, but all other tours with odd indices are non-empty. 
	For some subsequence~$\hat{\seq}$ of~$\seq_\alg$, we denote by~$c'(\hat{\seq})$ its contribution to the total~$\alpha$-cost of the entire sequence~$\seq_\alg$.
	In particular, we have~$c'(\hat{\seq})\leq c_{\alpha}(\hat{\seq})$ for all subsequences~$\hat{\seq}$.
	Further, we denote by~$t(\hat{\seq})$ the number of vertices that~$\hat{\seq}$ visits and that have not been visited before.
	For every subsequence~$\hat{\seq}$ of~$\seq_\alg$ with~$t(\hat{\seq})>0$, we define the \emph{ratio} of~$\hat{\seq}$ as 
	\begin{align*}
		r(\hat{\seq})\coloneqq \frac{c'(\hat{\seq})}{t(\hat{\seq})}.
	\end{align*}
	
	\begin{claim}\label{cl:t>1-psp}
	    Without loss of generality, we can assume that~$t(\seq_j)\geq 1$ for all~$j\in\{2,3,\dots,2p\}$.
	\end{claim}
	\begin{proof}[Proof of \Cref{cl:t>1-psp}.]
	    Assume that there exists some~$j\in\{2,3,\dots,2p\}$ such that~$t(\seq_j)=0$.
	    Then, we can swap the positions of~$\seq_j$ and~$\seq_{j+1}$ and continue with the newly obtained sequence, which has fewer subsequences.
	    The obtained sequence is still a feasible solution to the \psp, as each subsequence~$\seq_i$ starts and ends in~$s$.
	    By doing so, we only improve the total~$\alpha$-latency since no exploration of any vertex is postponed.
	    \renewcommand{\qedsymbol}{$\meindreieck$}
	\end{proof}
	
	With \Cref{cl:t>1-psp}, the ratio~$r(\seq_j)$ is well-defined for all~$j\in\{2,3,\dots,2p\}$. 
	
	\begin{claim}\label{cl:r>=2-psp}
	    Without loss of generality, we can assume that~$r(\seq_2)\geq 4(1+\alpha)$.
	\end{claim}
	\begin{proof}[Proof of \Cref{cl:r>=2-psp}.]
		The assumption of~$p>1$ together with~\Cref{cl:t>1-psp} yields the existence of the subsequence~$\seq_4$ with~$t(\seq_4)\geq 1$.
		This implies that~$t(\seq_2)\leq n-1$.
		Since each edge in~$G_i$ has cost at least~$1$, the~$\alpha$-cost of traversing edges of~$\seq_2$ that lie in~$G_i$ is at least~$t(\seq_2)-1$.
		Furthermore,~$\seq_2$ is the first subsequence that visits~$G_i$.
		Hence, it contains the edge that connects~$s$ and~$G_i$ that has not been traversed before.
		In total, this yields a lower bound of~$c'(\seq_2)\geq (1+\alpha)a+ t(\seq_2)-1$.
		We conclude
		\begin{align*}
			r(\seq_2)
			=\frac{c'(\seq_2)}{t(\seq_2)}
			\geq \frac{(1+\alpha)a+t(\seq_2)-1}{t(\seq_2)}
			\geq \frac{(1+\alpha)4(n-1) + n-2}{n-1}
			\geq 4(1+\alpha),
		\end{align*}
		where we used that~$n\geq 2$.
		\renewcommand{\qedsymbol}{$\meindreieck$}
	\end{proof}
	
	Assume we exchanged the order of~$\seq_{2p}$ and~$\seq_{2p-1}$.
	Then we can delete the first edge of~$\seq_{2p}$, and the last edge of~$\seq_{2p-2}$ as both edges are the one edge that connects vertex~$s$ to copy~$G_i$.
	We capture this observation by introducing a reduced ratio~$\tilde{r}$ for~$\seq_{2p}$ by
	\begin{align*}
		\tilde{r}(\seq_{2p})
		\coloneqq \frac{c'(\seq_{2p})-2\alpha a}{t(\seq_{2p})}.
	\end{align*}
	
	\begin{claim}\label{cl:r<=2-psp}
	    Without loss of generality, we can assume that~$\tilde{r}(\seq_{2p})\leq 4(1+\alpha)$.
	\end{claim}
	\begin{proof}[Proof of \Cref{cl:r<=2-psp}]
		Assume that~$\tilde{r}(\seq_{2p})> 4(1+\alpha)$.
		We show how we can transform~$\seq_{2p}$ to obtain~$\tilde{r}(\seq_{2p})\leq 4(1+\alpha)$ without increasing the total~$\alpha$-latency of~$\seq_\alg$.
		To this end, let~$v$ be the last new vertex visited by~$\seq_{2p}$ and let~$v_0$ be the vertex of~$G_i$ connected to~$s$.
		Let~$\seq_v$ be the subsequence of~$\seq_{2p}$ that starts in~$v$ and ends in~$v_0$.
		Note that we can exchange~$\seq_v$ by either edge~$e=(v,v_0)$ of cost at most~$2$ or by a path~$P$ that only consists of edges that have already been used in~$G_i$. Thus this subsequence contributes at most~$c'(\seq_v)\leq \min\{2,2(n-1)\alpha\}$ to the total~${\alpha}$-cost of~$\seq$.
		Denote by~$\seq_{2p}'$ the sequence obtained by exchanging~$\seq_v$ for either~$e$ or~$P$ based on which one contributes less to the total~$\alpha$-cost.
		If~$\tilde{r}(\seq_{2p}')\leq 4(1+\alpha)$, we are done.
		Thus, assume~$\tilde{r}(\seq_{2p}')> 4(1+\alpha)$ and denote by~$(e_1,\dots,e_z)$ the maximal subsequence of~$\seq_{2p}$ that only consists of edges in~$G_i$ and ends with edge~$e_z$, where~$e_z$ is the edge that connects the last vertex~$v$.
		We obtain
		\begin{align*}
			\tilde{r}(\seq_{2p})
			=\frac{c'((e_1,\dots,e_z))+c'(\seq_v)}{t(\seq_{2p})}
			> 4(1+\alpha).
		\end{align*}
		With~$c'(\seq_v)\leq 2$,~$t(\seq_{2p})\geq 1$, and~$t((e_1,\dots,e_z))\leq 1$, we obtain a lower bound on the ratio of the subsequence~$(e_1,\dots,e_z)$ by
		\begin{align*}
			r((e_1,\dots,e_z))
			\geq \frac{c'((e_1,\dots,e_z))}{t((e_1,\dots,e_z))}
			\geq \frac{c'((e_1,\dots,e_z))+c'(\seq_v)}{t(\seq_{2p})} -\frac{c'(\seq_v)}{t(\seq_{2p})}
			> 4(1+\alpha)-2
			> 2.
		\end{align*}
		Let~$\Bar{\seq}=(e_x,\dots,e_z)$ be the shortest subsequence of~$(e_1,\dots,e_z)$ such that~$r(\Bar{\seq})>2$.
		Let~$\{e^*_1,\dots,e^*_{y}=e_z\}$ be the set of edges that connect a new vertex in the order as they appear in~$\Bar{\seq}$.
	    We claim that for any subsequence~$\Bar{\seq}_j=(e_x,e_{x+1},\dots,e^*_j)$ with~$j\in [y]$ it holds that~$r(\Bar{\seq}_j)>2$.
	    Assume for contradiction that there exists some~$j\in [y]$ such that~$r(\Bar{\seq}_j)\leq 2$.
	    Then let~$\Bar{\seq}_{-j}$ be such that~$\Bar{\seq}$ is a concatenation of~$\Bar{\seq}_{j}$ and~$\Bar{\seq}_{-j}$.
	    Since~$\Bar{\seq}$ is the shortest contiguous subsequence of~$\seq_{2p}$ that ends in~$e_z$ such that~$r(\Bar{\seq})>2$ holds, it follows that~$r(\Bar{\seq}_{-j})\leq 2$, otherwise~$\Bar{\seq}$ would not be minimal.
	    In total, this yields
	    \begin{align*}
	        2
	        < r(\Bar{\seq})
	        =\frac{c'(\Bar{\seq})}{t(\Bar{\seq})}
	        =\frac{c'(\Bar{\seq}_{j})+c'(\Bar{\seq}_{-j})}{t(\Bar{\seq}_{j})+t(\Bar{\seq}_{-j})}
	        \leq \frac{2t(\Bar{\seq}_{j})+2t(\Bar{\seq}_{-j})}{t(\Bar{\seq}_{j})+t(\Bar{\seq}_{-j})}
	        =2,
	    \end{align*}
	    a contradiction.
	    Hence, for any subsequence~$\Bar{\seq}_j=(e_x,\dots,e^*_j)$ with~$j\in [y]$, it holds that~$r(\Bar{\seq}_j)>2$.
	    Let~$V'(\Bar{\seq}_j)\subseteq V'$ denote the set of the~$t(\Bar{\seq}_j)$ new vertices which are connected by~$\Bar{\seq}_j$ for some~$j\in [y]$.
	    We can remove all edges in~$\Bar{\seq}_j$ and replace them with a path that connects all vertices in~$V'(\Bar{\seq}_j)$ with a single edge of cost at most~$2$ each in the same order as they had been connected in~$\Bar{\seq}_j$.
	    This strictly decreases the~$\alpha$-latency of all vertices in~$V'(\Bar{\seq}_j)$ and in particular the~$\alpha$-latency of the last vertex~$v$.
	    We repeat this procedure at most~$t(\seq_{2p})<n$ times until we have~$r((e_1',\dots,e_z'))\leq 2$ for the new sequence~$(e_1',\dots,e_z')$.
	    Finally, this yields
	    \begin{align*}
			\tilde{r}(\seq_{2p}')
			=\frac{c'((e_1',\dots,e_z'))+c'(\seq_v)}{t(\seq_{2p})}
			= \frac{r((e_1',\dots,e_z')) t(\seq_{2p}) +c'(\seq_v)}{t(\seq_{2p})}
			\leq \frac{2 t(\seq_{2p}) +2}{t(\seq_{2p})}
			\leq 4.
		\end{align*}
	    With~$\alpha>0$, we conclude~$\tilde{r}(\seq_{2p}')\leq 4(1+\alpha)$ as desired.	    
	    Furthermore, the~$\alpha$-latencies of all vertices visited after the traversal of~$\seq_{2p}$ do not increase since we strictly reduced the total~$\alpha$-cost of~$\seq_{2p}$ and thus all~$\alpha$-latencies of vertices visited later.
		\renewcommand{\qedsymbol}{$\meindreieck$}
	\end{proof}

	With these three claims, we now prove \Cref{lem:ah-consecutive}.
    Consider the ratios~$r(\seq_{2p-1})$ and~$\tilde{r}(\seq_{2p})$ and distinguish two cases.
    First, assume~$r(\seq_{2p-1})\geq \tilde{r}(\seq_{2p})$.
    We swap those two subsequences and delete the last edge of~$\seq_{2p-2}$ and the first edge of~$\seq_{2p}$.
    Note that both edges connect vertex~$s$ to~$G_i$, and thus, they contribute~$2\alpha a$ to the~$\alpha$-cost of~$\seq_\alg$.
    We claim that this swap does not increase the total~$\alpha$-latency.
    To this end, note that~$r(\seq_{2p-1}) \geq \tilde{r}(\seq_{2p})$ yields
    \begin{align}
        \frac{c'(\seq_{2p-1})}{t(\seq_{2p-1})}
        &\geq \frac{c'(\seq_{2p})-2\alpha a}{t(\seq_{2p})} &&\Leftrightarrow  &
        c'(\seq_{2p-1}) t(\seq_{2p})
        \geq (c'(\seq_{2p})-2\alpha a) t(\seq_{2p-1}). \label{eq:ah-ratio-psp}
    \end{align}
    The swap causes the~$\alpha$-latency of~$t(\seq_{2p})$ many vertices to decrease by~$c'(\seq_{2p-1})+2\alpha a$ while the~$\alpha$-latency of~$t(\seq_{2p-1})$ many vertices increases by~$c'(\seq_{2p})-2\alpha a$.
    The~$\alpha$-latencies of all vertices visited by~$\seq_{2p+1}$ also decrease by~$2\alpha a$.
    Hence, by \eqref{eq:ah-ratio-psp}, the total~$\alpha$-latency of the new sequence can only decrease.
    
    For the second case, we assume~$r(\seq_{2p-1})< \tilde{r}(\seq_{2p})$ which yields~$r(\seq_{2p-1})<4(1+\alpha)$.
    We then compare~$r(\seq_{2p-2})$ to~$r(\seq_{2p-1})$ and continue recursively with adjacent subsequences until we find the first pair~$\seq_{j}$ and~$\seq_{j-1}$ with~$j\in\{2,3,\dots,2p-2\}$ such that~$r(\seq_{j})\geq r(\seq_{j-1})$.
    This pair exists since~$r(\seq_{2p-1})<4(1+\alpha)$ and~$r(\seq_2)\geq 4(1+\alpha)$.
    Swapping the order of those two subsequences causes the~$\alpha$-latency of~$t(\seq_{j+1})$ many vertices to decrease by at least~$c'(\seq_{j})$ while the~$\alpha$-latency of~$t(\seq_{j})$ many vertices increases by at most~$c'(\seq_{j+1})$.
    Again, the~$\alpha$-latencies of vertices visited later can only decrease.
    Similar to the first case,~$r(\seq_{j}) \geq r(\seq_{j+1})$ yields
    \begin{align*}
        \frac{c'(\seq_{j})}{t(\seq_{j})}
        &\geq \frac{c'(\seq_{j+1})}{t(\seq_{j+1})} &&\Leftrightarrow  &
        c'(\seq_{j}) t(\seq_{j+1})
        \geq c'(\seq_{j+1}) t(\seq_{j}).
    \end{align*}
    Thus, swapping those two subsequences does not increase the total~$\alpha$-latency of~$\seq_\alg$.
    After swapping at most~$p\leq |V|$ pairs of subsequences, the desired property is established for~$G_i$.
    This entire process can be repeated for each copy of~$G$ until the obtained sequence is structured.
    Computing the ratios and performing the swaps of the subsequences takes time polynomial in the length of the sequence; hence, this procedure runs in polynomial time.
\end{proof}

We assume from now on that if~$\seq_\alg$ is structured, it visits the~$k$ copies of~$G$ in the order~$G_1,\dots,G_k$.
We are now ready to prove \Cref{thm:hardness-psp-not-0}.

\begin{proof}[Proof of \Cref{thm:hardness-psp-not-0}]
	By \Cref{lem:hardness-alpha-tsp}, there exists a constant~$\varrho>0$ such that there is no polynomial-time~$(1+\varrho)$-approximation algorithm for~$\alpha$-TSP(1,2), unless~$\mathsf{P}=\mathsf{NP}$.
	We assume, for contradiction, that there exists a polynomial-time~$(1+\eps)$-approximation-algorithm~$\alg_{\alphapsp}$ for the~$\alpha$-PSP and arbitrary small~$\eps>0$.
	Finally, we show how this algorithm implies a polynomial-time~$\gamma$-approximation algorithm~$\algalpha$ for~$\alpha$-TSP(1,2) with~$\gamma<1+\varrho$.
	The algorithm is defined as follows:
	
	\paragraph{Construction of the Algorithm for~$\alpha$-\textsc{TSP}(1,2).}
	For a given~$\alpha$-\textsc{TSP}(1,2) instance~$I_{\alphatsp}$ as input, the algorithm~$\alg_{\alphatsp}$ is defined as follows.
	\begin{enumerate}
		\item Construct the corresponding~$\alpha$-PSP instance~$I_{\alphapsp}$.
		\item \label{it:construct-psp} Run the~$(1+\eps)$-approximation algorithm~$\alg_{\alphapsp}$ on instance~$I_{\alphapsp}$ and obtain the sequence~$\seq_\alg'$.
		\item Compute a corresponding structured sequence~$\seq_\alg$.
		\item Let~$\seq_i$ be the subsequence of~$\seq_\alg'$ in copy~$G_i$ and denote by~$\Seq\coloneqq\{\seq_i : i\in [k]\}$ the set containing the~$k$ subsequences. Each of these tours~$\seq_i$ yields a feasible solution to~$\alpha$-\textsc{TSP}(1,2) on~$I_{\alphatsp}$.
		\item Return~$\seq^*=\arg\min \bigl\{c_{\alpha}(\seq_i) : i\in [k]\bigr\}$.
	\end{enumerate}
	
	First, we argue that the running time of~$\alg_{\alphatsp}$ is polynomially bounded in the size of~$G$.
	By assumption, the running time of~$\alg_{\alphapsp}$ on~$G'$ is polynomially bounded by the size of~$G'$.
	Since the size of $G'$ is polynomially bounded by the size of~$G$ (because~$k$ is a constant that will be determined later), Step~\ref{it:construct-psp} runs in polynomial time.
	The construction of~$I_{\alphapsp}$ and~$\seq^*$ can also be done in polynomial time.
	Thus, with \Cref{lem:ah-consecutive},~$\alg_{\alphatsp}$ is a polynomial-time algorithm.
	
	Next, we analyze the approximation guarantee obtained by~$\alg_{\alphatsp}$.
	We denote the cost of the algorithms~$\alg_{\alphatsp}$ and~$\alg_{\alphapsp}$ on the instances~$I_{\alphatsp}$ and~$I_{\alphapsp}$ by~$\alg_{\alphatsp}(I_{\alphatsp})$ and~$\alg_{\alphapsp}(I_{\alphapsp})$, respectively.
	Therefore, $\alg_{\alphatsp}(I_{\alphatsp})=c_{\alpha}(\seq^*)$ and $\alg_{\alphapsp}(I_{\alphapsp})=C_{\alpha}(\seq_\alg)$.
	Further, we denote the cost of the optimal solutions for~$\alpha$-\textsc{TSP}(1,2) and~$\alpha$-PSP on instances~$I_{\alphatsp}$ and~$I_{\alphapsp}$ by~$\opt_{\alpha\textsc{-tsp}}(I_{\alphatsp})$ and~$\opt_{\alpha\textsc{-psp}}(I_{\alphapsp})$, respectively.
	To prove the claimed approximation ratio of~$\gamma<1+\varrho$, we start by giving an upper bound on~$\alg_{\alphatsp}(I_{\alphatsp})$.
	To this end, let~$\seq_1,\dots,\seq_k$ be the~$\alpha$-TSP(1,2) solutions that~$\alg_{\alphapsp}$ obtains as a byproduct on the~$k$ copies~$G_1,\dots,G_k$ of instance~$I_{\alphapsp}$.
	The upper bound is obtained by assuming that the structured sequence~$\seq_\alg$ collects the total weight~$1$ of each copy~$G_i$ when it visits the first vertex of that copy.
	This yields
	\begin{align*}
	    \alg_{\alphapsp}(I_{\alphapsp})
	    &\geq \sum_{i=1}^k \left(ia+ (i-1)\alpha a + \sum_{j=1}^{i-1} c_{\alpha}(\seq_j) \right)\\
	    &\geq \sum_{i=1}^k \left(ia+ (i-1)\alpha a + \sum_{j=1}^{i-1} c_{\alpha}(\seq^*) \right)\\
	    &= \frac{k(k+1)}{2} a + \frac{k(k-1)}{2}\alpha a + \frac{(k-1)k}{2} c_{\alpha}(\seq^*),
	\end{align*}
	which is equivalent to
	\begin{align}
	    \alg_{\alphatsp}(I_{\alphatsp})=c_{\alpha}(\seq^*)
	    &\leq \frac{2}{(k-1)k}\left( \alg_{\alphapsp}(I_{\alphapsp}) - \frac{k(k+1)}{2} a -\frac{k(k-1)}{2}\alpha a \right). \label{eq:ah-upper-bound-alg-psp}
	\end{align}
	
	Next, we give an upper bound on~$\opt_{\alpha\textsc{-psp}}(I_{\alphapsp})$. 
	To do so, assume we have an optimal solution for~$\alpha$-TSP(1,2) on instance~$I_{\alphatsp}$. 
	Using this optimal solution, we can construct a solution to~$\alpha$-PSP on instance~$I_{\alphapsp}$ as follows.
	First, enter~$G_1$.
	Then traverse~$G_1$ using an optimal tour for~$\alpha$-TSP(1,2) on instance~$I_{\alphatsp}$, i.e.,~$G$.
	Finally, return to vertex~$s$ and continue in the same manner for the remaining copies~$G_2,\dots,G_k$.
	By assuming that the total weight of each copy is collected only when visiting the last vertex, we obtain the following upper bound on~$\opt_{\alpha\textsc{-psp}}(I_{\alphapsp})$
	\begin{align}
	    \opt_{\alpha\textsc{-psp}}(I_{\alphapsp})
	    &\leq \sum_{i=1}^k i(a+\opt_{\alpha\textsc{-tsp}}(I_{\alphatsp}))+ (i-1)\alpha a \notag \\
	    &=\frac{k(k+1)}{2} (a+\opt_{\alpha\textsc{-tsp}}(I_{\alphatsp})) + \frac{k(k-1)}{2}\alpha a \ . \label{eq:ah-upper-bound-opt-psp}
	\end{align}
	Now, combining \eqref{eq:ah-upper-bound-alg-psp} and \eqref{eq:ah-upper-bound-opt-psp} with~$\alg_{\alphapsp}(I_{\alphapsp})\leq (1+\eps) \opt_{\alpha\textsc{-psp}}(I_{\alphapsp})$ yields
	\begin{align*}
	    \alg_{\alphatsp}(I_{\alphatsp})
	    &\leq \frac{2}{(k-1)k}\left( \alg_{\alphapsp}(I_{\alphapsp}) - \frac{k(k+1)}{2} a -\frac{k(k-1)}{2}\alpha a \right) \notag \\
	    &\leq \frac{2}{(k-1)k}\left( (1+\eps) \opt_{\alpha\textsc{-psp}}(I_{\alphapsp}) - \frac{k(k+1)}{2} a -\frac{k(k-1)}{2}\alpha a \right) \notag \\
	    &\leq \frac{2}{(k-1)k}\biggl[ (1+\eps) \left( \frac{k(k+1)}{2} (a+\opt_{\alpha\textsc{-tsp}}(I_{\alphatsp})) + \frac{k(k-1)}{2}\alpha a \right) \notag \\
	    & \phantom{\leq \frac{2}{(k-1)k}\biggl[} - \frac{k(k+1)}{2} a -\frac{k(k-1)}{2}\alpha a \biggr] \notag \\
	    &\leq \frac{k+1}{k-1}\left(a \eps + (1+\eps)\opt_{\alpha\textsc{-tsp}}(I_{\alphatsp}))\right) +\eps \alpha a \notag \\
	    &\leq \frac{k+1}{k-1}\left(1+\eps(5+4\alpha) \right) \opt_{\alpha\textsc{-tsp}}(I_{\alphatsp}).    
	\end{align*}

	Thus, the~$(1+\eps)$-approximation algorithm for~$\alpha$-PSP yields a~$\gamma$-approximation algorithm for~$\alpha$-TSP(1,2) with~$\gamma = \frac{k+1}{k-1}(1+\eps(5+4\alpha))$.
	However, by choosing~$\eps$ and~$k$ such that~$0<\eps<\frac{\varrho k-(\varrho+2)}{(k+1)(5+4\alpha)}$ and~$k > 1+\frac{2}{\varrho}$, we have~$\gamma <1+\varrho$, a contradiction to the approximation hardness of~$\alpha$-TSP(1,2), unless~$\mathsf{P}=\mathsf{NP}$.
	This proves that there exists a constant~$\eps>0$ such that there is no polynomial-time~$(1+\eps)$-approximation algorithm for the \psp with discount factor~$\alpha$, unless~$\mathsf{P}=\mathsf{NP}$.
	This finishes the proof of \Cref{thm:hardness-psp}.	
\end{proof}

\end{document}